\definecolor{shadecolor}{gray}{0.9}
\tikzset{
        ->,  
        node distance=5.5cm, 
        every state/.style={thick, fill=gray!10}, 
        initial text=$ $, 
        }
\theoremstyle{plain}  
\newtheorem{theorem}{Theorem}[section] 
\newtheorem{lemma}[theorem]{Lemma} 
\newtheorem{proposition}[theorem]{Proposition} 
\newtheorem{corollary}[theorem]{Corollary} 
\theoremstyle{definition} 
\newtheorem{definition}[theorem]{Definition}
\newtheorem{example}[theorem]{Example}
\newtheorem{remark}[theorem]{Remark}
\newtheorem{assump}[theorem]{Assumption}
\newcommand{\p}{\mathbb{P}}
\newcommand{\E}{\mathbb{E}}
\renewcommand{\P}{\mathbb{P}}
\newcommand{\Var}{\mathrm{Var}}
\newcommand{\Cov}{\mathrm{Cov}} 
\newcommand{\Cor}{\mathrm{Cor}}
\newcommand{\QCov}{\mathrm{QCov}}
\newcommand{\QCor}{\mathrm{QCor}}
\newcommand{\TCov}{\mathrm{TCov}}
\newcommand{\TCor}{\mathrm{TCor}}
\newcommand{\SCov}{\mathrm{SCov}}
\newcommand{\SCor}{\mathrm{SCor}}
\newcommand{\ECov}{\mathrm{ECov}} 
\newcommand{\ECor}{\mathrm{ECor}}
\newcommand{\LTCor}{\mathrm{LTCor}}
\newcommand{\UTCor}{\mathrm{UTCor}}
\newcommand{\MCor}{\mathrm{MCor}}
\newcommand{\QFCor}{\mathrm{QFCor}}
\newcommand{\CDFCor}{\mathrm{CDFCor}}
\newcommand{\QFCov}{\mathrm{QFCov}}
\newcommand{\CDFCov}{\mathrm{CDFCov}}
\newcommand{\CDF}{\mathrm{CDF}}
\newcommand{\QF}{\mathrm{QF}}
\newcommand{\DCor}{\mathrm{DCor}}
\newcommand{\dist}{\stackrel{\mathrm{d}}{=}}
\newcommand{\ES}{\mathrm{ES}}
\newcommand{\D}{\mathcal{D}}
\renewcommand{\d}{\,\mathrm{d}}
\newcommand{\R}{\mathbb{R}}
\newcommand{\T}{T}
\newcommand{\one}{\mathds{1}}
\newcommand{\q}{q}
\newcommand{\eps}{\varepsilon}
\newcommand{\A}{\mathsf{A}}
\newcommand{\essinf}{\mathrm{ess\,inf}}
\newcommand{\esssup}{\mathrm{ess\,sup}}
\renewcommand{\L}{\mathcal{L}}
\def\be{\begin{equation} \label}
\def\ee{\end{equation}}
\newcommand{\Comments}{1}
\newcommand{\mynote}[2]{\ifnum\Comments=1\textcolor{#1}{#2}\fi}
\newcommand{\mytodo}[2]{\ifnum\Comments=1%
  \todo[linecolor=#1!80!black,backgroundcolor=#1,bordercolor=#1!80!black]{#2}\fi}
\numberwithin{equation}{section} 
\begin{document}

\title{Generalised Covariances and Correlations}
		
\author{Tobias Fissler\thanks{
RiskLab, Department of Mathematics, ETH Zurich, R\"amistrasse 101, 8092 Zurich, Switzerland, \newline
	e-mail: \href{mailto:tobias.fissler@math.ethz.ch}{tobias.fissler@math.ethz.ch} 
	}\and Marc-Oliver Pohle\thanks{Heidelberg Institute for Theoretical Studies, Schloss-Wolfsbrunnenweg 35, 69118 Heidelberg, Germany, \newline e-mail: \href{mailto: marc-oliver.pohle@h-its.org}{marc-oliver.pohle@h-its.org}}}

\maketitle

\begin{abstract}
\textbf{Abstract.}
The covariance of two random variables measures the average joint deviations from their respective means. 
We generalise this well-known measure by replacing the means with other statistical functionals such as quantiles, expectiles, or thresholds. 
Deviations from these functionals are defined via generalised errors, often induced by identification or moment functions. 
As a normalised measure of dependence, a generalised correlation is constructed. 
Replacing the common Cauchy--Schwarz normalisation by a novel Fr\'echet--Hoeffding normalisation, we obtain attainability of the entire interval $[-1, 1]$  for any given marginals. 
We uncover favourable properties of these new dependence measures and establish consistent estimators. 
The families of quantile and threshold correlations give rise to
 function-valued distributional correlations, exhibiting the entire dependence structure. 
They lead to tail correlations, which should arguably supersede the coefficients of tail dependence. 
Finally, we construct summary covariances (correlations), which arise as (normalised) weighted averages of distributional covariances. 
We retrieve Pearson covariance and Spearman correlation as special cases. The applicability and usefulness of our new dependence measures is illustrated on demographic data from the Panel Study of Income Dynamics.
\end{abstract}

\noindent
\textit{Keywords:}
dependence measure;
statistical functional;
identification function;
quantile correlation;
copula;
tail dependence


\section{Introduction}	

Measuring the dependence of two random variables $X$ and $Y$ has been a long-standing task in statistics with relevance for almost any empirical field of science. The two key approaches to this task are regression analysis, considering $Y$ conditional on $X$, and mutual dependence measures. The most popular measures of dependence are \emph{covariance}, \emph{Pearson correlation} and the rank correlations coefficients \emph{Spearman's $\rho$} and \emph{Kendall's $\tau$}. 
Overviews of the vast literature on dependence measures are given in \cite{Mari2001}, \cite{Balakrishnan2009}, and \cite{Tjostheim2022}. Directed dependence measures are usually normalised to range between $-1$ and $1$, and they indicate the direction of dependence by their sign and the strength of dependence by the proximity of their absolute value to 1. 
Crucial for the usefulness and interpretability of dependence measures are certain properties, called R\'enyi's axioms  \citep{Renyi1959} and often modified subsequently \citep{Schweizer1981, Embrechts2002, Balakrishnan2009}. In particular, a dependence measure should indicate the extreme forms of independence, perfect positive and negative dependence by attaining the values 0, 1 and $-1$, respectively. 
Pearson correlation suffers from some well-known shortcomings \citep{Embrechts2002}, most importantly, attainability issues: 
For given marginal distributions of $X$ and $Y$ there are in general no joint distributions with these marginals achieving a Pearson correlation of $1$ and $-1$, respectively.
This seriously impacts the interpretability of Pearson correlation. 

Let us now consider the definitions of \emph{covariance} $\Cov(X,Y)$ and \emph{Pearson correlation} $r(X,Y)$ more closely:
\begin{equation}
\label{eq:Pearson}
\Cov(X,Y) = \E\big[(X-\mu(X))(Y-\mu(Y))\big], \qquad r(X,Y) = \frac{\Cov(X,Y)}{\sqrt{\Var(X) \Var(Y)}},
\end{equation}
where $\Var(X) = \Cov(X,X)$ denotes the variance of $X$. Covariance measures the average co-movements of $X$ and $Y$ around their respective means, $\mu(X)$ and $\mu(Y)$. 
Pearson correlation is a normalised version of covariance, relying on the Cauchy--Schwarz inequality. 
We aim at measuring the dependence of $X$ and $Y$ around general \emph{statistical functionals} $T_1(X)$ and $T_2(Y)$ such as quantiles, expectiles or thresholds.
Thus, we strive at providing a more complete picture of the dependence structure. 
This widening of the perspective is akin to the methodological advancement in other key branches of statistics. 
In univariate statistics, statistical functionals constitute valuable summary measures complementing the mean.
In regression analysis, the shift to modelling functionals of the conditional distribution other than the mean was initiated by the advent of quantile and expectile regression \citep{koenker1978, newey1987}.

In Section \ref{sec:generalised}, we introduce generalised covariance. Replacing the mean by another functional necessitates a different, in general non-linear, measurement of deviation from the functional of interest to preserve the property that independence implies nullity.
We replace the classical error or deviation from the mean, $X-\mu(X)$, by generalised errors, typically constructed via identification functions. 
Then, our new generalised covariances are defined as the expectation of the product of these generalised errors.

Next, we normalise the generalised covariance accordingly to arrive at our generalised correlation (Section \ref{sec:generalised correlations}). We show that the classical Cauchy--Schwarz normalisation employed in Pearson correlation should not be used here as it leads to serious attainability issues. Instead, we propose an alternative and natural normalisation via what we call the Fr\'echet--Hoeffding bounds, which are sharp by construction. This new normalisation can also straightforwardly be used for classical covariance, which leads to what we call mean correlation, an attainable version of Pearson correlation. The generalised correlations have favourable properties, allow for measuring new forms of dependence and thus for gaining additional insights about the dependence structure between $X$ and $Y$. In particular, quantile correlation and the closely related threshold correlation arise, which allow for measuring dependence locally around a pair of quantiles of $X$ and $Y$ or around any point in the codomain of $(X,Y)$. Quantile correlation can be viewed as the correlation analogue to quantile regression.

Considering all the information on local dependence jointly, that is, using the whole families of quantile or of threshold covariances and correlations, respectively, leads to what we call distributional covariances and correlations (Section \ref{sec:local_distributional}). They are function-valued objects revealing the entire dependence structure, with one-to-one connections to the copula (for the former) and the joint distribution of $X$ and $Y$ (for the latter). Since they are normalised, attaining values in $[-1,1]$, their interpretation is a lot easier than interpreting copulas or joint distribution functions. Interestingly, a constant zero (one\,/\,minus one) of these distributional correlations implies independence (perfect positive\,/\,perfect negative dependence) of $X$ and $Y$. 
The distributional correlations can be interpreted as generalised correlations
of the identity functional, constituting the correlation counterpart of distributional regression \citep{chernozhukov2013, kneib2021}.

In Section \ref{sec:tail} we introduce tail correlations, which arise as limits of quantile correlations, 
and elaborate on their connection to the widely-used coefficients of tail dependence \citep{Coles1999, Joe2014}. 
Since they are derived from correlations, they attain values between $-1$ and $1$. 
Our new measures coincide with the coefficients of tail dependence if they are positive, but nicely distinguish between different strengths of negative tail dependence and tail independence, whereas the latter are 0 in these cases.
Thus, constituting one of the rare occasions for a Pareto improvement in statistical methodology, they should arguably supersede the traditional coefficients of tail dependence.

Section \ref{sec:summary} goes back to the classical task of dependence measures, summarising overall dependence in a single number, and elaborates on the idea of integrating over distributional correlations to construct such summary covariances and correlations. 
Strikingly, this recovers classical and Spearman covariance as well as mean and Spearman correlation as canonical special cases.

We elaborate on the finite sample counterparts of generalised covariances and correlations and and show that they constitute consistent estimators (Section \ref{subsec:estimation}).
In Section \ref{subsec:data examples}, we illustrate the usage of our newly introduced dependence measures on demographic data stemming from the Panel Study of Income Dynamics.
The Appendix contains proofs and additional details on the relation of generalised errors and identification functions and on the empirical applications. We provide an R package accompanying this paper at \url{https://github.com/MarcPohle/GCor}.

\section{Generalised covariances}
\label{sec:generalised}

\subsection{Generalised errors}

Let $(\Omega, \mathfrak{F}, \mathbb P)$ be a non-atomic probability space. Denote by $L^0(\R^d)$, $d=1,2$, the space of all $\R^d$-valued random variables. 
For $p\in[1,\infty)$, let $L^p(\R) = \{X\in L^0(\R) \mid \E[|X|^p]<\infty\}$ and $L^p(\R^2) = \{(X,Y) \in L^0(\R^2) \mid X, Y \in L^p(\R)\}$. 
We consider statistical functionals $\T$ as law-determined maps from some collection of random variables $\mathcal L\subseteq L^0(\R)$ to a set $\A\subseteq \R$, 
meaning that for any $X,X'\in\L$ it holds that $\T(X) = \T(X')$ whenever $F_X=F_{X'}$, where $F_X,F_{X'}$ are the distribution functions of $X,X'$, respectively.

Generalised covariances and generalised correlations are law-determined maps from a class $\D \subseteq L^0(\R^2)$ of bivariate random vectors $(X,Y)$ to $\R$ and to $[-1,1]$, respectively.
Recall the classical covariance and Pearson correlation on 
$\D = L^2(\R^2)$ from \eqref{eq:Pearson}.
The rationale behind the definition of covariance is to measure average co-movements of $X$ and $Y$ \emph{around their respective means}, $\mu(X)$ and $\mu(Y)$.
The very idea behind generalised covariances is to measure average co-movements around functionals $T_1$ and $T_2$ other than the mean, e.g., around certain quantiles of $X$ and $Y$. 
A naive ansatz is to merely replace $\mu$ by $\T$ in the definition of the covariance. 
However, this is not a suitable way to measure co-movements around arbitrary functionals. For example the fundamental property that independence of $X$ and $Y$ implies nullity of the generalised covariance would be violated. 
Covariance is constructed via deviations from the means of $X$ and $Y$, or \emph{errors}, $X-\mu(X)$ and $Y-\mu(Y)$. We need to find a suitable way to measure deviations of a random variable $X$ from an arbitrary functional $\T(X)$, which leads to the notion of \emph{generalised errors for $\T$}, capturing the most important properties of the prototypical error $X-\mu(X)$: having mean zero, being positive (negative) if $X$ realises above (below) $T(X)$, and being (weakly) larger the further $X$ realises away from $T(X)$.

\begin{definition}[Generalised error]
	\label{def:generalised error}
	For a given functional $T\colon\L\to\A\subseteq\R$, we call a map $e_T\colon\L\to L^1(\R)$ a generalised error for $T$ if the following properties hold for all $X\in\L$.
	\begin{enumerate}[(i)]
		\item
		\label{prop:centred}
		Centred: $\E[e_{T}(X)]=0$.
		\item
		\label{prop:increasing}
		Increasing: for $\P\otimes \P$-almost all $(\omega, \omega')\in\Omega^2$ 
		\[
		X(\omega)\ge X(\omega') \implies e_{T}(X)(\omega)\ge e_{T}(X)(\omega').
		\]
		\item
		\label{prop:sign change}
		Sign change at $\T$: for all $X\in\mathcal L$ and for $\P$-almost all $\omega\in\Omega$
		\begin{equation}
			\label{eq:sign change}
			\big(X(\omega) - \T(X)\big) e_{T}(X)(\omega)\ge0.
		\end{equation}
	\end{enumerate}
\end{definition}

Further examples of generalised errors besides the prototypical $e_\mu(X) = X - \mu(X)$ are discussed in Subsection \ref{subsec:Examples of generalised covariances}. 
A natural way to construct a generalised error map is via so-called \emph{identification functions}.

\begin{definition}[Identification function]
	A map $v\colon \A\times\R\to\R$ is called $\L$-integrable if for all $t\in\A$ 
	and $X\in\L$ it holds that $\E|v(t,X)|<\infty$.
	Moreover, $v$ is called increasing\,/\,non-constant 
	if for any $t\in\A$ the map $x\mapsto v(t,x)$ is increasing\,/\,non-constant.

	An $\L$-integrable map $v\colon \A\times\R\to\R$ is an $\L$-identification function for a functional $\T\colon\L\to\A\subseteq \R$ if 
	$\E\big[v(\T(X),X)\big]=0$ for all $X\in\L$.
	It is a \emph{strict} $\L$-identification function if additionally 
	\[
	\E\big[v(t,X)\big]=0 \implies t=\T(X)
	\]
	for all $t\in\A$ and for all $X\in\L$.
	$\T$ is identifiable on $\L$ if there exists a strict $\L$-identification function for it.
\end{definition}

In the field of forecast evaluation, identification functions are a central tool to assess forecast calibration \citep{NoldeZiegel2017,DimiPattonSchmidt2019}. 
In econometrics, they are often known as moment functions, and they give rise to Z-estimation or the (generalised) method of moments estimation \citep{Huber1967, Hansen1982, NeweyMcFadden1994}. 
An example for a strict $L^1(\R)$-identification function for the mean, which induces the error for the mean $e_\mu(X)$ from above, is  $v_\mu(t,x) = x-t$ (see again Subsection \ref{subsec:Examples of generalised covariances} for further examples).
The following proposition provides a recipe how to build generalised errors from identification functions, which will be the construction principle for all generalised errors in this paper but the ones for thresholds (Example \ref{example:Threshold covariance}) and quantiles in the non-continuous case (Example \ref{example:discontinuous marginals}), where there is still a very close connection to identification functions. Assumption \ref{ass:T} is spelled out in the Appendix. 

	\begin{proposition}
		\label{prop:generalised error}
		Let $v_T:\A\times \R\to\R$ be an increasing, non-constant $\L$-identification function for the functional $T:\L\to\A\subseteq \R$ satisfying Assumption \ref{ass:T}. Then, for a random variable $X\in\L$, the quantity 
		\be{eq:error}
		\omega\mapsto e_{v_T}(X)(\omega):= v_T\big(T(X),X(\omega)\big),
		\ee
		is a generalised error of $X$ for $T$.
	\end{proposition}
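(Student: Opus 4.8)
\emph{Proof plan.}\quad Fix an arbitrary $X\in\L$; I must establish properties (i)--(iii) of Definition \ref{def:generalised error} for this $X$. Write $t:=T(X)$, which lies in $\A$ and, since $T$ is law-determined, is a deterministic number; thus $e_{v_T}(X)(\omega)=v_T(t,X(\omega))$ is the image of $X$ under the map $x\mapsto v_T(t,x)$. This map is increasing (as $v_T$ is an increasing identification function), hence Borel, so $e_{v_T}(X)$ is measurable; and $\L$-integrability of $v_T$ applied to the argument $t\in\A$ gives $\E|v_T(t,X)|<\infty$, so $e_{v_T}(X)\in L^1(\R)$ as required.

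Property (i), centredness, is precisely the defining property of an $\L$-identification function: $\E[e_{v_T}(X)]=\E[v_T(T(X),X)]=0$. Property (ii), monotonicity, is immediate from $v_T$ being increasing: whenever $X(\omega)\ge X(\omega')$ we get $v_T(t,X(\omega))\ge v_T(t,X(\omega'))$, and this holds for every such pair $(\omega,\omega')$, in particular $\P\otimes\P$-almost surely.

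The substance lies in property (iii), the sign change at $T$, and this is the only place where Assumption \ref{ass:T} enters. It suffices to prove the deterministic inequality $(x-t)\,v_T(t,x)\ge0$ for every $x\in\R$, because substituting $x=X(\omega)$ then yields \eqref{eq:sign change} for $\P$-almost all (indeed all) $\omega$. The inequality is trivial at $x=t$. For $x_0>t$ I would argue by contradiction: suppose $v_T(t,x_0)<0$. Invoking Assumption \ref{ass:T}, I obtain a random variable $X_0\in\L$ with $T(X_0)=t$ and $X_0\le x_0$ almost surely; monotonicity of $v_T(t,\cdot)$ then gives $v_T(t,X_0)\le v_T(t,x_0)<0$ almost surely, whence $\E[v_T(T(X_0),X_0)]=\E[v_T(t,X_0)]\le v_T(t,x_0)<0$, contradicting that $v_T$ is an $\L$-identification function for $T$. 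Hence $v_T(t,x)\ge0$ for all $x>t$; the symmetric argument, using a test variable satisfying $X_0\ge x_0$ a.s.\ with $T(X_0)=t$, shows $v_T(t,x)\le0$ for all $x<t$. Combining the three cases gives $(x-t)v_T(t,x)\ge0$ throughout, establishing (iii).

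I expect the only real obstacle to be the correct orientation of $v_T(t,\cdot)$ on the two sides of the target value $t=T(X)$: an increasing, non-constant identification function is not a priori signed in agreement with $x-t$, and Assumption \ref{ass:T} is exactly the regularity/richness hypothesis on the triple $(T,\L,v_T)$ that I would rely on to produce the one-sided test laws used above (if that assumption is strong enough to give $v_T(t,t)=0$ directly, for instance because degenerate laws $\delta_t$ lie in $\L$ with $T(\delta_t)=t$, then (iii) follows at once from monotonicity of $v_T(t,\cdot)$ about its zero at $x=t$). Properties (i) and (ii), by contrast, are routine bookkeeping once the notation is unwound.
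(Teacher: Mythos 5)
Your proposal is correct and follows essentially the same route as the paper's proof: properties (i) and (ii) are immediate from the identification-function property and monotonicity of $v_T$, and the sign-change property (iii) is reduced to the deterministic inequality $(x-t)v_T(t,x)\ge 0$, proved by contradiction using the test laws supplied by Assumption \ref{ass:T} (the paper takes $X_0\in\L$ with $T(X_0)=t$ and $\esssup(X_0)=x_0$, which is exactly your one-sided test variable $X_0\le x_0$ a.s.). No gaps; your concluding worry about the orientation of $v_T(t,\cdot)$ is resolved precisely as you anticipate, since Assumption \ref{ass:T}(b) provides the required random variables.
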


Usually, an estimator or forecast for the functional $T$ is plugged in as the first argument of the identification function. Just plugging in the true functional itself yields a generalised error.


\subsection{Definition and properties}

\begin{definition}[Generalised covariance] \label{defn:generalised covariance}
Let $T_1\colon\L_1\to\A_1\subseteq\R$, $T_2\colon\L_2\to\A_2\subseteq\R$ be two functionals and $e_{T_1}$ and $e_{T_2}$ generalised errors for these functionals.
Let $\D = \{(X,Y) \in L^0(\R^2)\,|\,X \in\L_1,\ Y \in \L_2,\ e_{T_1}(X) e_{T_2}(Y) \in L^1(\R)\}$.
Then, the \emph{generalised covariance at $\T_1$ and $\T_2$ induced by $e_{T_1}$ and $e_{T_2}$}, or the \emph{$\T_1-\T_2$-covariance induced by $e_{T_1}$ and $e_{T_2}$}, 
is defined on $\D$ via
\begin{equation}
\label{eq:generalised cov}
\Cov_{\T_1, \T_2}(X,Y) := \E\big[e_{T_1}(X) e_{T_2}(Y)\big].
\end{equation}
\end{definition}
A classical sufficient condition for the integrability of the product $e_{T_1}(X) e_{T_2}(Y)$ is that the factors are square integrable, exploiting the Cauchy--Schwarz inequality.
An alternative and weaker condition is provided in Proposition \ref{prop:Frechet-Hoeffding}.

Since the generalised errors are centred by definition, $\E\big[e_{T_1}(X)\big] = \E\big[e_{T_2}(Y)\big]=0$, independence implies nullity.
\begin{proposition}
For the generalised covariance from \eqref{eq:generalised cov} it holds that $\Cov_{\T_1, \T_2}(X,Y) = 0$ if $X$ and $Y$ are independent.
\end{proposition}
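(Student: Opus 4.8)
The plan is to reduce the statement to two ingredients that are already available: first, that a generalised error is (almost surely) a function of its underlying variable, so that independence of $X$ and $Y$ is inherited by $e_{T_1}(X)$ and $e_{T_2}(Y)$; and second, the elementary factorisation of the expectation of a product of independent integrable random variables, combined with the centredness property~\ref{prop:centred}.

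The first step is the only one with any content. By the monotonicity property~\ref{prop:increasing} of Definition~\ref{def:generalised error}, for $\P\otimes\P$-almost every $(\omega,\omega')$ we have $X(\omega)\ge X(\omega')\Rightarrow e_{T_1}(X)(\omega)\ge e_{T_1}(X)(\omega')$; exchanging the roles of $\omega$ and $\omega'$ (legitimate since $\P\otimes\P$ is symmetric) then yields $X(\omega)=X(\omega')\Rightarrow e_{T_1}(X)(\omega)=e_{T_1}(X)(\omega')$ almost surely. Hence the bivariate law of $\big(X,e_{T_1}(X)\big)$ is concentrated on the graph of a non-decreasing function, so $e_{T_1}(X)$ agrees $\P$-a.s.\ with a Borel function of $X$ and is in particular (essentially) $\sigma(X)$-measurable; symmetrically, $e_{T_2}(Y)$ is $\sigma(Y)$-measurable. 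For the generalised errors built from identification functions via Proposition~\ref{prop:generalised error} this is of course immediate, since $e_{v_{T_1}}(X)=v_{T_1}(T_1(X),X)$ is visibly a function of $X$.

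Finally, assuming $X$ and $Y$ independent, I would note that $\sigma(X)$ and $\sigma(Y)$ are then independent $\sigma$-algebras, so $e_{T_1}(X)$ and $e_{T_2}(Y)$ are independent random variables; both lie in $L^1(\R)$ by the very definition of a generalised error, hence their product is integrable and $\E\big[e_{T_1}(X)e_{T_2}(Y)\big]=\E\big[e_{T_1}(X)\big]\,\E\big[e_{T_2}(Y)\big]$. By the centredness property~\ref{prop:centred} both expectations on the right vanish, so $\Cov_{T_1,T_2}(X,Y)=0$. I do not expect any genuine obstacle; the measurability reduction in the second paragraph is the only place that demands a little care, and it can be bypassed entirely for all the concrete generalised errors considered in the paper.
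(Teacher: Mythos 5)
Your proof is correct and follows essentially the same route as the paper, which treats the statement as an immediate consequence of the centredness of the generalised errors together with the fact that independence of $X$ and $Y$ passes to $e_{T_1}(X)$ and $e_{T_2}(Y)$, so the expectation of the product factorises into a product of zeros. The only addition is your careful derivation from property (ii) that $e_{T_1}(X)$ is a.s.\ a Borel function of $X$; this is a legitimate (and valid) piece of extra rigour rather than a different argument.
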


\begin{remark}
The fact that the error terms are centred also implies that the generalised covariance can equivalently be written as the covariance of the generalised errors. That is,
\begin{equation}
\label{eq:cov identity}
\Cov_{\T_1, \T_2}(X,Y) = \Cov\big(e_{T_1}(X), e_{T_2}(Y)\big).
\end{equation}
\end{remark}

This nicely illustrates the rationale of generalised covariances at $\T_1$ and $\T_2$, measuring average co-movements of $X$ and $Y$ around their respective reference functionals.
An increasing likelihood of joint positive deviations of $X$ and $Y$ from $\T_1(X)$ and $\T_2(Y)$ leads to an increase in $\Cov_{\T_1, \T_2}(X,Y)$. 
On the other hand, an increasing likelihood of countermovements decreases this covariance.
Moreover, thanks to the errors being increasing, the value of the covariance is also sensitive to the magnitude of the deviations of $X$ and $Y$ from their reference functionals. 
Finally, if there is no systematic mutual influence between $X$ and $Y$, i.e., they are independent, the covariance vanishes.

Obviously, $\Cov_{\T_1, \T_2}(X,Y)$ depends on the choice of the generalised errors, $e_{T_1}$, $e_{T_2}$.
For the leading situation when the generalised error is induced by an identification function (see Proposition \ref{prop:generalised error}), we characterise this dependence in Section \ref{app:Generalised errors and identification functions} of the Appendix (Remark \ref{rem:dependence of id}) and remark that generalised correlations are actually independent of the choice of the identification function, subject to regularity conditions (Proposition \ref{prop:ind from h}).
For the examples discussed in the following subsection we utilise the canonical identification functions as suggested by \cite{GneitingResin2021}.



\subsection{Examples}
\label{subsec:Examples of generalised covariances}

Some of the examples of generalised covariance we discuss here have appeared in the literature. We discuss relations to the literature in Subsection \ref{subsec:Examples of generalised correlations} when introducing the respective generalised correlations.

\begin{example}[Mean and expectile covariance]
The mean has an increasing, non-constant strict $L^1(\R)$-identification function
\(
v_\mu(t,x) := x-t, \ x,t\in\R.
\)
The induced error \eqref{eq:error} leads to the classical covariance \eqref{eq:Pearson}.
Likewise, its asymmetric version, the $\tau$-expectile, admits an increasing, non-constant strict $L^1(\R)$-identification function 
\begin{equation}
\label{eq:expectile id}
v_{\mu_\tau}(t,x) := 2|\one\{x\le t\} - \tau|(x-t), \qquad x,t\in\R,
\end{equation}
where $\tau\in(0,1)$. 
Clearly, for $\tau = 1/2$, this recovers the case of the mean.
The induced expectile covariance at levels $\tau,\eta\in(0,1)$ is 
\begin{align}
\label{eq:expectile cov}
\ECov_{\tau,\eta}(X,Y) 
&:= \Cov_{\mu_{\tau}, \mu_{\eta}}(X,Y)\\ \nonumber
&= 4 \E\big[|\one\{X\le \mu_{\tau}(X)\} - \tau|(X-\mu_{\tau}(X)) |\one\{Y\le \mu_{\eta}(Y)\} - \eta|(Y-\mu_{\eta}(Y))  \big],
\end{align}
where $X,Y, XY \in L^1(\R)$.
Clearly, for $\tau = \eta = 1/2$, one recovers the usual covariance.
\end{example}
Just as the mean, the $\tau$-expectile is translation equivariant in the sense that $\mu_\tau(X + c) = \mu_\tau(X)+c$ for all $X\in L^1(\R)$ and $c\in\R$, and it is positively homogeneous, i.e., $\mu_\tau(\lambda X) = \lambda \mu_\tau(X)$ for all $X\in L^1(\R)$ and $\lambda>0$.
The canonical expectile identification function \eqref{eq:expectile id} shares similar properties: It is positively homogeneous and translation \emph{in}variant. Hence, the expectile covariance is also translation invariant and positively homogeneous in both arguments. 
\begin{proposition}
	\label{prop:properties_expectile_cor}
For all $\tau,\eta\in(0,1)$, for all $X,Y\in L^1(\R)$ such that $XY\in L^1(\R)$, for all $c\in\R$ and $\lambda>0$ it holds that
\begin{align*}
\ECov_{\tau,\eta}(X + c,Y) &= \ECov_{\tau,\eta}(X,Y + c) = \ECov_{\tau,\eta}(X,Y),\\
\ECov_{\tau,\eta}(\lambda X,Y) &=\ECov_{\tau,\eta}( X,\lambda Y) 
= \lambda \ECov_{\tau,\eta}( X,Y) .
\end{align*}
\end{proposition}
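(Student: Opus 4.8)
The plan is to reduce the whole statement to two almost-sure identities for the canonical expectile error and then push them through the expectation. Concretely, I would first show that for every $X\in L^1(\R)$, $c\in\R$ and $\lambda>0$ one has $\P$-almost surely
\[
e_{\mu_\tau}(X+c) = e_{\mu_\tau}(X) \qquad\text{and}\qquad e_{\mu_\tau}(\lambda X) = \lambda\, e_{\mu_\tau}(X),
\]
and likewise for $e_{\mu_\eta}$. Granting these, the claims about $\ECov_{\tau,\eta}$ are immediate: $\ECov_{\tau,\eta}(X+c,Y) = \E\big[e_{\mu_\tau}(X+c)\,e_{\mu_\eta}(Y)\big] = \E\big[e_{\mu_\tau}(X)\,e_{\mu_\eta}(Y)\big] = \ECov_{\tau,\eta}(X,Y)$, the analogous computation in the second argument using $e_{\mu_\eta}(Y+c)=e_{\mu_\eta}(Y)$, and for the homogeneity claim one pulls the constant $\lambda$ out of the expectation, $\E\big[\lambda\, e_{\mu_\tau}(X)\,e_{\mu_\eta}(Y)\big] = \lambda\,\ECov_{\tau,\eta}(X,Y)$, and symmetrically for the second slot.

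To establish the two error identities I would combine the recalled equivariance properties of the $\tau$-expectile, namely $\mu_\tau(X+c)=\mu_\tau(X)+c$ and $\mu_\tau(\lambda X)=\lambda\mu_\tau(X)$ for $\lambda>0$, with the explicit form $e_{\mu_\tau}(X) = v_{\mu_\tau}\big(\mu_\tau(X),X\big) = 2\big|\one\{X\le\mu_\tau(X)\}-\tau\big|\big(X-\mu_\tau(X)\big)$ coming from \eqref{eq:expectile id} and \eqref{eq:error}. For translation invariance, substituting $\mu_\tau(X+c)=\mu_\tau(X)+c$ cancels the additive constant both in the indicator event, since $\{X+c\le\mu_\tau(X)+c\}=\{X\le\mu_\tau(X)\}$, and in the linear factor, since $(X+c)-(\mu_\tau(X)+c)=X-\mu_\tau(X)$. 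For positive homogeneity, substituting $\mu_\tau(\lambda X)=\lambda\mu_\tau(X)$ leaves the indicator event unchanged because $\{\lambda X\le\lambda\mu_\tau(X)\}=\{X\le\mu_\tau(X)\}$ — this is exactly where $\lambda>0$ is needed — while the remaining factor scales linearly, $\lambda X-\lambda\mu_\tau(X)=\lambda\big(X-\mu_\tau(X)\big)$; multiplying out gives the factor $\lambda$.

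Finally I would check that the domain condition is preserved so that all displayed quantities are well defined: since $X,Y,XY\in L^1(\R)$ we also have $X+c,\lambda X\in L^1(\R)$, $(X+c)Y=XY+cY\in L^1(\R)$ and $(\lambda X)Y=\lambda XY\in L^1(\R)$, and the bound $|e_{\mu_\tau}(X)\,e_{\mu_\eta}(Y)|\le 4\big(|X|+|\mu_\tau(X)|\big)\big(|Y|+|\mu_\eta(Y)|\big)$ is integrable because $\E|XY|<\infty$ and $X,Y\in L^1(\R)$. I do not expect a real obstacle here; the only points demanding a little care are the strict positivity $\lambda>0$ in the homogeneity step (the identity genuinely fails for $\lambda<0$, where the indicator event flips) and the fact that $\mu_\tau$ and $e_{\mu_\tau}$ are defined only up to $\P$-a.s. equality, so the error identities must be invoked in their almost-sure form before integrating.
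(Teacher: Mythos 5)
Your proposal is correct and follows essentially the same route the paper takes: it derives the almost-sure identities $e_{\mu_\tau}(X+c)=e_{\mu_\tau}(X)$ and $e_{\mu_\tau}(\lambda X)=\lambda e_{\mu_\tau}(X)$ from the translation equivariance and positive homogeneity of the expectile together with the translation invariance and positive homogeneity of the canonical identification function, and then passes these through the expectation defining $\ECov_{\tau,\eta}$. Your added integrability check and the remark on where $\lambda>0$ enters are fine and consistent with the paper's (more terse) argument.
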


\begin{example}[Threshold covariance]
\label{example:Threshold covariance}
The arguably simplest situation is to consider dependence around a point $(a,b) \in \mathbb{R}^2$, that is, to measure the average joint deviation of $X$ from an absolute threshold $a\in\R$ and of $Y$ from $b\in\R$.
This requires a special treatment since, formally, the functionals are constant. 
As such, they are identifiable, but the identification function for the constant $a$, $v_a(t,x) = t-a$, is constant in its second argument $x$.
Thus, it would only yield a trivial generalised error, which would be constant 0.
To circumvent this problem, consider
\be{eq:Threshold error}
e_a(X) = F_X(a) - \one\{X \le a\},
\ee
which is indeed for all $X\in L^0(\R)$ a generalised error for the constant functional $a$.
Hence, we can define the threshold covariance at points $a,b\in\R$ 
\begin{align}
\label{eq:Threshold cov}
\TCov_{a,b}(X,Y)
:= \E\big[\left( F_X(a) - \one \{ X \leq a\} \right) \left( F_Y(b) - \one \{ Y \leq b\} \right)\big] 
= F_{X,Y}(a,b) - F_X(a)F_Y(b).
\end{align}

Interestingly, the generalised error in \eqref{eq:Threshold error} can also arise via \eqref{eq:error} for the evaluation functional
$\T_a(F):= F(a)$
with the increasing, non-constant strict $L^0(\R)$-identification function
\begin{equation*}
v_{T_a} (t,x) := t - \one\{x \le a\}, \qquad x,t\in\R.
\end{equation*}
\end{example}

\begin{example}[Quantile covariance, continuous case]
\label{exmp:quantile cov}
For the (lower) $\alpha$-quantile $\q_\alpha(X) = \inf\{x\in\R\,|\, F_X(x)\ge\alpha\}$, $X\in L^0(\R)$, $\alpha\in(0,1)$, the function 
\begin{equation}
\label{eq:quantile id}
v_{\q_\alpha}(t,x) := \alpha - \one\{x\le t\}, \qquad x,t\in\R,
\end{equation}
is an increasing, non-constant $L_\alpha$-identification function, where $L_\alpha = \{X\in L^0(\R) \,|\,F_X(\q_\alpha(X))=\alpha\}$.\footnote{$L_\alpha$ is a superclass of random variables with continuous distributions. On the subclass $L_{(\alpha)} = \{X\in L_\alpha\,|\, F_X\big(q_\alpha(X)+\eps\big)>\alpha\ \forall \eps>0\}$, $v_{\q_\alpha}$ in \eqref{eq:quantile id} is also a strict identification function for $q_\alpha$.}
The induced quantile covariance at levels $\alpha,\beta\in(0,1)$ is 
\begin{align}
\label{eq:quantile cov}
\QCov_{\alpha, \beta}(X,Y) 
&:= \Cov_{\q_{\alpha}, \q_{\beta}}(X,Y)
= \E\big[ (\alpha - \one \{X\leq \q_{\alpha}(X)\}) (\beta - \one \{Y\leq \q_{\beta}(Y)\})   \big].
\end{align}
If $X\in L_\alpha$ and $Y\in L_\beta$, we get that 
\be{eq:quantile copula}
\QCov_{\alpha, \beta}(X,Y) = C_{X,Y}(\alpha,\beta) - \alpha\beta,
\ee
where $C_{X,Y}$ is a copula of $(X,Y)$. Since all copulas for $(X,Y)$ coincide on $\text{range}(F_X) \times \text{range}(F_Y)$ and $F_X(q_\alpha(X))=\alpha$, $F_Y(q_\beta(Y))=\beta$ by assumption, the expression \eqref{eq:quantile copula} is well defined, i.e., independent of the choice of the copula.
Again, it is convenient that the quantile identification function \eqref{eq:quantile id} is bounded in $x$ such that we can dispense with integrability assumptions on $X$ and $Y$.
\end{example}

\begin{example}[Quantile covariance, general case]
\label{example:discontinuous marginals}
On the entire $L^0(\R)$, $v_{\q_\alpha}$ in \eqref{eq:quantile id} fails to identify the $\alpha$-quantile, due to possible discontinuities in the cumulative distribution function (CDF). 
However, a natural modification of the generalised error from the continuous case leads to a suitable error for the general case:
\be{eq:generalised error quantile}
e_{q_\alpha}(X) := F_X( q_\alpha (X) ) - \one\{X \le q_\alpha(X)\}.
\ee
This can be seen as a correction of the generalised error from the continuous case, replacing the quantile level $\alpha$ with the corrected quantile level $\alpha^*:=F_X(q_\alpha(X))$ 
accounting for a jump in the CDF and ensuring that 
 $e_{q_\alpha}(X)$ is centred. On the other hand, it is just the natural analogue to the threshold error \eqref{eq:Threshold error}. 
This leads to the general definition of quantile covariance at levels $\alpha,\beta\in(0,1)$ as
\begin{align}\label{eq:Quantile correction}
\QCov_{\alpha, \beta}(X,Y)
= \E\big[ \big(F_X(q_\alpha(X)) - \one \{X\leq \q_{\alpha}(X)\}\big) \big(F_Y(q_\beta(Y)) - \one \{Y\leq \q_{\beta}(Y)\}\big)   \big].
\end{align}
Similar to \eqref{eq:quantile copula}, \eqref{eq:Quantile correction} can also be expressed in terms of a copula of $C_{X,Y}$ of $(X,Y)$: 
\be{eq:Quantile correction copula}
\QCov_{\alpha, \beta}(X,Y)
= C_{X,Y}\big(F_X(q_\alpha(X)), F_Y(q_\beta(Y))\big) - F_X(q_\alpha(X)) F_Y(q_\beta(Y)).
\ee
Invoking the same arguments as above, this expression does not depend on the choice of the copula. 
\end{example}

The quantile enjoys even more invariance properties than the expectile. It is equivariant under all strictly increasing transformations: $q_\alpha\big(g(X)\big) = g\big(q_\alpha(X)\big)$ for any $X\in L^0(\R)$ and for any strictly increasing $g\colon\R\to\R$.
The quantile error \eqref{eq:generalised error quantile} inherits this invariance, $e_{q_{\alpha}} (g(X)) = e_{q_{\alpha}} (X)$. Hence, the induced quantile covariance is also invariant with respect to strictly increasing transformations in both arguments. 
\begin{proposition}
	\label{properties_quantile_covariance}
For all $\alpha,\beta\in(0,1)$, for all $X,Y\in L^0(\R)$, and for all strictly increasing transformations $g\colon\R\to\R$ it holds that
\begin{align*}
\QCov_{\alpha,\beta}\big(g(X),Y\big) &= \QCov_{\alpha,\beta}\big(X,g(Y)\big)  = \QCov_{\alpha,\beta}(X,Y) .
\end{align*}
\end{proposition}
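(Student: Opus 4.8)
The plan is to transfer the invariance to the level of the generalised error and then read the statement off from the product structure of the generalised covariance. By \eqref{eq:Quantile correction} we have $\QCov_{\alpha,\beta}(X,Y) = \E\big[e_{q_\alpha}(X)\,e_{q_\beta}(Y)\big]$, where $e_{q_\alpha}(X) = F_X(q_\alpha(X)) - \one\{X\le q_\alpha(X)\}$ as in \eqref{eq:generalised error quantile}; since this error always takes values in $[-1,1]$, no integrability constraint is active, so $\D = L^0(\R^2)$ and transforming $X$ or $Y$ never leaves the domain. It therefore suffices to prove the pointwise identity $e_{q_\alpha}(g(X)) = e_{q_\alpha}(X)$ for every $X\in L^0(\R)$ and every strictly increasing $g\colon\R\to\R$: the assertion for $\QCov_{\alpha,\beta}(X,g(Y))$ then follows by the same argument applied to the second factor, and $\QCov_{\alpha,\beta}(g(X),Y) = \E[e_{q_\alpha}(g(X))\,e_{q_\beta}(Y)] = \E[e_{q_\alpha}(X)\,e_{q_\beta}(Y)] = \QCov_{\alpha,\beta}(X,Y)$.

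To prove $e_{q_\alpha}(g(X)) = e_{q_\alpha}(X)$ I would combine two facts. First, strict monotonicity of $g$ yields the event identity $\{g(X)\le g(c)\} = \{X\le c\}$ for every $c\in\R$, and hence $F_{g(X)}(g(c)) = \P(g(X)\le g(c)) = \P(X\le c) = F_X(c)$. Second, the lower quantile is equivariant under strictly increasing maps, $q_\alpha(g(X)) = g(q_\alpha(X))$, which is exactly the fact recorded in the paragraph preceding the proposition. Setting $x^* := q_\alpha(X)$ and substituting $q_\alpha(g(X)) = g(x^*)$, the two summands of the transformed error become $F_{g(X)}(q_\alpha(g(X))) = F_{g(X)}(g(x^*)) = F_X(x^*)$ and $\one\{g(X)\le q_\alpha(g(X))\} = \one\{g(X)\le g(x^*)\} = \one\{X\le x^*\}$, so that $e_{q_\alpha}(g(X)) = F_X(x^*) - \one\{X\le x^*\} = e_{q_\alpha}(X)$. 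Alternatively, one can argue through the copula representation \eqref{eq:Quantile correction copula}, using that for strictly increasing $g$ a copula of $(g(X),Y)$ may be taken equal to one of $(X,Y)$ together with $F_{g(X)}(q_\alpha(g(X))) = F_X(q_\alpha(X))$, which leaves every term in \eqref{eq:Quantile correction copula} unchanged.

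The only non-routine ingredient is the quantile equivariance $q_\alpha(g(X)) = g(q_\alpha(X))$; for continuous $g$ it is immediate, and in general it needs a short argument handling discontinuities of $F_X$ and of $g$ (in particular the range of $g$ need not be all of $\R$, and one must be careful with the one-sided limits of $F_X$ at $q_\alpha(X)$). Once it is granted, the remainder is a one-line manipulation of indicators and distribution functions, so the write-up can simply invoke the equivariance as stated just above and then spell out the three displayed equalities of the previous paragraph.
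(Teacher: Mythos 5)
Your argument is essentially the paper's own proof: the paper justifies this proposition solely through the two sentences preceding it, namely the quantile equivariance $q_\alpha(g(X)) = g(q_\alpha(X))$ and the resulting error invariance $e_{q_\alpha}(g(X)) = e_{q_\alpha}(X)$, and you spell out exactly that chain (event identity $\{g(X)\le g(c)\}=\{X\le c\}$, equivariance, substitution, expectation of the product), plus the equivalent copula route.

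One concrete caveat concerns the step you defer. The equivariance $q_\alpha(g(X)) = g(q_\alpha(X))$ is in fact \emph{false} for strictly increasing but discontinuous $g$ under the paper's lower-quantile definition: take $X$ uniform on $(0,1)$, $\alpha=1/2$, and $g(x)=x$ for $x<1/2$, $g(x)=x+1$ for $x\ge 1/2$; then $q_{1/2}(g(X))=1/2$ while $g(q_{1/2}(X))=3/2$. So the promised ``short argument handling discontinuities of $g$'' cannot be supplied in that generality (the correct statement requires $g$ continuous, or at least left-continuous). In this example the errors still coincide almost surely, both being $1/2-\one\{X<1/2\}$ up to a null set, so the conclusion survives; but if $F_X$ has an atom at $q_\alpha(X)$ with $\P(X<q_\alpha(X))=\alpha$ and $g$ jumps there, then $F_{g(X)}\big(q_\alpha(g(X))\big)\neq F_X\big(q_\alpha(X)\big)$ and the two errors genuinely differ, so even the proposition itself fails for such $g$ (one can check $\QCov_{1/2,1/2}(g(X),X)\neq\QCov_{1/2,1/2}(X,X)$ for a suitable mixed discrete--continuous $X$). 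This limitation is inherited from the paper's own in-text claim rather than introduced by you: for continuous (or left-continuous) $g$, where the equivariance and the exact identities $F_{g(X)}(g(x^*))=F_X(x^*)$ and $\{g(X)\le g(x^*)\}=\{X\le x^*\}$ hold, your write-up is complete and matches the intended proof.
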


\begin{remark}[Local covariances]
	\label{rem:local_covariances}
Threshold and quantile covariance are closely connected and complementary. Indeed, if $(a,b) = (q_{\alpha}(X),q_{\beta}(Y))$, the two measures coincide, $\TCov_{a,b}(X,Y)=\QCov_{\alpha, \beta}(X,Y)$, as the generalised errors coincide, $e_{a}(X)=e_{q_{\alpha}}(X)$ and $e_{b}(Y)=e_{q_{\beta}}(Y)$. What distinguishes them is that threshold covariance is a measure on the observation scale, i.e., one chooses a point $(a,b) \in \mathbb{R}^2$, while quantile covariance measures dependence on the quantile scale, i.e., one chooses two quantile levels $(\alpha,\beta) \in (0,1)^2$. If one choose a point $(a,b)$, computes the respective quantile levels $F_X(a)$ and $F_Y(b)$ to obtain the quantile covariance at this point, this just leads to the threshold covariance at this point, $\QCov_{F_X(a),F_Y(b)} (X,Y) = \TCov_{a,b}(X,Y)$ and vice versa, $\TCov_{q_{\alpha}(X),q_{\beta}(Y)}(X,Y)=\QCov_{\alpha,\beta} (X,Y)$. Both measures allow to measure dependence \emph{locally} around the specified point: They only depend on the joint exceedances of this point, or in other words, the joint CDF $F_{X,Y}$ or copula $C_{X,Y}$ and the marginal distributions $F_X$, $F_Y$ there. Thus, the corresponding generalised errors are naturally binary random variables, amounting to centred exceedance indicators.
\end{remark}

\begin{example}[Quantile-mean covariance]
\label{exmp:quantile-mean}
We can also pair functionals $\T_1$ and $\T_2$ which belong to different families. E.g., we can define the quantile-mean covariance as 
\be{eq:quantile mean}
\Cov_{q_\alpha,\mu}(X,Y) 
:= \E\big[(\alpha - \one\{X \le q_\alpha(X)\})(Y - \mu(Y))\big] 
= - \E\big[F_{X\,\mid Y}(q_\alpha(X))(Y - \mu(Y))\big],
\ee
where $\alpha\in(0,1)$, $X\in L^0(\R)$ and $Y\in L^1(\R)$.
The second identity is due to the tower property of the conditional expectation, where one first conditions on $Y$.
Conveniently, we may use this definition even if $F_X$ has a jump at its $\alpha$-quantile such that the generalised error for the quantile fails to be centred. The reason is that the generalised error for the mean is always centred.
\end{example}

\begin{remark}\label{rem:last jump 1}
	Naturally, generalised covariances involving the quantile error $e_{q_\alpha}(X)$ \eqref{eq:generalised error quantile} are only interesting 
	for $\alpha$ such that $q_\alpha(X) < \esssup(X) = q_1(X)$. 
	For larger levels $\alpha$, 
	$X$ cannot vary \emph{around} $q_\alpha(X)$, but $X\le q_\alpha(X)$, implying that $e_{q_\alpha}(X)$ is constant and the quantile covariance is 0 for such situations. 
\end{remark}

\section{Generalised correlations}
\label{sec:generalised correlations}

\subsection{Normalisation: Fr\'echet--Hoeffding vs.~Cauchy--Schwarz}
\label{subsec:correlations}

To turn a covariance into a correlation, it is essential to ensure normalisation of the correlation meaning that it only attains values in $[-1,1]$.
Normalisation enhances the interpretability of a correlation.
It can be achieved by bounding the covariance in absolute values with a constant $K_{X,Y}$, which may depend on the marginal distributions of $X$ and $Y$
\begin{equation}
\label{eq:bound}
|\Cov_{\T_1,\T_2}(X,Y)| \le K_{X,Y}.
\end{equation}
Then, trivially $\Cov_{\T_1,\T_2}(X,Y)/ K_{X,Y}\in[-1,1]$. 
Pearson correlation $r(X,Y)$ \eqref{eq:Pearson} builds on this idea and exploits the Cauchy--Schwarz inequality to bound the covariance. That is,
for $X,Y\in L^2(\R)$,
\begin{equation}
\label{eq:CSI}
|\Cov(X,Y)| \le K_{X,Y}:= \sqrt{\Var(X) \Var(Y)}.
\end{equation}
For the generalised covariance, we could hence simply exploit the identity \eqref{eq:cov identity} and obtain
\begin{equation}
\label{eq:CSI 2}
|\Cov_{\T_1,\T_2}(X,Y)| \le \sqrt{\E\left[e_{\T_1}(X)^2\right] \E\left[e_{\T_2}(Y)^2\right]},
\end{equation}
provided that $e_{\T_1}(X), e_{\T_2}(Y) \in L^2(\R)$.
The implied quantity is just the Pearson correlation of the generalised errors
\begin{equation}
\label{eq:CSI 3}
r\big(e_{\T_1}(X),e_{\T_2}(Y)\big) = \frac{\E\left[e_{\T_1}(X) e_{\T_2}(Y)\right]}{\sqrt{\E\left[e_{\T_1}(X)^2\right] \E\left[e_{\T_2}(Y)^2\right]}}\in[-1,1].
\end{equation}
However, it is well-known that the Cauchy--Schwarz inequality \eqref{eq:CSI 2} is not  sharp in general. 
Consequently, the Pearson correlation of the generalised errors \eqref{eq:CSI 3} does not always attain all values in $[-1,1]$ for given marginal distributions of $X$ and $Y$. 
In particular, its minimum and maximum may be far away from $-1$ and 1, respectively, and their absolute values may differ strongly, depending on the specific marginal distributions of $X$ and $Y$, see \cite{Embrechts2002}. 
This compromises its interpretability in that it is not really able to indicate the strength of dependence (by the closeness of its absolute value to 1). 
In fact, a value close to 0 may arise even if the dependence is quite strong for certain marginal distributions. 

We revisit the conditions for attainability of Pearson correlation more closely now. 
For $X,Y\in L^2(\R)$, $r(X,Y)$ is $1$ ($-1$) if and only if $X$ and $Y$ have perfect positive (negative) \emph{linear} dependence. 
That is, if and only if for some $a, a'\in\R$, $b, b'>0$ it holds almost surely that $Y = a + bX$ ($Y = a' - b'X$). 
Therefore, for given non-degenerate marginals $F_X$ and $F_Y$, there exists a joint distribution $F_{X,Y}$ with Pearson correlation $1$ ($-1$) if and only if $Y$ and $X$ ($-X$) are of the same type, meaning that $Y \stackrel{\mathrm{d}}{=} a + bX$ ($Y \stackrel{\mathrm{d}}{=} a' - b'X$) for some $a, a'\in\R$, $b, b'>0$. 
This observation leads to the following insight.

\begin{lemma}[Attainability of Pearson correlation] 
\label{lem:symmetric}
Let $X,Y\in L^2(\R)$ be non-constant. Pearson correlation $\Cor(X,Y)$ is attainable, that is, there exist joint distributions $F$, $\tilde F$ with marginals $F_X$, $F_Y$ and Pearson correlations $1$ and $-1$, respectively, if and only if $X$ and $Y$ are of the same type \emph{and} the distributions are symmetric, that is, there exist $c,d\in\R$ such that 
$X-c \stackrel{\mathrm{d}}{=} - (X-c)$ and $Y-d \stackrel{\mathrm{d}}{=} - (Y-d)$.
\end{lemma}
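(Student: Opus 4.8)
The plan is to reduce the statement to a purely distributional equivalence by invoking the characterisation recalled in the paragraph just before the lemma: for given non-degenerate marginals $F_X,F_Y$, a joint law with Pearson correlation $1$ (resp.\ $-1$) exists if and only if $Y$ and $X$ (resp.\ $Y$ and $-X$) are of the same type, i.e.\ $Y\dist a+bX$ (resp.\ $Y\dist a'-b'X$) for some $a,a'\in\R$, $b,b'>0$. Reading ``$\Cor(X,Y)$ is attainable'' as ``\emph{both} $1$ and $-1$ are attained by some joint law with marginals $F_X,F_Y$'', attainability is therefore equivalent to the conjunction of these two distributional identities, and the task becomes: show that the existence of $a,a'\in\R$, $b,b'>0$ with $Y\dist a+bX$ and $Y\dist a'-b'X$ is equivalent to ``$Y\dist a+bX$ for some $b>0$'' together with symmetry of both $F_X$ and $F_Y$.

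For the ``only if'' direction I would start from $a+bX\dist Y\dist a'-b'X$ and compare variances: since $X$ is non-constant and in $L^2(\R)$ we have $\Var(X)\in(0,\infty)$, so $b^2\Var(X)=b'^2\Var(X)$ forces $b=b'$ (both positive). Hence $a+bX\dist a'-bX$, and subtracting $(a+a')/2$ from both sides rewrites this as $bX-k\dist-(bX-k)$ with $k=(a'-a)/2$; equivalently $X-c\dist-(X-c)$ with $c=k/b$, so $F_X$ is symmetric about $c$. Because $Y$ is the orientation-preserving affine image $Y\dist a+bX$, it is symmetric about $d:=a+bc$, i.e.\ $Y-d\dist-(Y-d)$, and of course $Y\dist a+bX$ with $b>0$ says $X$ and $Y$ are of the same type.

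For the ``if'' direction, assume $Y\dist a+bX$ with $b>0$ and $X-c\dist-(X-c)$ for some $c\in\R$ (the hypothesised symmetry of $Y$ is in fact forced by these two and so is not needed here). Same-typeness of $X$ and $Y$ yields, by the recalled characterisation, a joint law with correlation $1$. For correlation $-1$, symmetry of $X$ gives $X\dist 2c-X$, whence $Y\dist a+bX\dist(a+2bc)-bX$, so $Y$ and $-X$ are of the same type and the recalled characterisation again produces a joint law with correlation $-1$. This closes the equivalence.

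The argument is essentially routine; the one step that needs genuine care is the variance-matching, which is exactly where non-constancy of $X$ (together with $X\in L^2(\R)$) is used, to pass from $b^2=b'^2$ to $b=b'$ and thereby recognise $a+bX\dist a'-bX$ as a reflection symmetry of a rescaled $X$. One should also keep in mind that ``of the same type'' is used here in the orientation-preserving sense ($b>0$), which is precisely why the $+1$ and $-1$ cases require the two separate relations $Y\sim X$ and $Y\sim-X$ rather than a single one, and hence why symmetry (linking $X$ and $-X$) enters at all.
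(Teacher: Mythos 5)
Your proof is correct, and its skeleton matches the paper's: both arguments read attainability through the characterisation recalled just before the lemma, so that attaining $+1$ and $-1$ simultaneously amounts to $a+bX \dist Y \dist a'-b'X$ with $b,b'>0$, and both then extract symmetry of $F_X$ (and hence of $F_Y$) from this single distributional identity, with the ``if'' direction being a short verification. The one step you handle differently is the reconciliation of the two scale factors. The paper rewrites $a+bX \dist a'-b'X$ as $X-c \dist -\lambda(X-c)$ with $\lambda=b'/b$ and invokes a separate auxiliary lemma (its Lemma A.2), proved by a measure-theoretic argument on symmetric intervals, showing that $Z \dist -\lambda Z$ with $\lambda>0$ forces $\lambda=1$ unless $Z=0$ almost surely; this works in $L^0$ and needs no moments. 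You instead compare variances, using $X\in L^2(\R)$ non-constant to get $0<\Var(X)<\infty$ and hence $b=b'$ from $b^2\Var(X)=b'^2\Var(X)$, after which the reflection symmetry is immediate. Your route is shorter and perfectly legitimate here, since the lemma already assumes $L^2$ (as it must for Pearson correlation to exist); the paper's route buys a moment-free auxiliary fact that is marginally more general but is not needed for the statement as posed. Your observation that symmetry of $Y$ is forced in the ``if'' direction is also correct and consistent with the paper's treatment.
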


The restriction to symmetric distributions of the same type is substantial beyond the confinements of a Gaussian world.
In the context of generalised covariances, which can be written as covariances of generalised errors (see \eqref{eq:cov identity}), it is definitely too restrictive.
Here, the distributions of the generalised errors $e_{\T_1}(X)$ and $e_{\T_2}(Y)$ are generally neither symmetric nor of the same type.

\begin{example} \label{ex:Pearson quantile correlation}
	An example illustrates how severe the attainability problem can become. 
	For the Pearson correlation \eqref{eq:CSI 3} of the quantile errors in the continuous case for $\alpha, \beta \in (0,1)$ from Example \ref{exmp:quantile cov} the upper and lower bound do not even depend on the marginal distributions of $X$ and $Y$, but only on the quantile levels $\alpha$ and $\beta$ as will become clear in Example \ref{example:Quantile correlation}. Figure \ref{fig:qfcorcsbounds} presents the bounds for all $(\alpha,\beta) \in (0,1)$. 
	This quantity can only attain 1 if $\alpha=\beta$, and $-1$ if $\alpha=1-\beta$, implying that it is only attainable at the medians, $\alpha=\beta=0.5$. 
	What is more, the upper and lower bounds can be very far away from 1 and $-1$, and their absolute values are usually very different from each other.   
\end{example}

\begin{figure}
	\centering
	\includegraphics[width=1\linewidth]{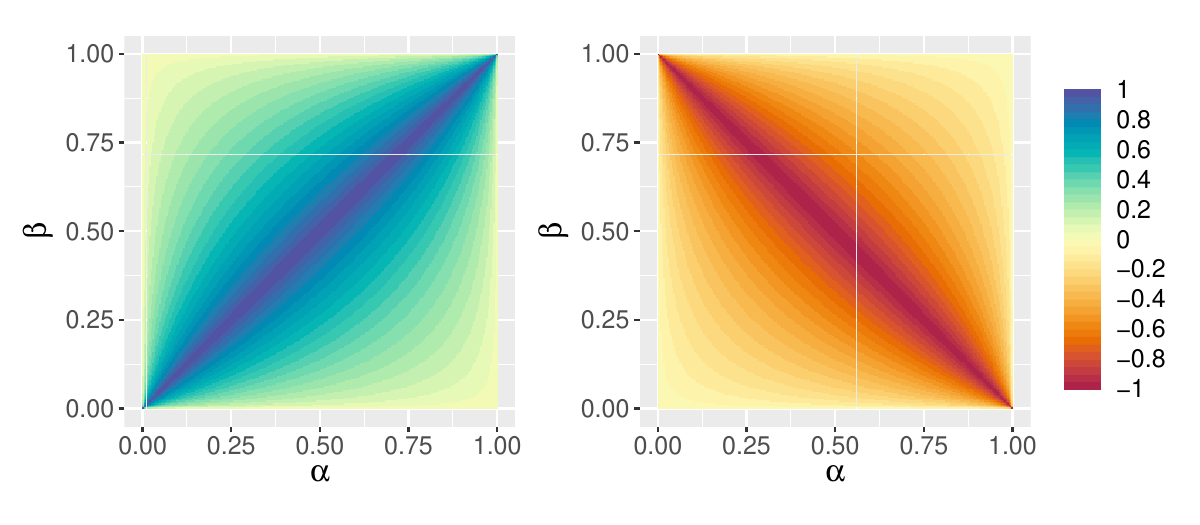}
	\caption[]{Upper and lower bounds for the Pearson correlation of the quantile errors $e_{q_{\alpha}}$ and $e_{q_{\beta}}$}
	\label{fig:qfcorcsbounds}
\end{figure}


The following proposition provides a sharp version of the inequality \eqref{eq:CSI 2},
exploiting previous results by \cite{hoeffding1940},  \cite{Frechet1957}, and \cite{Embrechts2002}.
Recall that $(X,Y)$ are called \emph{comonotonic} if $(X,Y)\dist \big(\nu_1(Z), \nu_2(Z)\big)$ for some random variable $Z$ and two increasing functions $\nu_1$, $\nu_2$.
Similarly, $(X,Y)$ are \emph{countermonotonic} if $(X,Y)\dist \big(\nu_1(Z), \nu_2(Z)\big)$ for some random variable $Z$ with $\nu_1$ increasing and $\nu_2$ decreasing.
In other words, comonotonicity (countermonotonicity) corresponds to the situation of perfect positive (negative) dependence between $X$ and $Y$.

\begin{proposition}
\label{prop:Frechet-Hoeffding}
For any pair of random variables $(X,Y)$, let $(X,Y')$ and $(X,Y'')$ be pairs with the same marginal distributions such that $(X,Y')$ is countermonotonic and $(X,Y'')$ is comonotonic.
Let $\T_1$ and $\T_2$ be functionals with generalised errors $e_{\T_1}$ and $e_{\T_2}$ such that $\Cov_{\T_1,\T_2}(X,Y')$ and $\Cov_{\T_1,\T_2}(X,Y'')$ exist and are finite.
Then the following holds.
\begin{enumerate}[(i)]
\item
$\Cov_{\T_1,\T_2}(X,Y)$ is finite and 
\begin{equation}
\label{eq:Frechet-Hoeffding}
\Cov_{\T_1,\T_2}(X,Y')
\le \Cov_{\T_1,\T_2}(X,Y)
\le \Cov_{\T_1,\T_2}(X,Y'').
\end{equation}
\item
If $e_{\T_1}(X)$ and $e_{\T_2}(Y)$ are non-constant, then 
\begin{equation}
\label{eq:Frechet-Hoeffding 2}
\Cov_{\T_1,\T_2}(X,Y')
<0
< \Cov_{\T_1,\T_2}(X,Y'').
\end{equation}
\item
If $e_{\T_1}(X)$ is a strictly increasing function of $X$ and $e_{\T_2}(Y)$ is a strictly increasing function of $Y$, then an equality in \eqref{eq:Frechet-Hoeffding} is attained only if $(X,Y)$ is co- or countermonotonic.
\end{enumerate}
\end{proposition}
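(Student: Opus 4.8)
The plan is to transport everything to the generalised errors $U:=e_{\T_1}(X)$ and $V:=e_{\T_2}(Y)$ and then combine the Hoeffding covariance representation with the classical Fr\'echet--Hoeffding bounds for bivariate distribution functions. Three preparatory observations drive the argument. First, by the \emph{increasing} property in Definition~\ref{def:generalised error} the error $e_{\T_1}(X)$ agrees $\P$-a.s.\ with $g_1(X)$ for some increasing $g_1\colon\R\to\R$, and likewise $e_{\T_2}(Y)=g_2(Y)$ a.s.\ with $g_2$ increasing; since a generalised error is a law-determined transformation of its argument, the same $g_1,g_2$ may be used for the re-coupled variables in $(X,Y')$ and $(X,Y'')$, whose marginals coincide with those of $X,Y$. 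Second, it then follows that comonotonicity (resp.\ countermonotonicity) of $(X,Y)$ passes to $(U,V)=(g_1(X),g_2(Y))$; in particular the pairs built from $(X,Y'')$ and $(X,Y')$ realise the comonotonic and the countermonotonic coupling of $U$ and $V$. Third, $U,V\in L^1(\R)$ are centred, so by \eqref{eq:cov identity} we have $\Cov_{\T_1,\T_2}(X,Y)=\Cov(U,V)=\E[UV]$ whenever the right-hand side is defined.

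For part (i), I would write $H$ for the joint distribution function of $(U,V)$ under the coupling at hand and $F_U,F_V$ for its (coupling-independent) margins, and use the pointwise Fr\'echet--Hoeffding inequalities
\[
W(s,t):=\max\{F_U(s)+F_V(t)-1,\,0\}\ \le\ H(s,t)\ \le\ \min\{F_U(s),F_V(t)\}=:M(s,t),
\]
where $W$ and $M$ are exactly the joint distribution functions of the countermonotonic and the comonotonic coupling of $U,V$. By hypothesis $\E|e_{\T_1}(X)\,e_{\T_2}(Y')|<\infty$ and $\E|e_{\T_1}(X)\,e_{\T_2}(Y'')|<\infty$, so Hoeffding's identity for those two couplings yields $\int_{\R^2}(M-F_UF_V)\,\d s\,\d t<\infty$ and $\int_{\R^2}(F_UF_V-W)\,\d s\,\d t<\infty$, both integrands being nonnegative. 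For an arbitrary coupling, $W-F_UF_V\le H-F_UF_V\le M-F_UF_V$, so $H-F_UF_V$ is integrable, Hoeffding's identity $\Cov(U,V)=\int_{\R^2}\big(H(s,t)-F_U(s)F_V(t)\big)\,\d s\,\d t$ applies and shows $\Cov_{\T_1,\T_2}(X,Y)$ is finite, and monotonicity of the integral in $H$ gives \eqref{eq:Frechet-Hoeffding}.

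For part (ii), $U,V$ non-constant forces the sets $\{s:0<F_U(s)<1\}$ and $\{t:0<F_V(t)<1\}$ to have positive Lebesgue measure; on the product of these sets $M-F_UF_V>0$ and $F_UF_V-W>0$ (since $\min\{a,b\}-ab>0$ and $ab-\max\{a+b-1,0\}>0$ for $a,b\in(0,1)$), which gives the strict inequalities in \eqref{eq:Frechet-Hoeffding 2}. For part (iii), suppose the right-hand inequality in \eqref{eq:Frechet-Hoeffding} is an equality; then $\int_{\R^2}(M-H)\,\d s\,\d t=0$ with a nonnegative integrand, so $H=M$ Lebesgue-a.e.\ and, by right-continuity, everywhere, i.e.\ $(U,V)$ is comonotonic. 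Since $e_{\T_1},e_{\T_2}$ are now \emph{strictly} increasing functions of $X,Y$, the representations $g_1,g_2$ are injective, so $X,Y$ are increasing functions of $U,V$ respectively and $(X,Y)$ inherits comonotonicity of $(U,V)$; equality in the left-hand inequality yields countermonotonicity by the same argument with $M$ replaced by $W$.

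The step I expect to demand the most care is the measure-theoretic bookkeeping around Hoeffding's formula: one must justify it for the a priori sign-changing, possibly non-integrable integrand $H-F_UF_V$ (handled by first treating the one-signed extremal couplings and then dominating), and one must carefully establish the auxiliary lemma that the order preservation in Definition~\ref{def:generalised error} yields a genuine increasing representation of each $e_{\T_i}$ that remains valid under the comonotonic and countermonotonic re-couplings of $X,Y$. An alternative route to (i) and (ii) is the Hardy--Littlewood/Fr\'echet rearrangement inequality for $\E[UV]$ at fixed marginals, but the equality analysis in (iii) is most transparent via the identity $H=M$ (resp.\ $H=W$) used above.
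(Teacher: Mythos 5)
Your argument is correct and takes essentially the same route as the paper's proof: pass to the generalised errors via \eqref{eq:cov identity}, combine Hoeffding's covariance formula with the pointwise Fr\'echet--Hoeffding bounds (using the one-signed extremal integrands to dominate the general one) for (i), and invert the strictly increasing error representations to transfer co-/countermonotonicity back to $(X,Y)$ for (iii). The only real difference is that where the paper delegates the strict inequalities in (ii) and the equality analysis in (iii) to Theorem 4(2) of \cite{Embrechts2002}, you prove these steps directly (the positive-Lebesgue-measure argument on $\{0<F_U<1\}\times\{0<F_V<1\}$, and $H=M$ a.e.\ hence everywhere by right-continuity), which is a sound, self-contained substitute.
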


\begin{remark} \label{rem:Perfect local dependence}
Prime examples where the assumption that the generalised errors are \emph{strictly} increasing functions of $X$ and $Y$, respectively, from part (iii) of Proposition \ref{prop:Frechet-Hoeffding} is violated are the local covariances discussed in and before Remark \ref{rem:local_covariances}, where the generalised errors are binary. 
Due to their local nature, they can and should not determine global properties such as co- and countermonotonicity. 
In fact, for them equality in \eqref{eq:Frechet-Hoeffding} holds under perfect \emph{local} dependence, that is, if the corresponding exceedance indicators (or equivalently the generalised errors) are perfectly dependent.
\end{remark}

\begin{example}[Perfect local dependence]
	\label{ex:perfect_local_dependence}
Suppose $X$ follows a uniform distribution on $[0,1]$ and let $Y = 1/2-X$ for $X\in[0,1/2]$ and $Y = 3/2 - X$ for $X\in(1/2,1]$. Then $Y$ is also uniformly distributed on $[0,1]$. 
The pair $(X,Y)$ is neither co- nor countermonotonic.
However, if $\T_1$ and $\T_2$ are the median and if we use the generalised errors induced by \eqref{eq:quantile id}, then $\big(e_{q_{1/2}}(X) , e_{q_{1/2}}(Y)\big)$ only attains the values $(-1/2,-1/2)$ and $(1/2, 1/2)$. 
Hence, the generalised errors are comonotonic, inducing perfect positive dependence locally around the medians such that $\Cov_{q_{1/2},q_{1/2}}(X,Y)
= \Cov_{q_{1/2},q_{1/2}}(X,Y'')$.
\end{example}

\subsection{Definition and properties}

Combining generalised covariance from Definition \ref{defn:generalised covariance} with the Fr\'echet--Hoeffding normalisation from Proposition \ref{prop:Frechet-Hoeffding} leads to generalised correlation.

\begin{definition}[Generalised correlation]
\label{defn:generalised correlation}
Let $T_1\colon\L_1\to\A_1\subseteq\R$, $T_2\colon\L_2\to\A_2\subseteq\R$  be two functionals and $e_{{T_1}}, e_{{T_2}}$ generalised errors for $T_1$ and $T_2$.
Let $\D$ be the set of random variables $(X,Y) \in L^0(\R^2)$ such that 
$e_{{T_1}}(X) e_{{T_2}}(Y')\in L^1(\R)$ and 
$e_{{T_1}}(X) e_{{T_2}}(Y'')\in L^1(\R)$, 
where $(X,Y')$ and $(X,Y'')$ are pairs with the same marginal distributions such that $(X,Y')$ is countermonotonic and $(X,Y'')$ is comonotonic.
If neither $e_{{T_1}}(X)$ nor $e_{{T_2}}(Y)$ are constant almost surely, the \emph{generalised correlation at $\T_1$ and $\T_2$ induced by $e_{{T_1}}$ and $e_{{T_2}}$}, or the \emph{$\T_1-\T_2$-correlation induced by $e_{{T_1}}$ and $e_{{T_2}}$}, 
is defined on $\D$ via
\begin{equation}
\label{eq:generalised cor}
\Cor_{\T_1, \T_2}(X,Y) :=
\begin{cases}
\frac{\Cov_{\T_1, \T_2}(X,Y)}{|\Cov_{\T_1, \T_2}(X,Y'\,)|}, &\text{if } \Cov_{\T_1, \T_2}(X,Y)<0, \\[0.5em]
\frac{\Cov_{\T_1, \T_2}(X,Y)}{|\Cov_{\T_1, \T_2}(X,Y''\,)|}, &\text{if } \Cov_{\T_1, \T_2}(X,Y)\ge0.
\end{cases}
\end{equation}
If one of the generalised errors is constant almost surely, so in particular if $X$ or $Y$ is constant, then
\[
\Cor_{\T_1, \T_2}(X,Y) :=0.
\]
\end{definition}

We summarise the most important properties of generalised correlation.

\begin{theorem}[Properties of generalised correlation]
\label{theorem:properties}
The generalised correlation at $\T_1$ and $\T_2$ induced by $e_{{T_1}}$ and $e_{{T_2}}$ satisfies the following properties.
\begin{enumerate}[(i)]
\item
Normalisation: $\Cor_{\T_1, \T_2}(X,Y) \in [-1,1]$.
\item
Independence implies nullity:
$\Cor_{\T_1, \T_2}(X,Y) =0$ if $X$ and $Y$ are independent.
\item
Perfect dependence:
Suppose $e_{T_1}(X)$ and $e_{T_2}(Y)$ are not constant almost surely.
\begin{enumerate}
\item
$\Cor_{\T_1, \T_2}(X,Y) =1(-1)$ if $X$ and $Y$ are comonotonic (countermonotonic). 
\item
If $e_{T_1}(X)$ is a strictly increasing function of $X$ and $e_{T_2}(Y)$ is a strictly increasing function of $Y$, then $\Cor_{\T_1, \T_2}(X,Y) =1(-1)$ implies that $X$ and $Y$ are comonotonic (countermonotonic).
\end{enumerate}
\item
Symmetry: It holds that 
	$\Cor_{\T_1, \T_2}(X,Y) = \Cor_{\T_2, \T_1}(Y,X)$.	
	In particular, if $T_1 = T_2$,
then $\Cor_{\T_1, \T_2}(X,Y) = \Cor_{\T_1, \T_2}(Y,X)$.
\end{enumerate}
\end{theorem}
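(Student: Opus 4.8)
The plan is to derive all four properties directly from the Fr\'echet--Hoeffding bounds of Proposition~\ref{prop:Frechet-Hoeffding}, the centredness of generalised errors, and the fact that $\Cov_{\T_1,\T_2}$ and $\Cor_{\T_1,\T_2}$ are law-determined. The last point I would first note follows from property~(ii) of Definition~\ref{def:generalised error}: an increasing generalised error equals, almost surely, a non-decreasing deterministic function of its argument, so the joint law of $\big(e_{T_1}(X),e_{T_2}(Y)\big)$ --- and hence $\Cov_{\T_1,\T_2}(X,Y)$ --- depends only on the joint law of $(X,Y)$. Throughout I would reduce to the case where neither $e_{T_1}(X)$ nor $e_{T_2}(Y)$ is constant a.s., since otherwise $\Cor_{\T_1,\T_2}(X,Y):=0$ and all claims are trivial. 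Under this reduction Proposition~\ref{prop:Frechet-Hoeffding}(ii) gives $\Cov_{\T_1,\T_2}(X,Y')<0<\Cov_{\T_1,\T_2}(X,Y'')$, so both denominators in \eqref{eq:generalised cor} are strictly positive and the correlation is well defined.

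For (i) I would split on the sign of $\Cov_{\T_1,\T_2}(X,Y)$: in the non-negative case the upper bound in \eqref{eq:Frechet-Hoeffding} gives $0\le\Cov_{\T_1,\T_2}(X,Y)\le\Cov_{\T_1,\T_2}(X,Y'')=|\Cov_{\T_1,\T_2}(X,Y'')|$, so the ratio lies in $[0,1]$, and the negative case is symmetric via the lower bound. For (ii) I would invoke the earlier proposition that independence makes the centred errors independent, so $\Cov_{\T_1,\T_2}(X,Y)=0$ and hence $\Cor_{\T_1,\T_2}(X,Y)=0$. For (iv) I would use symmetry of the expectation, $\Cov_{\T_1,\T_2}(X,Y)=\E\big[e_{T_1}(X)e_{T_2}(Y)\big]=\Cov_{\T_2,\T_1}(Y,X)$, together with the symmetry of the co-/countermonotonicity relations: the comonotonic coupling $(Y,\widetilde X'')$ of $(F_Y,F_X)$ entering $\Cor_{\T_2,\T_1}(Y,X)$ satisfies $(\widetilde X'',Y)\dist(X,Y'')$ by uniqueness of the comonotonic coupling, so law-determinedness identifies the two denominators, and likewise for the countermonotonic couplings; since the sign of the numerator is also preserved, the two branches of \eqref{eq:generalised cor} match, and $T_1=T_2$ is then a special instance.

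Part (iii) is where the real content lies. For (a), if $(X,Y)$ is comonotonic then $(X,Y)\dist(X,Y'')$ by uniqueness of the comonotonic coupling, so by law-determinedness and Proposition~\ref{prop:Frechet-Hoeffding}(ii) we get $\Cov_{\T_1,\T_2}(X,Y)=\Cov_{\T_1,\T_2}(X,Y'')>0$, and \eqref{eq:generalised cor} returns $\Cor_{\T_1,\T_2}(X,Y)=1$; countermonotonicity is handled symmetrically. For the converse under strict monotonicity of the errors, $\Cor_{\T_1,\T_2}(X,Y)=1$ forces $\Cov_{\T_1,\T_2}(X,Y)=\Cov_{\T_1,\T_2}(X,Y'')$, i.e.\ equality in the upper Fr\'echet--Hoeffding bound, so Proposition~\ref{prop:Frechet-Hoeffding}(iii) tells me $(X,Y)$ is co- or countermonotonic; I would then exclude countermonotonicity because it would give $\Cov_{\T_1,\T_2}(X,Y)=\Cov_{\T_1,\T_2}(X,Y')<0$, contradicting $\Cov_{\T_1,\T_2}(X,Y)>0$. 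I expect this exclusion step --- combining the sharpness statement of Proposition~\ref{prop:Frechet-Hoeffding}(iii) with the strict sign separation of part~(ii) to rule out the spurious extremal coupling --- to be the only genuinely delicate point; everything else is routine bookkeeping around the Fr\'echet--Hoeffding inequality, centredness, and law-determinedness.
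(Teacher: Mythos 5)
Your proof is correct and follows essentially the route the paper intends: the paper gives no separate proof of this theorem, treating it as an immediate consequence of Proposition~\ref{prop:Frechet-Hoeffding}, the nullity-under-independence proposition, and Definition~\ref{defn:generalised correlation}, which is exactly how you argue (including the sign-based exclusion of the countermonotonic case in (iii)(b) via the strict separation in Proposition~\ref{prop:Frechet-Hoeffding}(ii)). Your additional remarks on law-determinedness and uniqueness in law of the co-/countermonotonic couplings are consistent with the paper's standing assumptions and fill in the bookkeeping the paper leaves implicit.
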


Properties (i), (ii) and part (a) of (iii) are fundamental properties that every correlation-type dependence measure should fulfil. 
They ensure interpretability in that the measure takes the right values in the extreme cases of independence and perfect positive and negative dependence. Part (b) of (iii) is not always desirable, e.g., when it comes to local dependence measures, see Remark \ref{rem:Perfect local dependence}.


We would like to highlight the novelty of normalising covariances with Fr\'echet--Hoeffding bounds, which are sharp by construction. 
We are aware of similar constructions only 
in the context of a dependence measure for binary random variables \citep{cole1949} and to combat the non-attainability of Spearman's $\rho$ and Kendall's $\tau$
 in the discrete case 
 \citep{VandenhendeLambert2003, genest2007}.

\subsection{Examples}
\label{subsec:Examples of generalised correlations}

We are going to review the examples of generalised covariances from Subsection \ref{subsec:Examples of generalised covariances} and see how they translate into generalised correlations.
To that end, we will mainly focus on the normalisation terms, that is, the denominators in \eqref{eq:generalised cor}.
Throughout this section, let again $(X,Y)$, $(X,Y')$, $(X,Y'')$ be pairs with the same marginal distribution, where $(X,Y')$ is countermonotonic and $(X,Y'')$ is comonotonic.

\begin{example}[Mean correlation] \label{ex:Mean correlation}
We write $\MCor(X,Y):= \Cor_{\mu,\mu} (X,Y)$ for mean correlation. Due to Hoeffding's formula \cite[Lemma 7.27]{mcneil2015}, it holds that 
\begin{align*}
\Cov(X,Y')&= \iint \max \big(F_X(z_1) + F_Y(z_2) - 1,0\big) - F_X(z_1)F_Y(z_2)\d z_1 \mathrm{d} z_2, \\
\Cov(X,Y'')&= \iint \min \big(F_X(z_1), F_Y(z_2)\big) - F_X(z_1)F_Y(z_2)\d z_1 \mathrm{d} z_2.
\end{align*}
In particular, for the special case that the full range $[-1,1]$ is attainable by Pearson correlation, that is, if $X$ and $Y$ are of the same type and symmetric (Lemma \ref{lem:symmetric}), e.g., under bivariate normality, mean correlation coincides with Pearson correlation, and $|\Cov(X,Y')| = |\Cov(X,Y'')| = \sqrt{\Var(X)\Var(Y)}$, provided that $X,Y\in L^2(\R)$. Thus, indeed Pearson correlation arises as a special case of generalised correlation. Mean correlation, in turn, can be viewed as an improved version of Pearson correlation, which solves the attainability problem.
\end{example}

\begin{example}[Expectile correlation]
$\ECov_{\tau,\eta}(X,Y')$ and $\ECov_{\tau,\eta}(X,Y'')$ can be calculated via \eqref{eq:cov identity}, i.e., the representation of generalised covariance as covariance of generalised errors, and again Hoeffding's formula.
\end{example}

It follows from Proposition \ref{prop:properties_expectile_cor} that expectile correlation and hence in particular mean correlation is invariant under linear transformations.
\begin{proposition}
	For all $\tau,\eta\in(0,1)$, for all $X,Y\in L^1(\R)$ such that $XY', XY'\in L^1(\R)$, for all $c\in\R$ and $\lambda>0$ it holds that
	\begin{align*}
	\ECor_{\tau,\eta}(\lambda X + c,Y) &= \ECor_{\tau,\eta}(X,\lambda Y + c) = \ECor_{\tau,\eta}(X,Y).
	\end{align*}
\end{proposition}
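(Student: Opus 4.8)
The plan is to derive the invariance of $\ECor_{\tau,\eta}$ directly from the invariance of $\ECov_{\tau,\eta}$ established in Proposition \ref{prop:properties_expectile_cor}, combined with the structure of the Fréchet--Hoeffding normalisation in Definition \ref{defn:generalised correlation}. The key observation is that affine transformations with positive slope act coherently on the numerator and on both candidate denominators of \eqref{eq:generalised cor}, so the ratio is unchanged.

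First I would fix $\tau,\eta\in(0,1)$, $X,Y\in L^1(\R)$ with the stated integrability, $c\in\R$, $\lambda>0$, and set $\tilde X := \lambda X + c$. I would then note that $(\tilde X, Y)$ has the same copula as $(X,Y)$ since $x\mapsto \lambda x + c$ is strictly increasing; consequently, if $(X,Y')$ is countermonotonic and $(X,Y'')$ is comonotonic with the marginals of $X$ and $Y$, then $(\tilde X, Y')$ and $(\tilde X, Y'')$ are, respectively, a countermonotonic and a comonotonic pair with the marginals of $\tilde X$ and $Y$ (this uses that co-/countermonotonicity is preserved under strictly increasing marginal transformations). Hence the denominators appearing in $\Cor_{\mu_\tau,\mu_\eta}(\tilde X, Y)$ are exactly $|\ECov_{\tau,\eta}(\tilde X, Y')|$ and $|\ECov_{\tau,\eta}(\tilde X, Y'')|$, evaluated on the transformed comparison pairs.

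Next I would apply Proposition \ref{prop:properties_expectile_cor}: translation invariance and positive homogeneity give $\ECov_{\tau,\eta}(\lambda X + c, Z) = \lambda\,\ECov_{\tau,\eta}(X, Z)$ for $Z\in\{Y, Y', Y''\}$. Therefore the numerator $\ECov_{\tau,\eta}(\tilde X, Y) = \lambda\,\ECov_{\tau,\eta}(X,Y)$ and each denominator scales by the same factor $\lambda>0$; in particular the sign of the covariance is unchanged, so the same case of the piecewise definition \eqref{eq:generalised cor} applies before and after the transformation. The factor $\lambda$ cancels, yielding $\ECor_{\tau,\eta}(\lambda X + c, Y) = \ECor_{\tau,\eta}(X,Y)$. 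The degenerate case where $e_{\mu_\tau}(X)$ is constant is handled separately: a strictly increasing affine map does not change whether the error is constant (the expectile error is itself translation invariant and positively homogeneous), so both sides are $0$ by the convention in Definition \ref{defn:generalised correlation}. The identity in the second argument follows by the symmetry property, Theorem \ref{theorem:properties}(iv), applied to the already-proven first-argument statement, or by an identical argument.

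The only mild subtlety — and the step I would be most careful about — is the claim that the comparison pairs transform as asserted, i.e.\ that replacing $X$ by $\lambda X + c$ in a co-/countermonotonic pair keeps it co-/countermonotonic with the new marginals; this is routine from the definition via $(\nu_1(Z),\nu_2(Z))$ but should be stated explicitly. Everything else is bookkeeping: tracking the sign of the covariance through the transformation so that the correct branch of \eqref{eq:generalised cor} is used, and noting $\lambda>0$ guarantees $|\lambda\,\ECov_{\tau,\eta}(X,Y')| = \lambda\,|\ECov_{\tau,\eta}(X,Y')|$ so the cancellation is clean.
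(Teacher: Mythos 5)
Your proof is correct and follows exactly the route the paper intends: the paper states this result as an immediate consequence of Proposition \ref{prop:properties_expectile_cor} together with the structure of the Fr\'echet--Hoeffding normalisation in Definition \ref{defn:generalised correlation}, which is precisely what you carry out (including the preservation of co-/countermonotonic comparison pairs under strictly increasing affine maps, the common factor $\lambda$ cancelling in numerator and denominator, and symmetry for the second argument). No gaps.
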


\begin{example}[Threshold correlation]
\label{example:Threshold correlation}
For the threshold correlation, the classical Fr\'echet--Hoeffding bounds for joint CDFs arise as normalisations, that is,
for $a,b\in\R$
\begin{align*}
\TCov_{a,b}(X,Y')
&= \max\big(F_X(a) + F_Y(b) - 1,0\big) - F_X(a)F_Y(b),\\
\TCov_{a,b}(X,Y'')
&= \min\big(F_X(a), F_Y(b)\big) - F_X(a)F_Y(b).
\end{align*}
\end{example}

\begin{example}[Quantile correlation]
\label{example:Quantile correlation}
For $\alpha,\beta\in(0,1)$ and $X \in L_\alpha$, $Y\in L_\beta$ (e.g., when $F_X$ and $F_Y$ are continuous), 
the Fr\'echet--Hoeffding bounds for copulas arise as normalising terms, that is, 
for $\alpha,\beta\in(0,1)$
\begin{align*}
\QCov_{\alpha,\beta}(X,Y')
= \max (\alpha + \beta - 1,0 ) - \alpha\beta, 
\qquad
\QCov_{\alpha,\beta}(X,Y'')
= \min ( \alpha, \beta ) - \alpha\beta.
\end{align*}
If $F_X$ and $F_Y$ possibly have jumps at their respective $\alpha$- and $\beta$-quantiles, applying the Fr\'echet--Hoeffding bounds to \eqref{eq:Quantile correction copula} yields
\begin{align*}
\QCov_{\alpha,\beta}(X,Y')
&= \max \big(F_X(q_\alpha(X)) + F_Y(q_\beta(Y)) - 1,0 \big) - F_X(q_\alpha(X))F_Y(q_\beta(Y)), 
\\
\QCov_{\alpha,\beta}(X,Y'')
&= \min \big( F_X(q_\alpha(X)),F_Y(q_\beta(Y)) \big) - F_X(q_\alpha(X))F_Y(q_\beta(Y)).
\end{align*}
\end{example}

Quantile correlation is invariant with respect to strictly increasing transformations, which follows from Proposition \ref{properties_quantile_covariance}. 
Thus, quantile correlation belongs to the family of rank correlations.
\begin{proposition}
\label{prop:rank correlations}
	For all $\alpha,\beta\in(0,1)$, for all $X,Y\in L^0(\R)$, and for all strictly increasing transformations $g\colon\R\to\R$ it holds that
	\begin{align*}
	\QCor_{\alpha,\beta}\big(g(X),Y\big) &= \QCor_{\alpha,\beta}\big(X,g(Y)\big)  = \QCor_{\alpha,\beta}(X,Y) .
	\end{align*}
\end{proposition}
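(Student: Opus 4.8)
The plan is to reduce the claim to the already established invariance of quantile \emph{covariance}, Proposition \ref{properties_quantile_covariance}, together with the defining formula \eqref{eq:generalised cor} for generalised correlation in Definition \ref{defn:generalised correlation}. The one fact I will lean on is the pointwise invariance of the quantile error recorded in the text preceding Proposition \ref{properties_quantile_covariance}: for every $X\in L^0(\R)$ and every strictly increasing $g\colon\R\to\R$ one has $e_{q_\alpha}(g(X)) = e_{q_\alpha}(X)$ \emph{as random variables}. Consequently $\QCov_{\alpha,\beta}(g(X),Y) = \E\big[e_{q_\alpha}(g(X))\,e_{q_\beta}(Y)\big] = \E\big[e_{q_\alpha}(X)\,e_{q_\beta}(Y)\big] = \QCov_{\alpha,\beta}(X,Y)$, so the numerator in \eqref{eq:generalised cor} is unchanged under $X\mapsto g(X)$; in particular its sign is unchanged, hence the same branch of the piecewise definition is active for $(g(X),Y)$ as for $(X,Y)$.

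It then remains to verify that the relevant normalising denominator is unchanged. Fix pairs $(X,Y')$ and $(X,Y'')$ with marginals $F_X, F_Y$ such that $(X,Y')$ is countermonotonic and $(X,Y'')$ is comonotonic. Writing these couplings as $(\nu_1(Z),\nu_2(Z))$ with $\nu_1$ increasing and $\nu_2$ increasing (resp.\ decreasing) and composing with $g$, we see that $g\circ\nu_1$ is again increasing; therefore $(g(X),Y'')$ is comonotonic and $(g(X),Y')$ is countermonotonic, now with marginals $F_{g(X)}, F_Y$ — precisely the extremal couplings required by Definition \ref{defn:generalised correlation} for the pair $(g(X),Y)$. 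Applying $e_{q_\alpha}(g(X)) = e_{q_\alpha}(X)$ once more (equivalently, Proposition \ref{properties_quantile_covariance} with second argument $Y'$ or $Y''$) gives $\QCov_{\alpha,\beta}(g(X),Y') = \QCov_{\alpha,\beta}(X,Y')$ and $\QCov_{\alpha,\beta}(g(X),Y'') = \QCov_{\alpha,\beta}(X,Y'')$. Since both the numerator and the denominator selected by the (unchanged) sign of the covariance coincide for $(g(X),Y)$ and $(X,Y)$, we conclude $\QCor_{\alpha,\beta}(g(X),Y) = \QCor_{\alpha,\beta}(X,Y)$. The degenerate case is immediate: because $e_{q_\alpha}(g(X)) = e_{q_\alpha}(X)$ as random variables, a quantile error is almost surely constant for $(g(X),Y)$ if and only if it is for $(X,Y)$, and in either situation the correlation is defined to be $0$.

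The second identity $\QCor_{\alpha,\beta}(X,g(Y)) = \QCor_{\alpha,\beta}(X,Y)$ follows by the same argument with the two coordinates interchanged, using $e_{q_\beta}(g(Y)) = e_{q_\beta}(Y)$ and the observation that $(X,g(Y'))$ and $(X,g(Y''))$ stay countermonotonic, resp.\ comonotonic, with marginals $F_X, F_{g(Y)}$.

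I do not expect a genuine obstacle here; the only steps that merit an explicit line are that strict monotonicity of $g$ preserves co-\,/\,countermonotonicity of the Fr\'echet--Hoeffding couplings, and that the sign of $\QCov_{\alpha,\beta}$ — hence the active branch of \eqref{eq:generalised cor} — is preserved, both of which are straightforward from the invariance of the quantile error.
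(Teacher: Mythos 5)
Your proof is correct and follows essentially the same route as the paper: the paper obtains Proposition \ref{prop:rank correlations} directly from Proposition \ref{properties_quantile_covariance} (invariance of the quantile error, hence of the quantile covariance), with the normalising terms unchanged because they depend only on $F_X\big(q_\alpha(X)\big)$ and $F_Y\big(q_\beta(Y)\big)$ as in Example \ref{example:Quantile correlation}. Your extra verifications --- that $g$ preserves the co-/countermonotonic couplings, that the active branch of \eqref{eq:generalised cor} is unchanged because the sign of the covariance is, and the degenerate constant-error case --- merely make explicit details the paper leaves implicit, not a different argument.
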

\begin{example}[Median correlation]
	Median correlation, $\QCor_{0.5,0.5}$, arises as a special case of quantile correlation. 
	What is particular about it is that, for continuous marginals $F_X$ and $F_Y$, 
	 the generalised errors $e_{q_{0.5}}(X)$ and $e_{q_{0.5}}(Y)$ are symmetric and of the same type. 
	 Hence, they fulfil the conditions of Lemma \ref{lem:symmetric}, implying that the Cauchy--Schwarz and the Fr\'echet--Hoeffding normalisation coincide, $\QCov_{\alpha,\beta}(X,Y')=-\QCov_{\alpha,\beta}(X,Y'')=\sqrt{\Var(e_{q_{0.5}}(X))\Var(e_{q_{0.5}}(Y))}=1/4$. Indeed, then median correlation is equal to \citeauthor{Blomqvist1950}'s (\citeyear{Blomqvist1950}) $\beta$, 
defined as	 $\beta(X,Y) := 4 \P (X \leq q_{0.5}(X), Y \leq q_{0.5}(Y)) - 1$, which is also sometimes referred to as median correlation in the literature.
\end{example}

Quantile correlation does not only generalise median correlation. When moving from the center to the tails, that is, considering the limit of $\QCor_{\alpha,\alpha}$ for $\alpha \to 0$ or $\alpha \to 1$, a quantity closely related to the well-known coefficient of tail dependence shows up which is discussed in Section \ref{sec:tail}.

\begin{example}[Quantile-mean correlation]
Again, $\Cov_{q_\alpha,\mu}(X,Y')$ and $\Cov_{q_\alpha,\mu}(X,Y')$ can be calculated via \eqref{eq:cov identity} and Hoeffding's formula. If $X$ and $Y$ have continuous and strictly increasing marginal distributions, they take a particularly convenient form:
\begin{align*}
\Cov_{q_\alpha,\mu}(X,Y') = \E\big[(\alpha - \one\{X \le q_\alpha(X)\})(Y' - \mu(Y'))\big] = \alpha\big(\mu(Y) - \ES_{1-\alpha}^+(Y)\big)<0,
\end{align*}
where $\ES_{1-\alpha}^+(Y)$ is the upper expected shortfall $\ES_{1-\alpha}^+(Y) := \E[Y|Y> q_{1-\alpha}(Y)] = \frac{1}{\alpha}\E[Y \one\{Y > q_{1-\alpha}(Y)\}]$,
and
\begin{align*}
\Cov_{q_\alpha,\mu}(X,Y'') = \E\big[(\alpha - \one\{X \le q_\alpha(X)\})(Y'' - \mu(Y''))\big] = \alpha\big(\mu(Y) - \ES_\alpha^-(Y)\big) >0,
\end{align*}
where $\ES_\alpha^-(Y)$ is the lower expected shortfall $\ES_\alpha^-(Y) := \E[Y|Y\le q_\alpha(Y)] = \frac{1}{\alpha}\E[Y \one\{Y \le q_\alpha(Y)\}]$.
\end{example}

There are some measures in the literature related to quantile, quantile-mean and threshold covariances and correlations. \cite{Linton2007} employ a closely related quantity in their quantilogram, namely our quantile covariance (see Example \ref{exmp:quantile cov}), but normalised with the Cauchy--Schwarz normalisation, see also \cite{Han2016}. 
Similarly, what \cite{Li2015} call quantile correlation amounts to our quantile-mean covariance (see Example \ref{exmp:quantile-mean}), but again normalised with the Cauchy--Schwarz normalisation. 
Further, so-called indicator covariances from spatial statistics \citep{Dubrule2017} are closely related to our threshold covariances (see Example \ref{example:Threshold covariance}). 

As discussed in Remark \ref{rem:local_covariances}, threshold and quantile covariance can be seen as local measures of dependence around a certain point. Accordingly, we call threshold and quantile correlation local correlations. Surprisingly, such local dependence measures have hardly been touched upon in the literature. Noteworthy exceptions are the local dependence function of \cite{Holland1987}, see also \cite{Jones1996}, and the local Gaussian correlation of \cite{Tjostheim2013}. 

\section{Distributional covariances and correlations}
\label{sec:local_distributional}

\subsection{Definition and properties}

The generalised covariances and correlations presented so far  measure the dependence of two random variables $X$, $Y$ around the two functionals $T_1(X)$, $T_2(Y)$. 
As such, they focus on a certain aspect of the dependence structure of $X$ and $Y$. 
This section proposes measures that uncover the \emph{entire} dependence structure of $X$ and $Y$. 
The idea behind those measures is to consider all the local information contained in the whole families of local covariances (correlations) jointly (see Remark \ref{rem:local_covariances}), leading to dependence measures that are functions in two arguments -- the thresholds $(a,b) \in \mathbb{R}^2$ or quantile levels $(\alpha,\beta) \in (0,1)^2$. 

An alternative way to arrive at the same measures is to approach them from the angle of generalised covariances and correlations by not considering point-valued functionals $T_1(X)$ and $T_2(Y)$ as in Sections \ref{sec:generalised} and \ref{sec:generalised correlations}, but the entire CDF or the quantile function themselves.
An $L^0(\R)$-identification function for the identity or CDF-functional is the function-valued map 
$v_{\CDF}(F,x) = \big(F(a) - \one\{x\le a\}\big)_{a\in\R}$, leading to the function-valued generalised error $e_{\CDF}(X) = \big(F(a) - \one\{X\le a\}\big)_{a\in\R}$. 
Alternatively, one may consider the quantile function (QF) functional, mapping a CDF, $F$, to its generalised inverse, $F^{-1}$. 
On the class of random variables with a continuous CDF, denoted by $L_{\text{con}}$, we have the $L_{\text{con}}$-identification function
$v_{\QF}(F^{-1},x) = \big(\alpha - \one \{x \le F^{-1}(\alpha)\}\big)_{\alpha\in(0,1)}$. 
The modification discussed in Example \ref{example:discontinuous marginals} for the general case leads to the generalised error $e_{\QF}(X) := \big( F_X( q_\alpha (X) ) - \one\{X \le q_\alpha(X)\} \big)_{\alpha\in(0,1)}$.
We can construct generalised covariances (and correlations) from $e_{\CDF}$ and $e_{\QF}$ via an outer product ansatz, generalising \eqref{eq:generalised cov}. 
This reasoning explains why we call the resulting function-valued dependence measures CDF and quantile function covariance (correlation), and subsume both under the name distributional covariances (correlations).

\begin{definition}[Distributional covariances and correlations]
For any $X,Y\in L^0(\R)$, the \emph{CDF covariance} and the \emph{CDF correlation} are 
\begin{align*}
&\CDFCov(X,Y)\colon \R^2 \to \R, &&(a,b) \mapsto \TCov_{a,b}(X,Y),\\
&\CDFCor(X,Y)\colon \R^2 \to [-1,1], &&(a,b) \mapsto \TCor_{a,b}(X,Y).
\end{align*}
Likewise, the \emph{quantile function covariance} and the \emph{quantile function correlation} are 
\begin{align*}
&\QFCov(X,Y)\colon (0,1)^2 \to \R, &&(\alpha, \beta) \mapsto \QCov_{\alpha, \beta}(X,Y),\\
&\QFCor(X,Y)\colon (0,1)^2 \to [-1,1], &&(\alpha, \beta) \mapsto \QCor_{\alpha, \beta}(X,Y).
\end{align*}
\end{definition}

Just as in univariate statistics when characterising the distribution of a single random variable, we may either take the perspective of the quantile function or of the CDF when characterising dependence between two random variables, as there are two corresponding families of distributional dependence measures. 
Indeed, the distributional covariances and correlations characterise the dependence structure between $X$ and $Y$ fully: Given the marginals $F_X$ and $F_Y$, it is possible to recover the joint CDF $F_{X,Y}$ or the copula $C_{X,Y}$, which contain the information on the full dependence structure, from $\CDFCor$ or $\QFCor$ (as well as $\CDFCov$ or $\QFCov$). 
This is clear from the representations of $\TCov$ in terms of $F_{X,Y}$ and the marginals in \eqref{eq:Threshold cov} and $\QCov$ in terms of the $C_{X,Y}$  and the marginals in \eqref{eq:Quantile correction copula} as well as the normalisations, which only depend on the marginals, see Examples \ref{example:Threshold correlation} and \ref{example:Quantile correlation}. If $F_X$ and $F_Y$ are continuous, $\QFCor$ and $\QFCov$ are independent of the marginals, and by \eqref{eq:quantile copula} and again the normalisations from Example \ref{example:Quantile correlation}, there is even a one-to-one mapping between $\QFCov$ and $C_{X,Y}$ (as well as $\QFCor$ and $C_{X,Y}$). 
While of course the joint CDF and copula contain the information on the dependence structure as well, we emphasise that particularly the normalised quantities, that is, the distributional correlations, have the great advantage that they are really dependence measures. 
Thus, they uncover the full dependence structure, and regions of stronger and weaker dependence can be identified relatively easily, see Subsection \ref{subsec:examples distr corr} and Section \ref{subsec:data examples} for theoretical and empirical examples, respectively. 

Distributional correlations are able to characterise the limiting cases of independence and perfect positive and negative dependence properly. In particular, nullity of any of the distributional covariances or correlations implies independence (Proposition \ref{prop:nullity}). Further, unity (negative unity) of distributional correlations implies perfect positive (negative) dependence (Corollary \ref{cor:distributional properties}).
Additionally, $\QFCor$ is invariant to strictly increasing transformations.

\begin{proposition}\label{prop:nullity}
	For any $X,Y\in L^0(\R)$ and for any of the four distributional dependence measures $\mathrm{D} \in \{\CDFCov, \CDFCor, \QFCov, \QFCor\}$ it holds that $X$ and $Y$ are independent if and only if $\mathrm D(X,Y)$ vanishes identically. 
\end{proposition}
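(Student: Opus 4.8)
\emph{Strategy.} The plan is to prove the equivalence first for the two covariances $\CDFCov$ and $\QFCov$, and then deduce it for the correlations $\CDFCor$ and $\QFCor$ by showing that a distributional correlation vanishes identically precisely when the covariance generating it does. Throughout I would use that, for the threshold and quantile generalised errors, all random variables involved are $[-1,1]$-valued, so every covariance and every Fr\'echet--Hoeffding normalisation below is automatically finite and the domain of all four objects is $L^0(\R^2)$.

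\emph{Independence $\Rightarrow$ nullity.} For the covariances this is immediate from the earlier fact that independence of $X$ and $Y$ forces $\Cov_{T_1,T_2}(X,Y)=0$. For the correlations I would argue pointwise: at a given argument, either one of the two generalised errors is almost surely constant, in which case the correlation is $0$ directly by Definition~\ref{defn:generalised correlation}, or neither is, in which case the normalising denominator in \eqref{eq:generalised cor} is finite and strictly nonzero by Proposition~\ref{prop:Frechet-Hoeffding}(ii), so the correlation equals the vanishing covariance over a nonzero number.

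\emph{Nullity $\Rightarrow$ independence for the covariances.} For $\CDFCov$ this is essentially a tautology: by \eqref{eq:Threshold cov}, $\CDFCov(X,Y)\equiv 0$ means $F_{X,Y}(a,b)=F_X(a)F_Y(b)$ for all $(a,b)\in\R^2$, which is independence. For $\QFCov$ I would invoke the copula representation \eqref{eq:Quantile correction copula} with a Sklar copula $C_{X,Y}$ of $(X,Y)$, so $C_{X,Y}(F_X(a),F_Y(b))=F_{X,Y}(a,b)$ for every $a,b$. The one auxiliary lemma I need is: for any $a$ with $0<F_X(a)<1$ the level $\alpha:=F_X(a)\in(0,1)$ satisfies $F_X(q_\alpha(X))=F_X(a)$ --- this follows since $q_\alpha(X)=\inf\{x:F_X(x)\ge F_X(a)\}\le a$ gives $F_X(q_\alpha(X))\le F_X(a)$ by monotonicity while the defining infimum together with right-continuity of $F_X$ gives $F_X(q_\alpha(X))\ge F_X(a)$. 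Feeding $\alpha=F_X(a)$, $\beta=F_Y(b)$ into $\QFCov(X,Y)\equiv 0$ then yields $F_{X,Y}(a,b)=F_X(a)F_Y(b)$ for all $(a,b)$ with $F_X(a),F_Y(b)\in(0,1)$, and when $F_X(a)\in\{0,1\}$ or $F_Y(b)\in\{0,1\}$ the same identity holds trivially because $\{X\le a,Y\le b\}$ then coincides almost surely with $\emptyset$ or with $\{Y\le b\}$. Hence $X$ and $Y$ are independent.

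\emph{From correlations to covariances, and the main obstacle.} Finally I would show $\CDFCor(X,Y)\equiv 0\iff\CDFCov(X,Y)\equiv 0$, and likewise for the quantile function versions, which together with the previous step finishes the proof. At any argument where neither generalised error is constant the correlation is the covariance divided by a finite strictly positive normalisation, so the two vanish together there. At an argument where one error is constant --- for thresholds this means $F_X(a)\in\{0,1\}$ or $F_Y(b)\in\{0,1\}$, for quantiles $F_X(q_\alpha(X))=1$ or $F_Y(q_\beta(Y))=1$ since $F_X(q_\alpha(X))\ge\alpha>0$ as noted in Remark~\ref{rem:last jump 1} --- the correlation is $0$ by definition, and a short computation shows the covariance is also $0$ there (using $F_{X,Y}(a,b)=F_X(a)F_Y(b)$ in these degenerate threshold cases, respectively the copula boundary identity $C_{X,Y}(1,v)=v$ in the quantile cases). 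I expect this last bookkeeping to be the main obstacle: one must carefully verify that at exactly those arguments where the Fr\'echet--Hoeffding normalisation degenerates the numerator degenerates too, so that nothing is lost when passing from the covariances to the normalised correlations; the quantile--CDF identity above is the only genuinely non-cosmetic ingredient.
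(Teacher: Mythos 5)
Your proposal is correct and follows essentially the same route as the paper: nullity of $\CDFCov$ is equivalent to independence directly via \eqref{eq:Threshold cov}, nullity of $\QFCov$ is handled through the copula representation \eqref{eq:Quantile correction copula} together with the observation that every value of $F_X$ in $(0,1)$ is hit by $\alpha\mapsto F_X(q_\alpha(X))$ (your lemma $F_X(q_{F_X(a)}(X))=F_X(a)$ is exactly the paper's "superset of $\mathrm{range}(F_X)\setminus\{0,1\}$" claim), and the correlations are reduced to the covariances via the strictly positive Fr\'echet--Hoeffding normalisation. You merely spell out the boundary and degenerate cases that the paper dispatches with "immediate" and "likewise".
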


\begin{remark}\label{rem:last jump 2}
	The functions $\CDFCov, \CDFCor, \QFCov$ and $\QFCor$ are naturally constant between jumps of the joint CDF $F_{X,Y}$, see \eqref{eq:Threshold cov} and \eqref{eq:Quantile correction copula}. 
	Furthermore, along the lines of Remark \ref{rem:last jump 1}, the generalised errors are constant for values $a\ge \esssup(X)$ and $\alpha$ such that $q_\alpha(X)=q_1(X) = \esssup(X)$. 
	Thus, $\CDFCov$ and $\CDFCor$ are naturally only interesting on the restricted domain $[\essinf(X),\esssup(X)) \times [\essinf(Y),\esssup(Y))$ and $\QFCov$ and $\QFCor$ on $[F_X(\essinf(X)), F_X(\esssup(X))) \times [F_Y(\essinf(Y)), F_Y(\esssup(Y)))$.
\end{remark}

\begin{corollary}\label{cor:distributional properties}
	For any $X,Y \in L^0(\mathbb{R})$ and any of the two distributional correlations $\DCor \in \{\QFCor,\CDFCor\}$ it holds that
	\begin{enumerate}[(i)]
		\item
		Normalisation: $\DCor(X,Y) \in [-1,1]$.
		\item
		Perfect dependence:
		$\CDFCor(X,Y) =1(-1)$ on $[\essinf(X),\esssup(X)) \times [\essinf(Y),\esssup(Y))$ if and only if $X$ and $Y$ are comonotonic (countermonotonic). \\
		$\QFCor(X,Y) =1(-1)$ on $[F_X(\essinf(X)), F_X(\esssup(X))) \times [F_Y(\essinf(Y)), F_Y(\esssup(Y)))$ if and only if $X$ and $Y$ are comonotonic (countermonotonic).
	\end{enumerate}
\end{corollary}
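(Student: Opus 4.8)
The plan is to derive both assertions from results already established for the pointwise (threshold and quantile) correlations, principally Theorem \ref{theorem:properties} and Proposition \ref{prop:Frechet-Hoeffding}, together with the one-to-one correspondence between the distributional correlations and the joint CDF / copula noted in the text. Part (i) is immediate: for every fixed $(a,b)$ or $(\alpha,\beta)$, Theorem \ref{theorem:properties}(i) gives $\TCor_{a,b}(X,Y)\in[-1,1]$ and $\QCor_{\alpha,\beta}(X,Y)\in[-1,1]$, so the function-valued objects $\CDFCor(X,Y)$ and $\QFCor(X,Y)$ take values in $[-1,1]$ by definition.

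For part (ii) the ``if'' direction is also quick. If $X$ and $Y$ are comonotonic (countermonotonic), then so is the pair $(X,Y)$ when paired with its own marginals, i.e.\ $(X,Y)\dist(X,Y'')$ (resp.\ $(X,Y')$) in the notation of Proposition \ref{prop:Frechet-Hoeffding}; hence $\TCov_{a,b}(X,Y)=\TCov_{a,b}(X,Y'')$ for all $(a,b)$, and whenever the threshold error $e_a(X)=F_X(a)-\one\{X\le a\}$ and $e_b(Y)$ are both non-constant — which holds precisely on $[\essinf(X),\esssup(X))\times[\essinf(Y),\esssup(Y))$ — Proposition \ref{prop:Frechet-Hoeffding}(ii) gives $\TCov_{a,b}(X,Y'')>0$, so the ratio defining $\TCor_{a,b}$ equals $1$. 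On the complement of that rectangle one of the errors is constant and $\TCov_{a,b}=0=\TCov_{a,b}(X,Y'')$, which is why the claim is phrased on the open-ended rectangle. The countermonotonic case is identical with $Y''$ replaced by $Y'$ and the sign flipped, and the $\QFCor$ statement follows the same way using the quantile errors and the fact that on $[F_X(\essinf(X)),F_X(\esssup(X)))\times[F_Y(\essinf(Y)),F_Y(\esssup(Y)))$ the quantile errors $e_{q_\alpha}(X)$, $e_{q_\beta}(Y)$ are non-constant (Remark \ref{rem:last jump 2}).

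The substantive direction is ``only if'': assuming $\CDFCor(X,Y)\equiv 1$ on the rectangle, conclude comonotonicity. Here I would \emph{not} invoke Proposition \ref{prop:Frechet-Hoeffding}(iii), since the threshold/quantile errors are binary and hence not strictly increasing functions of $X$ and $Y$ (cf.\ Remark \ref{rem:Perfect local dependence}); instead I use the explicit normalisers from Example \ref{example:Threshold correlation}. The hypothesis $\TCor_{a,b}(X,Y)=1$ for all $(a,b)$ in the rectangle means, by \eqref{eq:generalised cor} and the fact that $\TCov_{a,b}(X,Y'')\ge 0$, that $\TCov_{a,b}(X,Y)=\TCov_{a,b}(X,Y'')$ for every such $(a,b)$; by the representation in \eqref{eq:Threshold cov} this reads $F_{X,Y}(a,b)-F_X(a)F_Y(b)=\min(F_X(a),F_Y(b))-F_X(a)F_Y(b)$, i.e.\ $F_{X,Y}(a,b)=\min(F_X(a),F_Y(b))$ on the rectangle; by right-continuity and monotonicity of CDFs this extends to all $(a,b)\in\R^2$, so $F_{X,Y}$ is the upper Fr\'echet--Hoeffding bound, which is well known to characterise comonotonicity (e.g.\ \citealp[Thm.\ 7.30]{mcneil2015}). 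The countermonotonic/negative-unity case is symmetric, and the $\QFCor$ version is obtained either by the same computation at the copula level via \eqref{eq:quantile copula}/\eqref{eq:Quantile correction copula} and the normalisers in Example \ref{example:Quantile correlation}, or simply by transporting the $\CDFCor$ statement through the identity $\QCor_{\alpha,\beta}(X,Y)=\TCor_{q_\alpha(X),q_\beta(Y)}(X,Y)$ from Remark \ref{rem:local_covariances}. The main obstacle, and the only point requiring genuine care, is the bookkeeping on the boundary of the rectangles — making sure the excluded ``last jump'' values of $a,b$ (resp.\ $\alpha,\beta$) do not obstruct the passage from ``$F_{X,Y}=\min(F_X,F_Y)$ on the open-ended rectangle'' to ``everywhere'', which is handled by the right-continuity argument together with Remarks \ref{rem:last jump 1} and \ref{rem:last jump 2}.
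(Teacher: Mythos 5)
Your proposal is correct and follows essentially the same route as the paper: part (i) from Theorem \ref{theorem:properties}(i), the ``if'' direction of (ii) from the perfect-dependence property (Theorem \ref{theorem:properties}(iii)), and the ``only if'' direction by reading off $F_{X,Y}=\min(F_X,F_Y)$ (resp.\ the lower bound) from the representations \eqref{eq:Threshold cov} and \eqref{eq:Quantile correction copula} together with the Fr\'echet--Hoeffding normalisers, and then invoking the characterisation of co-/countermonotonicity via attainment of the Fr\'echet--Hoeffding bounds. Your additional care about not using Proposition \ref{prop:Frechet-Hoeffding}(iii) for binary errors and about extending the identity beyond the rectangle by right-continuity only makes explicit what the paper's terser proof leaves implicit.
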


\subsection{Examples}
\label{subsec:examples distr corr}

\begin{figure}
	\centering
	\includegraphics[width=1\linewidth]{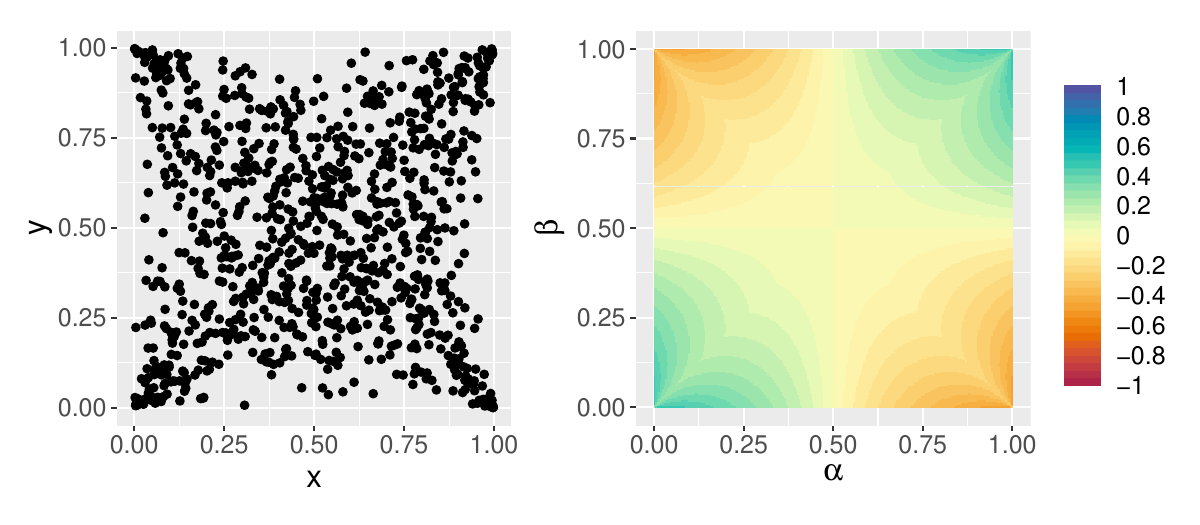}
	\caption{Plot of $\QFCor$ for a bivariate Cauchy copula with Spearman's $\rho$ equal to 0 (right) and scatter plot of 1000 random draws from it (left)}
	\label{fig:qfcorcauchy}
\end{figure}

Figures \ref{fig:qfcorcauchy} and \ref{fig:qfcorvarious} present plots of quantile function correlations for classical examples of bivariate copulas, respectively. 
Figure \ref{fig:qfcorcauchy} depicts the quantile function correlation of a bivariate Cauchy copula with a Spearman correlation of 0 alongside a scatter plot of 1000 random draws from this copula. 
While joint CDFs or copulas and their respective densities are often difficult to interpret graphically, such a scatter plot has usually been considered the best, yet informal, tool to depict and understand the dependence structure. 
Quantile function correlation provides a formal tool that uncovers the full dependence structure: 
Even though $X$ and $Y$ have a Spearman correlation of 0, there is quite a strong dependence in the tails. 
In the first and third quadrants the dependence is positive, meaning that larger values (exceedances of certain quantiles) of $X$ are associated with larger values (exceedances of certain quantiles) of $Y$, while in the second and fourth quadrants it is negative, meaning that larger values (exceedances of certain quantiles) of $X$ are associated with smaller values (falling short of certain quantiles) of $Y$. 
Actually, Spearman's $\rho$ is equal to a properly normalised Lebesgue integral over $\QFCov$ (see Example \ref{example:Spearman_correlation}) and thus the positive and negative dependence here cancels out when summarising the full dependence structure represented by $\QFCor$ in a single number. 

\begin{figure}[ht]
	\centering
	\includegraphics[width=1\linewidth]{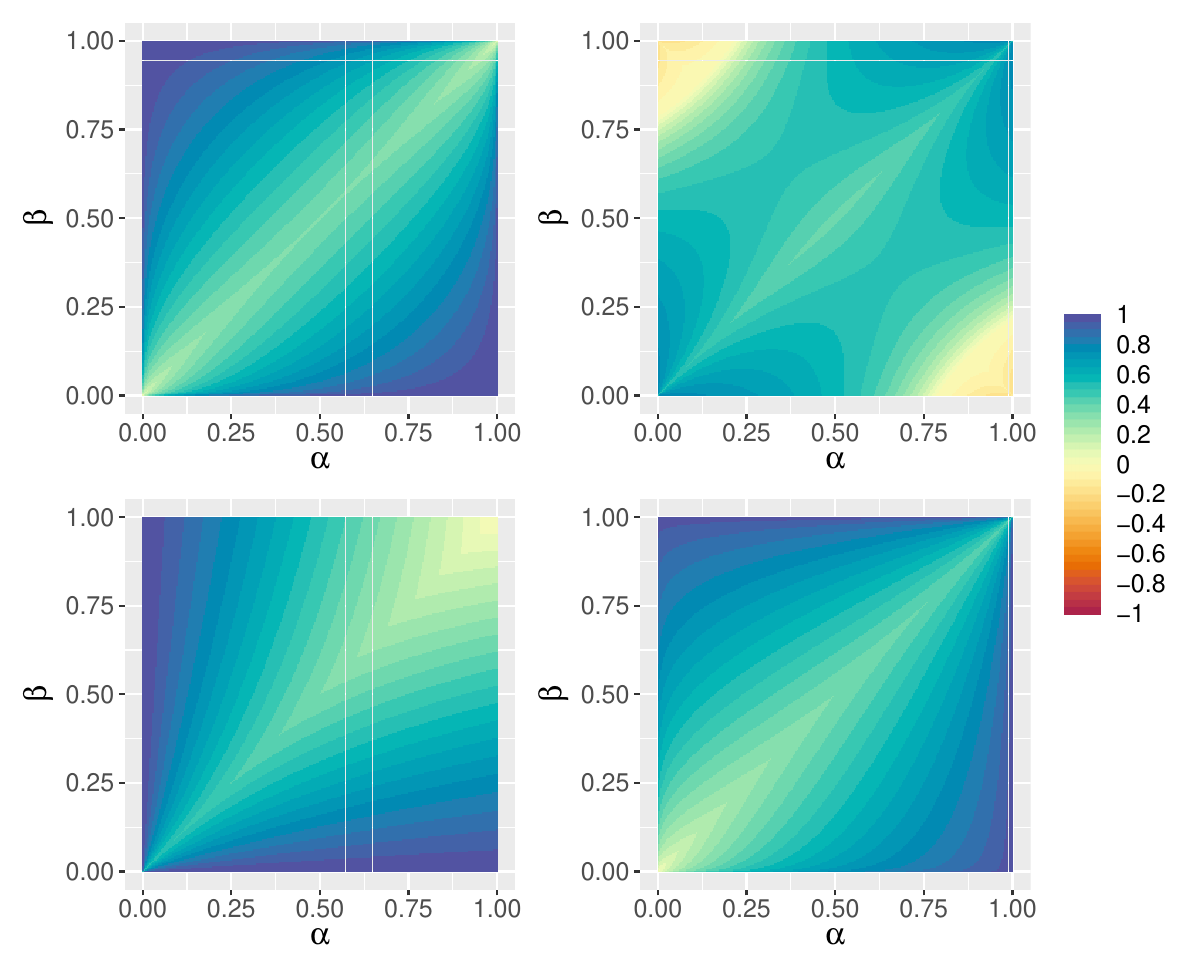}
	\caption{Plots of $\QFCor$ for four different copulas, all with Spearman's $\rho$ equal to 0.5: Gaussian (upper left), Cauchy (upper right), Clayton (lower left) and Gumbel (lower right)}
	\label{fig:qfcorvarious}
\end{figure}

Figure \ref{fig:qfcorvarious} contains quantile function correlations stemming from four different copulas, all having a Spearman correlation of 0.5: a Gaussian, a Cauchy, a Clayton, and a Gumbel copula. 
Despite their Spearman's $\rho$ being the same, they exhibit very different dependence structures. 
For example, the Gaussian copula has a weak dependence in the lower left and upper right corner, while the Cauchy copula shows a strong dependence in both corners, and the Clayton and Gumbel copula exhibit strong dependence in one corner, but not the other. 
In the upper left and lower right corners, for all but the Cauchy copula there is strong positive dependence, which just means, e.g., for the lower right corner that exceedances of a high quantile of $X$ are strongly positively associated with exceedances of a small quantile of $Y$. 
This means
that very large values of $X$ and very small values of $Y$ virtually never occur jointly for those copulas. 
In contrast, for the Cauchy copula, despite the overall positive dependence between $X$ and $Y$ as indicated by Spearman's $\rho$, the dependence in the lower right and upper left corners even becomes negative, reflecting the tail behaviour of the Cauchy distribution also seen in Figure \ref{fig:qfcorcauchy}. 
Bivariate t-distributions with more degrees of freedom exhibit a similar behaviour as the Cauchy examples in both figures. 
Plots of the closely related CDF correlations (see Remark \ref{rem:local_covariances}) usually look like distorted versions of quantile function correlations, where the distortion originates from the influence of the marginals. 
We discuss examples of CDF correlations in Section \ref{subsec:data examples}.

\subsection{Global positive and negative dependence}

There is a vast literature on the question what it means or what it should mean that two random variables $X$ and $Y$ are \emph{globally} positively (negatively) dependent. 
We refer to \cite{Mari2001} and \cite{Balakrishnan2009} for overviews of this strand of literature. The distributional covariances and correlations suggest a natural definition. 

\begin{definition}
\label{definition:dependence}
Any $X,Y\in L^0(\R)$ are \emph{globally positively dependent} if $\CDFCov(X,Y)\ge0$. They are \emph{globally negatively dependent} if $\CDFCov(X,Y)\le0$.
\end{definition}

\begin{remark}
Definition \ref{definition:dependence} of global positive  dependence coincides with \citeauthor{lehmann1966}'s (\citeyear{lehmann1966}) definition of positive  \emph{quadrant dependence}, which holds for two random variables $X,Y$ if 
\begin{equation*}
	\p (X \leq a, Y \leq b) \geq \p (X \leq a) \p (Y \leq b) \quad \text{ for all } a,b \in \mathbb{R}.
\end{equation*}
\end{remark}
The following proposition shows that we can define global positive and negative dependence also in terms of quantile function correlation.

\begin{proposition} \label{prop:global dependence}
	Any $X,Y\in L^0(\R)$ are globally positively (negatively) dependent if and only if $\QCov(X,Y)\ge0$ ($\QCov(X,Y)\le0$).
\end{proposition}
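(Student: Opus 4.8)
The plan is to reduce the equivalence to the bijective relationship between threshold and quantile covariances already recorded in Remark~\ref{rem:local_covariances}, namely $\TCov_{q_\alpha(X),q_\beta(Y)}(X,Y)=\QCov_{\alpha,\beta}(X,Y)$ for all $\alpha,\beta\in(0,1)$ and, conversely, $\QCov_{F_X(a),F_Y(b)}(X,Y)=\TCov_{a,b}(X,Y)$ whenever $F_X(a),F_Y(b)\in(0,1)$, supplemented by a short boundary analysis. Throughout, ``$\CDFCov(X,Y)\ge0$'' (global positive dependence) means $\TCov_{a,b}(X,Y)\ge0$ for every $(a,b)\in\R^2$, and ``$\QCov(X,Y)\ge0$'' means $\QCov_{\alpha,\beta}(X,Y)\ge0$ for every $(\alpha,\beta)\in(0,1)^2$; the negative-dependence statement is obtained verbatim by reversing all inequalities, so I would only write out the positive case.

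For the implication from global positive dependence to $\QCov(X,Y)\ge0$, I would simply specialise the hypothesis $\TCov_{a,b}(X,Y)\ge0$ to the point $(a,b)=(q_\alpha(X),q_\beta(Y))$ and invoke the first identity of Remark~\ref{rem:local_covariances} to obtain $\QCov_{\alpha,\beta}(X,Y)\ge0$; since $\alpha,\beta\in(0,1)$ are arbitrary, this direction is immediate. For the converse, fix $(a,b)\in\R^2$ and split into cases. If $F_X(a)=0$, then $0\le F_{X,Y}(a,b)\le F_X(a)=0$, hence $\TCov_{a,b}(X,Y)=F_{X,Y}(a,b)-F_X(a)F_Y(b)=0$. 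If $F_X(a)=1$, then $\P(X>a)=0$ forces $F_{X,Y}(a,b)=\P(Y\le b)=F_Y(b)$, hence $\TCov_{a,b}(X,Y)=F_Y(b)-F_Y(b)=0$; the cases $F_Y(b)\in\{0,1\}$ are symmetric. In the only remaining case, $\alpha:=F_X(a)\in(0,1)$ and $\beta:=F_Y(b)\in(0,1)$, so the second identity of Remark~\ref{rem:local_covariances} applies and gives $\TCov_{a,b}(X,Y)=\QCov_{\alpha,\beta}(X,Y)\ge0$ by hypothesis. Thus $\TCov_{a,b}(X,Y)\ge0$ for every $(a,b)$, i.e.\ $X$ and $Y$ are globally positively dependent, which closes the equivalence.

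The one point that needs genuine care -- and hence the main obstacle -- is that the map $(\alpha,\beta)\mapsto(q_\alpha(X),q_\beta(Y))$ is in general \emph{not} onto $\R^2$: it overshoots values lying in flat stretches of the marginals and it never reaches thresholds $a$ with $F_X(a)\in\{0,1\}$. The boundary cases above dispose of the latter, and for the former one uses that $\TCov_{\cdot,b}$ and $F_X$ are constant on any interval on which $F_X$ is constant, which is precisely what the identity $\QCov_{F_X(a),F_Y(b)}(X,Y)=\TCov_{a,b}(X,Y)$ encodes. If a self-contained derivation of that identity is preferred over citing Remark~\ref{rem:local_covariances}, I would expand $\QCov_{\alpha,\beta}$ from \eqref{eq:Quantile correction}, note $F_X(q_{F_X(a)}(X))=F_X(a)$ and $F_Y(q_{F_Y(b)}(Y))=F_Y(b)$ by the standard quantile inequalities, and use the sandwich $F_{X,Y}(q_{F_X(a)}(X),q_{F_Y(b)}(Y))=F_{X,Y}(a,b)$, each equality holding because the probability mass strictly between $a$ (resp.\ $b$) and the corresponding quantile vanishes.
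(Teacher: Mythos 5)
Your proof is correct, and it takes a more self-contained route than the paper's one-line argument. The paper handles the ``only if'' direction by invoking Proposition \ref{prop:dependence}, i.e.\ the general fact that global positive dependence forces every generalised covariance to be nonnegative (which in turn rests on Lehmann's quadrant-dependence lemmas); you instead simply evaluate the hypothesis $\TCov_{a,b}(X,Y)\ge 0$ at $(a,b)=(q_\alpha(X),q_\beta(Y))$ and use the coincidence of the generalised errors from Remark \ref{rem:local_covariances}, which is more elementary and needs nothing beyond the definitions. For the ``if'' direction the paper cites the copula representation \eqref{eq:Quantile correction copula} together with Sklar's theorem, so that $\TCov_{a,b}(X,Y)=C_{X,Y}(F_X(a),F_Y(b))-F_X(a)F_Y(b)$ is controlled by the quantile covariances at levels in the ranges of the marginals; your argument encodes the same content without copulas, via the identity $\QCov_{F_X(a),F_Y(b)}(X,Y)=\TCov_{a,b}(X,Y)$ (justified by $F_X(q_{F_X(a)}(X))=F_X(a)$ and the vanishing of mass between $a$ and $q_{F_X(a)}(X)$), and it makes explicit the boundary cases $F_X(a)\in\{0,1\}$ or $F_Y(b)\in\{0,1\}$, where $\TCov_{a,b}(X,Y)=0$ and the quantile-level map cannot reach $(F_X(a),F_Y(b))$ --- a point the paper's terse proof leaves implicit (there it is absorbed by the fact that every copula satisfies $C(0,v)=0$ and $C(1,v)=v$). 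What the paper's phrasing buys is brevity and consistency with the copula viewpoint used throughout Section \ref{sec:local_distributional}; what yours buys is an elementary, fully spelled-out argument that does not rely on Proposition \ref{prop:dependence} or Sklar's theorem.
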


For example, the first three copulas from Figure \ref{fig:qfcorvarious} are globally positively dependent, while the Cauchy copulas in Figures \ref{fig:qfcorcauchy} and \ref{fig:qfcorvarious} represent cases of mixed dependence, with regions of local positive as well as local negative dependence.

Each generalised covariance and correlation directly gives rise to a specific concept of positive (negative) dependence between two random variables $(X,Y)$ as well. 
The following proposition shows that such a dependence is implied by global dependence.

\begin{proposition}
\label{prop:dependence}
	Assume that $X,Y\in L^0(\R)$ and $T_1,T_2$ are such that the generalised covariance $\Cov_{T_1,T_2}(X,Y)$ exists. If $X$ and $Y$ are globally positively dependent, it holds that 
	$\Cov_{T_1,T_2}(X,Y) \geq 0.$
	If $X$ and $Y$ are globally negatively dependent, 
	$\Cov_{T_1,T_2}(X,Y) \leq 0$.
\end{proposition}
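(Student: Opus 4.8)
The plan is to reduce the claim to the classical fact that positive quadrant dependence is preserved under monotone transformations of the margins and forces a non-negative covariance.

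First I would use the identity \eqref{eq:cov identity} to write $\Cov_{T_1,T_2}(X,Y)=\Cov\big(e_{T_1}(X),e_{T_2}(Y)\big)$, so that it suffices to determine the sign of the covariance of the two generalised errors. The increasing property in Definition \ref{def:generalised error} implies that $e_{T_1}(X)$ is $\P$-almost surely a non-decreasing function of $X$: applying that property to pairs $(\omega,\omega')$ with $X(\omega)=X(\omega')$ shows $e_{T_1}(X)$ is a.s.\ constant on the level sets of $X$, hence $e_{T_1}(X)=g_1(X)$ a.s.\ for some non-decreasing Borel map $g_1\colon\R\to\R$, and likewise $e_{T_2}(Y)=g_2(Y)$ a.s.\ with $g_2$ non-decreasing. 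Moreover $e_{T_1}(X),e_{T_2}(Y)\in L^1(\R)$ by the very definition of a generalised error, and $e_{T_1}(X)e_{T_2}(Y)\in L^1(\R)$ because $\Cov_{T_1,T_2}(X,Y)$ is assumed to exist and be finite; thus $g_1(X)$, $g_2(Y)$ and their product are all integrable.

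Next I would invoke that, by the remark following Definition \ref{definition:dependence}, global positive dependence of $X$ and $Y$ is exactly positive quadrant dependence, $F_{X,Y}(a,b)\ge F_X(a)F_Y(b)$ for all $a,b\in\R$. Since $g_1,g_2$ are non-decreasing, every event $\{g_1(X)\le s\}$ coincides with $\{X\le c\}$ or $\{X<c\}$ for a suitable $c\in[-\infty,\infty]$, and analogously for $\{g_2(Y)\le t\}$; so, passing if needed to a monotone limit in the quadrant-dependence inequality, the transformed pair $\big(g_1(X),g_2(Y)\big)$ is again positively quadrant dependent. Hoeffding's covariance identity (\cite[Lemma 7.27]{mcneil2015}, applicable by the integrability noted above) then expresses $\Cov_{T_1,T_2}(X,Y)=\Cov\big(g_1(X),g_2(Y)\big)$ as the integral over $\R^2$ of $\P\big(g_1(X)\le s,\,g_2(Y)\le t\big)-\P\big(g_1(X)\le s\big)\P\big(g_2(Y)\le t\big)$ against Lebesgue measure; this integrand is non-negative everywhere, whence $\Cov_{T_1,T_2}(X,Y)\ge 0$, the asserted inequality. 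The globally negatively dependent case is entirely symmetric: there $F_{X,Y}\le F_XF_Y$, the transformed pair is negatively quadrant dependent, the integrand is non-positive, and $\Cov_{T_1,T_2}(X,Y)\le 0$.

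The one genuinely delicate point is the reduction carried out in the first paragraph, namely upgrading the probability-space-level ``increasing'' property of the error map to an almost-sure monotone representation $e_{T_1}(X)=g_1(X)$; this has to be argued with some care on the non-atomic space, where level sets of $X$ may be $\P$-null, so that ``constant on level sets'' must be read through the $\P\otimes\P$-null exceptional set in Definition \ref{def:generalised error}. Everything downstream is the standard interplay between quadrant dependence, monotone transformations of the margins, and the sign of the covariance; in fact one could dispense with the last two paragraphs altogether by citing \citeauthor{lehmann1966}'s characterisation that $(X,Y)$ is positively quadrant dependent if and only if $\Cov\big(f(X),g(Y)\big)\ge 0$ for every pair of non-decreasing $f,g$ for which this covariance exists, and applying it with $f=g_1$, $g=g_2$.
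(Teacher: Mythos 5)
Your proposal is correct and follows essentially the same route as the paper: reduce to the covariance of the generalised errors via \eqref{eq:cov identity}, note that these errors are increasing functions of $X$ and $Y$, and conclude that quadrant dependence is preserved and forces the sign of the covariance. The paper simply cites \citet{lehmann1966} (Lemma 1(iii) and Lemma 3) for the two steps you prove by hand via monotone limits and Hoeffding's identity, and, like you, it implicitly relies on the generalised errors admitting a monotone functional representation in $X$ and $Y$.
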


Proposition \ref{prop:dependence} can also be viewed as a complement to Theorem \ref{theorem:properties} in that it contains a further desirable property of generalised correlations: 
They not only take the correct values of $-1$\,/\,0\,/\,1 in the extreme cases, but also the correct sign under positive and negative dependence.

\section{Tail correlations and tail dependence}
\label{sec:tail}

It is often of interest to analyse co-movement in the tails, i.e., if for example very large values in $X$ and $Y$ tend to occur together, if there is no dependence in the tails, or if large values of $X$ render large values of $Y$ rather more unlikely and vice versa. 
Such questions are subsumed under the term of tail, extremal or asymptotic dependence in the literature, see, e.g., \citet[Chapter 2]{Joe2014} or \cite{Coles1999}. 

Natural measures for lower and upper tail dependence are the respective limits of quantile correlation or the limits of quantile function correlation when moving to the upper right and lower left corner, respectively.
 
\begin{definition}[Tail correlations]
	\label{definition:tailcorr}
	For any $X,Y \in L^0(\mathbb{R})$ the lower and upper tail correlations are defined as 
	\begin{equation*}
	\LTCor(X,Y) := \lim_{\alpha \to 0} \QCor_{\alpha,\alpha}(X,Y), \qquad 
	\UTCor(X,Y) := \lim_{\alpha \to 1} \QCor_{\alpha,\alpha}(X,Y),
	\end{equation*}
	respectively, provided that the limits exist.
\end{definition}
On top of the upper and lower tail correlation, one might also consider 
$\lim_{\alpha \to 0} \QCor_{\alpha,1-\alpha}(X,Y)$ and 
$\lim_{\alpha \to 0} \QCor_{1-\alpha,\alpha}(X,Y)$, provided that these limits exist. 
The discussion is similar to what follows and is therefore omitted.

\begin{definition}[Tail dependence]
	\label{definition:taildependence}
	Any $X,Y \in L^0(\mathbb{R})$ are \emph{positively lower tail dependent} if $\LTCor(X,Y) > 0$, \emph{negatively lower tail dependent} if $\LTCor(X,Y) < 0$, and \emph{lower tail independent} if $\LTCor(X,Y) = 0$. 
	They are \emph{lower tail comonotonic} if $\LTCor(X,Y) = 1$ and \emph{lower tail countermonotonic} if $\LTCor(X,Y) = -1$. 
	For the upper tail notions, replace $\LTCor(X,Y)$ by $\UTCor(X,Y)$.  
\end{definition}


The by far most prominent measure of tail dependence is the coefficient of tail dependence, see \cite{Joe1993}, \cite{Coles1999}.\footnote{\cite{Fiebig2017} 
contains a literature review on the use and naming of the coefficient in different fields.}
To facilitate the following discussion of the relation between coefficients of tail dependence and tail correlations we assume continuity of the marginals $F_X$ and $F_Y$ throughout, stated by $X, Y\in L_{\mathrm{con}}$. 
The coefficients of lower and upper tail dependence are defined as
\[
\lambda_l(X,Y) := \lim_{\alpha \to0} \p ( Y \leq q_{\alpha} (Y) \mid X \leq q_{\alpha} (X)) = \lim_{\alpha \to0} \frac{C_{X,Y}(\alpha,\alpha)}{\alpha}, \quad X, Y\in L_{\text{con}},
\]
and
\[
\lambda_u(X,Y) := \lim_{\alpha \to1} \p ( Y > q_{\alpha} (Y) \mid X > q_{\alpha} (X)) = \lim_{\alpha \to1} \frac{\overline{C}_{X,Y}(\alpha,\alpha)}{ 1 - \alpha}, \quad X, Y\in L_{\text{con}},
\]
where $\overline{C}_{X,Y}$ denotes the survival function of the copula $C_{X,Y}$
and where we assume that the limits exist.
The following lemma clarifies the relation between the coefficients of tail dependence and the tail correlations.


\begin{lemma}\label{lem:tail dependence}
For $X, Y\in L_{\mathrm{con}}$, the following assertions hold.
\begin{enumerate}[(a)]
\item
If the coefficient of lower (upper) tail dependence or the lower (upper) tail correlation exist and are positive, the other quantity exists as well and the two quantities coincide.
That is,
\begin{align*}\nonumber
\lambda_l(X,Y)
&= \lim_{\alpha \downarrow0} \frac{C_{X,Y}(\alpha,\alpha)}{\alpha}
= 
 \lim_{\alpha \downarrow0} \frac{C_{X,Y}(\alpha,\alpha) - \alpha^2}{\alpha - \alpha^2} 
 =\LTCor(X,Y)\,,\\
 \lambda_u(X,Y)
&= \lim_{\alpha \uparrow 1} \frac{\overline{C}_{X,Y}(\alpha,\alpha)}{1-\alpha} 
= \lim_{\alpha \uparrow 1} \frac{\overline{C}_{X,Y}(\alpha,\alpha) - (1-\alpha)^2}{\alpha(1-\alpha)} 
=  \lim_{\alpha \uparrow 1} \frac{C_{X,Y}(\alpha,\alpha) - \alpha^2}{\alpha - \alpha^2}
= \UTCor(X,Y)\,.
\label{eq:UTCor}
\end{align*}
\item
If $X, Y$ are lower (upper) tail independent, the coefficient of lower (upper) tail dependence is 0 as well.
\item
If $X,Y$ are negatively lower (upper) tail dependent, it holds that 
\begin{align*}
\LTCor(X,Y)  &= \lim_{\alpha \downarrow 0} \frac{C_{X,Y}(\alpha,\alpha) - \alpha^2}{\alpha^2} = \lim_{\alpha \downarrow 0} \frac{C_{X,Y}(\alpha,\alpha)}{\alpha^2}-1 \text{ and } \lambda_l(X,Y) = 0 \,,\\
\UTCor(X,Y) &= \lim_{\alpha \uparrow 1} \frac{C_{X,Y}(\alpha,\alpha) - \alpha^2}{(1-\alpha)^2} 
= \lim_{\alpha \uparrow 1} \frac{\overline{C}_{X,Y}(\alpha,\alpha)}{(1-\alpha)^2} - 1 \text{ and } \lambda_u(X,Y) = 0 \,.
\end{align*}
\end{enumerate}
\end{lemma}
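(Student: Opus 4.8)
The plan is to work entirely from the copula representation of quantile correlation established in Example~\ref{example:Quantile correlation}, namely that for $X,Y\in L_{\mathrm{con}}$ we have $\QCov_{\alpha,\alpha}(X,Y) = C_{X,Y}(\alpha,\alpha) - \alpha^2$ with Fr\'echet--Hoeffding normalisers $\QCov_{\alpha,\alpha}(X,Y'') = \min(\alpha,\alpha) - \alpha^2 = \alpha - \alpha^2 = \alpha(1-\alpha)$ and $\QCov_{\alpha,\alpha}(X,Y') = \max(2\alpha - 1, 0) - \alpha^2$. For $\alpha$ near $0$ the countermonotonic lower bound equals $-\alpha^2$, and for $\alpha$ near $1$ it equals $(2\alpha-1) - \alpha^2 = -(1-\alpha)^2$. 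Hence, unwinding Definition~\ref{defn:generalised correlation}, for small $\alpha$ one has $\QCor_{\alpha,\alpha}(X,Y) = \big(C_{X,Y}(\alpha,\alpha) - \alpha^2\big)/\big(\alpha(1-\alpha)\big)$ when the numerator is nonnegative and $= \big(C_{X,Y}(\alpha,\alpha) - \alpha^2\big)/\alpha^2$ when it is negative; symmetrically for $\alpha$ near $1$, using $\overline{C}_{X,Y}(\alpha,\alpha) = 1 - 2\alpha + C_{X,Y}(\alpha,\alpha)$ so that $C_{X,Y}(\alpha,\alpha) - \alpha^2 = \overline{C}_{X,Y}(\alpha,\alpha) - (1-\alpha)^2$.

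For part~(a), I would argue as follows in the lower case (the upper case is identical after the substitution above). Suppose first $\LTCor(X,Y)$ exists and is positive. Then for $\alpha$ sufficiently small $C_{X,Y}(\alpha,\alpha) - \alpha^2 > 0$, so $\QCor_{\alpha,\alpha}(X,Y) = \big(C_{X,Y}(\alpha,\alpha) - \alpha^2\big)/\big(\alpha(1-\alpha)\big) = \big(C_{X,Y}(\alpha,\alpha)/\alpha - \alpha\big)/(1-\alpha)$. Taking $\alpha\downarrow 0$, the denominator $(1-\alpha)\to 1$ and the subtracted $\alpha\to 0$, so the limit of $C_{X,Y}(\alpha,\alpha)/\alpha$ exists and equals $\LTCor(X,Y)$; that limit is by definition $\lambda_l(X,Y)$. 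Conversely, if $\lambda_l(X,Y)$ exists and is positive, then $C_{X,Y}(\alpha,\alpha)/\alpha \to \lambda_l > 0$ forces $C_{X,Y}(\alpha,\alpha) \sim \lambda_l \alpha$, which dominates $\alpha^2$, so eventually the numerator is positive and the same algebraic identity shows $\QCor_{\alpha,\alpha}(X,Y)\to\lambda_l$. The chain of equalities displayed in the lemma is then just the insertion of the intermediate expressions, each of which has the same limit because the ratio of denominators tends to $1$.

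For part~(b): if $X,Y$ are lower tail independent, $\LTCor(X,Y) = 0$, i.e.\ $\QCor_{\alpha,\alpha}(X,Y)\to 0$. I need $\lambda_l(X,Y) = C_{X,Y}(\alpha,\alpha)/\alpha \to 0$. Since $C_{X,Y}(\alpha,\alpha)\in[\max(2\alpha-1,0),\alpha]$ we always have $0 \le C_{X,Y}(\alpha,\alpha)/\alpha \le 1$, so $\lambda_l$ (if it exists) lies in $[0,1]$. The point is to exclude a positive value: if $\lambda_l > 0$ existed, part~(a) would give $\LTCor = \lambda_l > 0$, a contradiction. The only subtlety is whether $\lambda_l$ exists at all; but in this lemma the coefficients of tail dependence are assumed to exist (``where we assume that the limits exist''), so this is immediate. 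For part~(c): if $X,Y$ are negatively lower tail dependent, $\LTCor(X,Y) < 0$, so for small $\alpha$ the numerator $C_{X,Y}(\alpha,\alpha) - \alpha^2 < 0$ and the normaliser is $\alpha^2$, giving $\QCor_{\alpha,\alpha}(X,Y) = \big(C_{X,Y}(\alpha,\alpha) - \alpha^2\big)/\alpha^2 = C_{X,Y}(\alpha,\alpha)/\alpha^2 - 1$, which is the first displayed identity; and since $C_{X,Y}(\alpha,\alpha)/\alpha^2 \to \LTCor + 1 < 1$ is bounded, $C_{X,Y}(\alpha,\alpha)/\alpha \to 0$, i.e.\ $\lambda_l(X,Y) = 0$. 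The upper-tail formulas follow by the same computation after replacing $C_{X,Y}(\alpha,\alpha)-\alpha^2$ with $\overline{C}_{X,Y}(\alpha,\alpha)-(1-\alpha)^2$ and $\alpha$ with $1-\alpha$ in the normalisers.

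The main obstacle is bookkeeping rather than depth: one must carefully track which branch of the piecewise definition of $\QCor_{\alpha,\alpha}$ is active as $\alpha\to 0$ or $\alpha\to 1$ — this depends on the sign of $C_{X,Y}(\alpha,\alpha) - \alpha^2$ — and verify that the branch is eventually constant under each of the three hypotheses (positive / zero / negative tail correlation), which is exactly what makes the stated limit identities legitimate. A secondary point worth stating explicitly is the harmless replacement of denominators whose ratio tends to $1$ (e.g.\ $\alpha(1-\alpha)$ versus $\alpha$, or $\alpha(1-\alpha)$ versus $1-\alpha$ in the upper case), which is what lets the several displayed fractions all share the same limit.
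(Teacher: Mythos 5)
Your proof is correct and takes exactly the route the paper intends (the paper states this lemma without spelling out a proof): the copula representation $\QCov_{\alpha,\alpha}(X,Y)=C_{X,Y}(\alpha,\alpha)-\alpha^2$ together with the Fr\'echet--Hoeffding normalisers $\alpha(1-\alpha)$, $\alpha^2$ (near $0$) and $(1-\alpha)^2$ (near $1$), plus careful tracking of which branch of Definition~\ref{defn:generalised correlation} is active under each sign hypothesis. One minor remark: in part (b) you lean on the standing assumption that $\lambda_l$ exists, but your own computation gives existence for free, since on either branch $\QCor_{\alpha,\alpha}(X,Y)\to 0$ forces $C_{X,Y}(\alpha,\alpha)/\alpha\to 0$.
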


%
%
%

In the literature, the cases $\lambda_l=0$ and $\lambda_u=0$ are usually called tail or asymptotic independence and $\lambda_l>0$ and $\lambda_u>0$ tail dependence \citep{mcneil2015}.
The lower and upper tail correlations provide a more nuanced picture of the tail behaviour. 
While under positive tail dependence as introduced in Definition \ref{definition:taildependence}, the coefficients of tail dependence and the tail correlations coincide, the latter measures are able to classify the situation of $\lambda_l=0$ and $\lambda_u=0$ into actual tail independence (the tail correlations are 0) on the one hand and \emph{negative} tail correlation on the other hand -- also indicating the strength of negative dependence.
In fact, a countermonotonic pair $(X,Y)$ yields a lower (upper) tail correlation of $-1$, while the coefficients of tail dependence are still 0, deeming the pair asymptotically independent. 

Hence, we make the case for replacing the coefficients of tail dependence with the tail correlations: 
Since no information is lost, but strictly more information is gained, this is one of the rare cases in statistical methodology where a Pareto improvement is possible and should therefore be implemented.

\section{Summary covariances and correlations}
\label{sec:summary}

\subsection{Summary covariances}

The distributional covariances and correlations from Section \ref{sec:local_distributional} reveal the full dependence structure between $X$ and $Y$. 
Nevertheless, it is often required or useful to summarise the dependence structure in a single number. 
In fact, this is what most classical dependence measures aim for. 
A natural way to construct such summary measures from distributional covariances is to compute weighted averages.

\begin{definition}[Summary covariances]
	\label{definition:summarycovariances}
For any $X,Y\in L^0(\R)$ the summary covariance induced by quantile function covariance with respect to a measure $\kappa$ on $[0,1]^2$
is
\begin{equation}
\label{eq:summary QF}
\SCov_{\QF,\kappa} (X,Y) := \int_{[0,1]^2}  \QCov_{\alpha,\beta} (X,Y) \d\kappa (\alpha,\beta).
\end{equation}
Likewise, the summary covariance induced by CDF covariance with respect to a measure $\nu$ on $\R^2$ is
\begin{equation}
\label{eq:summary CDF}
\SCov_{\CDF,\nu} (X,Y) := \int_{\mathbb{R}^2}  \TCov_{a,b} (X,Y) \d\nu (a,b).
\end{equation}
\end{definition}
We tacitly assume that the integrals in \eqref{eq:summary QF} and \eqref{eq:summary CDF} exist and are finite. 
Since both $\QCov$ and $\TCov$ are bounded, a sufficient condition is that $\nu$ and $\kappa$ are finite.
The summary covariances inherit the properties of the distributional covariances. They are 0 under independence and nonnegative (nonpositive) under global positive (negative) dependence. Further, the quantile function summary covariance is invariant under strictly increasing transformations.

Interestingly, two of the most popular dependence measures arise as canonical special cases of summary covariances.

\begin{example}[Covariance]
	\label{example:Pearson_Covariance}
	If we plug in the Lebesgue measure $\lambda$ for $\nu$ in \eqref{eq:summary CDF}, Hoeffding's formula \cite[Lemma 7.27]{mcneil2015} implies that 
	\begin{equation*}
	\SCov_{\CDF,\lambda} (X,Y) = \int_{\mathbb{R}^2}  \TCov_{a,b} (X,Y) \d (a,b) = \Cov(X,Y),
	\end{equation*}
	which is the classical covariance.
\end{example}

\begin{example}[Spearman covariance]
	\label{example:Spearman_Covariance}
	Recall that Spearman's rank correlation coefficient $\rho$ can be defined as the Pearson correlation of the probability integral transforms,
	\begin{equation} \label{eq:Spearman}
		\rho(X,Y) = \Cor \big(F_X(X),F_Y(Y)\big). 
	\end{equation}
	From Example \ref{example:Pearson_Covariance} and the relation between quantile and threshold correlation discussed in Remark \ref{rem:local_covariances}, it follows for $\kappa$ being the Lebesgue measure $\lambda$ that \eqref{eq:summary QF} becomes
	\begin{equation*}
	\SCov_{\QF,\lambda} (X,Y) = \int_{[0,1]^2}  \QCov_{\alpha,\beta} (X,Y) \d(\alpha, \beta) = \Cov\big(F_X(X),F_Y(Y)\big),
	\end{equation*}
	which is the Spearman covariance.
\end{example}

We discuss further examples, which focus on specific regions of interest, when dealing with the respective correlations below.

\subsection{Summary correlations}

Summary correlations arise when summary covariances are appropriately normalised, again utilising co- and countermonotonic random couplings with identical marginals as $X$ and $Y$.	

\begin{definition}[Summary correlations]
	\label{defn:summary_correlations}
	Consider measures $\kappa$ on $[0,1]^2$ and $\nu$ on $\mathbb{R}^2$. For any non-constant $X,Y\in L^0(\R)$ consider pairs with the same marginal distributions $(X,Y')$ and $(X,Y'')$ such that $(X,Y')$ is countermonotonic and $(X,Y'')$ is comonotonic. Then the summary correlation induced by quantile function covariance with respect to $\kappa$
	is
	\begin{equation}\label{eq:SCor a}
	\SCor_{\QF,\kappa} (X,Y) := 
	\begin{cases}
	\frac{\SCov_{\QF,\kappa} (X,Y)}{|\SCov_{\QF,\kappa} (X,Y'\,)|}  &\text{if } \SCov_{\QF,\kappa} (X,Y) <0, \\[0.5em]
	\frac{\SCov_{\QF,\kappa} (X,Y)}{|\SCov_{\QF,\kappa} (X,Y''\,)|}  &\text{if } \SCov_{\QF,\kappa} (X,Y)\ge0.
	\end{cases}
	\end{equation}
	Likewise, the summary correlation induced by CDF covariance with respect to $\nu$ is 
	\begin{equation}\label{eq:SCor b}
	\SCor_{\CDF,\nu} (X,Y) := 
	\begin{cases}
	\frac{\SCov_{\CDF,\nu} (X,Y)}{|\SCov_{\CDF,\nu} (X,Y'\,)|}  &\text{if } \SCov_{\CDF,\nu} (X,Y) <0, \\[0.5em]
	\frac{\SCov_{\CDF,\nu} (X,Y)}{|\SCov_{\CDF,\nu} (X,Y''\,)|}  &\text{if } \SCov_{\CDF,\nu} (X,Y)\ge0.
	\end{cases}
	\end{equation}
Provided that the involved quantities exist and are finite. 
If $X$ or $Y$ is constant, then the two measures are set to be 0.
\end{definition}

The normalisation terms in \eqref{eq:SCor a} and \eqref{eq:SCor b} can be computed from \eqref{eq:summary QF} and \eqref{eq:summary CDF} and the Fr\'echet--Hoeffding bounds presented in Examples \ref{example:Quantile correlation} and \ref{example:Threshold correlation}.
The summary correlations inherit the appealing properties of the distributional correlations.
\begin{corollary}
\label{cor:summary correlations}
		For any $X,Y \in L^0(\mathbb{R})$ and for any of the two summary correlations $\SCor \in \{\SCor_{\QF,\kappa},\SCor_{\CDF,\nu}\}$ such that $\SCor(X,Y)$ exists, the following properties hold.
	\begin{enumerate}[(i)]
		\item
		Normalisation: $\SCor(X,Y) \in [-1,1]$.
		\item
		Independence implies nullity:
		$\SCor(X,Y) =0$ if $X$ and $Y$ are independent.
		\item
		Perfect dependence:
		$\SCor(X,Y) =1(-1)$ if $X$ and $Y$ are comonotonic (countermonotonic). 
		If $\kappa$ and $\nu$ are strictly positive,\footnote{That means they assign a strictly positive mass to any open non-empty set.}
		then $\SCor(X,Y) =1(-1)$ implies that $X$ and $Y$ are comonotonic (countermonotonic).
		\item
		Symmetry: 
		If $\kappa$ and $\nu$ are invariant in their arguments in the sense that $\d\kappa (\alpha,\beta) = \d\kappa (\beta,\alpha)$ and $\d\nu (a,b) = \d\nu (b,a)$ for all $\alpha, \beta\in [0,1]$ and for all $a,b\in\R$, 
		then $\SCor(X,Y) = \SCor(Y,X)$.
	\end{enumerate}
\end{corollary}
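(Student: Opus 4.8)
The plan is to reduce every assertion to the corresponding pointwise property of the distributional covariances $(\alpha,\beta)\mapsto\QCov_{\alpha,\beta}(X,Y)$ and $(a,b)\mapsto\TCov_{a,b}(X,Y)$ and then integrate against the nonnegative measures $\kappa$ and $\nu$, which preserve inequalities. The backbone is the Fr\'echet--Hoeffding ordering of Proposition \ref{prop:Frechet-Hoeffding}: pointwise one has $\QCov_{\alpha,\beta}(X,Y')\le\QCov_{\alpha,\beta}(X,Y)\le\QCov_{\alpha,\beta}(X,Y'')$ and the analogous chain for $\TCov$, and in addition the countermonotone bound is $\le0$ while the comonotone bound is $\ge0$, which follows from the explicit expressions in Examples \ref{example:Threshold correlation} and \ref{example:Quantile correlation} together with $\max(u+v-1,0)\le uv\le\min(u,v)$ for $u,v\in[0,1]$. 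Integrating gives the sandwich $\SCov(X,Y')\le\SCov(X,Y)\le\SCov(X,Y'')$ with $\SCov(X,Y')\le0\le\SCov(X,Y'')$ for both families, and property (i) is then immediate from \eqref{eq:SCor a}--\eqref{eq:SCor b}: in the branch $\SCov(X,Y)\ge0$ one divides a nonnegative number by the larger nonnegative number $\SCov(X,Y'')$, and symmetrically in the negative branch; if the relevant normaliser vanishes the numerator vanishes too by the same sandwich, so the convention $0/0:=0$ keeps the value in $[-1,1]$.

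For (ii) I would use that under independence $F_{X,Y}(a,b)=F_X(a)F_Y(b)$ and the copula $C_{X,Y}$ is the independence copula, so $\TCov_{a,b}(X,Y)\equiv0$ and $\QCov_{\alpha,\beta}(X,Y)\equiv0$ by \eqref{eq:Threshold cov} and \eqref{eq:Quantile correction copula}; hence $\SCov(X,Y)=0$ and $\SCor(X,Y)=0$. For the first half of (iii), comonotonicity means $(X,Y)$ has the same joint law as the comonotone coupling $(X,Y'')$, so the law-determined quantity $\SCov(X,Y)$ equals $\SCov(X,Y'')\ge0$ and the ratio in \eqref{eq:SCor a}--\eqref{eq:SCor b} equals $1$; countermonotonicity yields $-1$ symmetrically.

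The substantive step is the converse in (iii). Take $\SCor_{\CDF,\nu}(X,Y)=1$ with $\nu$ strictly positive; since $\SCor=1$ puts us in the $\ge0$ branch of \eqref{eq:SCor b} and forces $\SCov(X,Y)=\SCov(X,Y'')>0$, subtracting shows that the nonnegative integrand $\min(F_X(a),F_Y(b))-F_{X,Y}(a,b)$ integrates to $0$ against the strictly positive measure $\nu$, hence vanishes on a dense subset of $\R^2$; by right-continuity of $F_{X,Y}$ and of $\min(F_X,F_Y)$ it then vanishes everywhere, i.e.\ $F_{X,Y}=\min(F_X,F_Y)$, which is exactly comonotonicity. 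The $\QF$-version runs identically once $\QCov$ is rewritten through copulas via \eqref{eq:quantile copula}/\eqref{eq:Quantile correction copula}; the one delicate point — which I expect to be the main obstacle — is that the reparametrisation $\alpha\mapsto F_X(q_\alpha(X))$ is only monotone and in general not continuous, so the density argument must be carried out in the $(\alpha,\beta)$-coordinates and then transported, or, more cleanly, the $\QF$-statement reduced to the $\CDF$-statement via the identity $\TCov_{a,b}(X,Y)=\QCov_{F_X(a),F_Y(b)}(X,Y)$ of Remark \ref{rem:local_covariances}. The case $\SCor=-1$ uses the Fr\'echet lower bound $\max(F_X(a)+F_Y(b)-1,0)$ (respectively $\max(\alpha+\beta-1,0)$) in the same way.

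Finally, for (iv) I would invoke symmetry of the generalised covariance, $\Cov_{T_1,T_2}(X,Y)=\E[e_{T_1}(X)e_{T_2}(Y)]=\Cov_{T_2,T_1}(Y,X)$, which specialises to $\QCov_{\alpha,\beta}(X,Y)=\QCov_{\beta,\alpha}(Y,X)$ and $\TCov_{a,b}(X,Y)=\TCov_{b,a}(Y,X)$; combined with the assumed swap-invariance of $\kappa$ and $\nu$ and the change of variables $(\alpha,\beta)\mapsto(\beta,\alpha)$ this yields $\SCov(A,B)=\SCov(B,A)$ for \emph{every} pair of marginals. Applying this identity to $(X,Y)$ for the numerator and to the co- and countermonotone couplings for the denominators — those of $(Y,X)$ being obtained from those of $(X,Y)$ by swapping coordinates — gives $\SCor(X,Y)=\SCor(Y,X)$. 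Apart from the converse of (iii), everything is routine bookkeeping once the Fr\'echet--Hoeffding ordering and this symmetry identity are in hand.
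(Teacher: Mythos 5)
Your proof is correct and takes essentially the same route as the paper: pointwise Fr\'echet--Hoeffding bounds integrated against the nonnegative measures for (i), nullity of the distributional covariances under independence for (ii), the perfect-dependence characterisation for (iii) (which you re-derive directly via the ``nonnegative integrand against a strictly positive measure'' argument instead of citing Corollary \ref{cor:distributional properties}, thereby actually spelling out a step the paper's one-line citation glosses over), and covariance symmetry plus swap-invariance of the measure for (iv). The only loose end is the QF-version of the converse in (iii), which you rightly flag; it closes via your first suggestion rather than the ``reduction'' (which by itself only gives information at the specific points $(F_X(a),F_Y(b))$, where the dense vanishing need not hold): from $\min(u_\alpha,v_\beta)-C_{X,Y}(u_\alpha,v_\beta)=0$ on a dense set of $(\alpha,\beta)$, with $u_\alpha=F_X(q_\alpha(X))$, $v_\beta=F_Y(q_\beta(Y))$, pick $(\alpha_n,\beta_n)$ in that set with $\alpha_n\uparrow F_X(a)$ and $\beta_n\uparrow F_Y(b)$, note $\alpha_n\le u_{\alpha_n}\le F_X(a)$ and $\beta_n\le v_{\beta_n}\le F_Y(b)$, and use the $1$-Lipschitz continuity of copulas to conclude $C_{X,Y}(F_X(a),F_Y(b))=\min(F_X(a),F_Y(b))$ for all $(a,b)$, i.e.\ comonotonicity.
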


By Proposition \ref{properties_quantile_covariance}, $\SCor_{\QF,\kappa}$ is invariant under strictly increasing transformations of $X$ and $Y$ as well.

\begin{example}[Mean and Pearson correlation]
	\label{example:Summary_mean_correlation}
	From Example \ref{example:Pearson_Covariance} it follows that 
\begin{equation*}
\SCor_{\CDF,\lambda} (X,Y) = \MCor(X,Y),
\end{equation*}
where $\lambda$ is the Lebesgue measure on $\R^2$.
Under the conditions of Lemma \ref{lem:symmetric} it holds that $\SCor_{\CDF,\lambda} (X,Y) = \Cor(X,Y)$.
\end{example}

\begin{example}[Spearman correlation]
	\label{example:Spearman_correlation}
By Example \ref{example:Spearman_Covariance} it holds for the Lebesgue measure $\lambda$ on $\R^2$ that 
\begin{equation*}
\SCor_{\QF,\lambda} (X,Y) = \MCor\big(F_X(X),F_Y(Y)\big).
\end{equation*}
If $F_X$ and $F_Y$ are continuous, Spearman's $\rho$ arises, $\SCor_{\QF,\lambda} (X,Y)=\rho(X,Y)$, since the probability integral transforms are standard uniform, $F_X(X) \sim U(0,1)$, $F_Y(Y) \sim U(0,1)$, and thus fulfil the conditions of Lemma \ref{lem:symmetric}. In this case the normalisation does not depend on the sign of $\Cov(F_X(X),F_Y(Y))$ and always equals $\frac{1}{12}$, which leads to the well-known formula $\rho(X,Y) = 12 \Cov(F_X(X),F_Y(Y))$. In the discrete case, $\MCor(F_X(X),F_Y(Y))$ recovers a proposal by \cite{genest2007} to combat the non-attainability of Spearman's $\rho$ for discrete random variables.
\end{example}

In practice, Pearson and Spearman correlation are most often interpreted as summaries of the full dependence structure, expressed in a single number.
However, by definition covariance and Pearson correlation measure dependence around the means (which was the starting point of this paper), while Spearman's $\rho$ does the same on the rank scale, see \eqref{eq:Pearson} and \eqref{eq:Spearman}. 
Our formal approach to summary correlations, where they arise as canonical special cases, provides a powerful justification for this practical use. 
Of course, the dependence structure cannot be fully described by a single number, for example for all the bivariate copulas in Figure \ref{fig:qfcorvarious}, Spearman's $\rho$ has the same value, $\rho=0.5$, but they have very different dependence structures as the distributional correlations uncover. 
The two closely related families (threshold and quantile family) of local, distributional and canonical summary correlations provide dependence measures for different purposes and should be chosen according to the statistical problem at hand: 
measuring dependence locally, characterising the full dependence structure or condensing it in a single number.

\begin{example}[Regional measures of dependence]
	\label{example:regional_measures}
	Let us now integrate with the Lebesgue measure only over some parts of $[0,1]^2$ or $\R^2$.
	For example in the quantile function case and for $A \subset [0,1]^2$, the respective summary covariance reads:
	\begin{equation*}
	\SCov_{\QF,\lambda_A} (X,Y) :=  \int_{A}  \QCov_{\alpha,\beta} (X,Y) \d (\alpha, \beta).
	\end{equation*} 
	Considering the respective correlation $\SCor_{\QF,\lambda_A}$ and letting $A$ be located in one of the tails, e.g., $A = [0,c]^2$ for small $c$, 
	this leads to an alternative measure of tail dependence that has a similar relation to $\QCor_{c,c}$ as the expected shortfall has to the value at risk. 
	Letting $A$ be a region in the centre of the distribution, e.g., again a rectangle $[0.5-d,0.5+d]^2$, this leads to a measure of dependence in the centre, which is in a similar relation to median covariance $\QCor_{0.5,0.5}$.

	When defining the corresponding quantity for the summary correlation induced by CDF correlation with a region $B \subset \mathbb{R}^2$, we get
	\begin{equation*}
	\SCov_{\CDF,\lambda_B} (X,Y) := \int_{B}  \TCov_{a,b} (X,Y) \d (a, b).
	\end{equation*}
	Here, one could focus on a whole quadrant, e.g., $B=[-\infty,0]^2$. This resembles the idea behind so-called semi-correlations \citep[Chapter 2]{Joe2014}.
\end{example}

\section{Estimation}
\label{subsec:estimation}

The empirical analogues of generalised covariances and correlations are fairly straightforward, exploiting plug-in estimators and the method of moments.
They constitute natural 
and consistent
estimators for the respective quantities on the population level.
Suppose we have a random sample $\{(X_i,Y_i),\  i=1,\ldots, n\}$ from $F_{X,Y}$ and we are interested in estimating the generalised covariance, $\Cov_{T_1,T_2}(X,Y)$, or generalised correlation, $\Cor_{T_1,T_2}(X,Y)$, at $T_1$, $T_2$.
Further, suppose that the corresponding generalised errors are 
induced by increasing and non-constant $\L_1$- and $\L_2$-identification functions $v_{1}$ and $v_{2}$, invoking Proposition \ref{prop:generalised error}.
To start with, suppose that $\L_1$ and $\L_2$ contain all random variables with any empirical distribution. 
Then, we can simply apply the definition of the generalised covariance and correlation to the empirical distribution and use this as an estimator for the population quantity.
To obtain the sample analogue $\hat t_1^n := \widehat T_1^n(X)$ 
of an identifiable functional $T_1(X)$, we use the solution in $t$ of
\be{eq:t_1^n}
\frac{1}{n}\sum_{i=1}^n v_{T_1}(t,X_i)\stackrel{!}{=}0.
\ee
Similarly, we write $\hat t_2^n := \widehat T_2^n(Y)$.
Then, the generalised covariance on the sample level is 
\be{eq:emp cov}
\widehat \Cov_{T_1,T_2}^n(X,Y) := \frac{1}{n} \sum_{i=1}^n v_{T_1}(\hat t_1^n,X_i)v_{T_2}(\hat t_2^n,Y_i).
\ee
To obtain the normalisation for the generalised correlation, we take the empirical marginal distributions from the random sample $\{(X_i,Y_i),\  i=1,\ldots, n\}$, but couple the observations with the corresponding co- and countermonotonicity copulas, utilising the increasing order statistics  $X_{(1)}\le \cdots \le X_{(n)}$ and $Y_{(1)}\le \cdots \le Y_{(n)}$.
Hence, we obtain for the comonotonic coupling $(X,Y'')$ and the countermonotonic coupling $(X,Y')$
\begin{align}
\label{eq:summation}
\widehat \Cov_{T_1,T_2}^n(X,Y'') 
&:= \frac{1}{n} \sum_{i=1}^n v_{T_1}(\hat t_1^n,X_{(i)})v_{T_2}(\hat t_2^n,Y_{(i)}), \\
\nonumber
\widehat \Cov_{T_1,T_2}^n(X,Y') 
&:= \frac{1}{n} \sum_{i=1}^n v_{T_1}(\hat t_1^n,X_{(i)})v_{T_2}(\hat t_2^n,Y_{(n-i+1)}).
\end{align}
Finally, we set 
\be{eq:emp cor}
\widehat \Cor_{\T_1, \T_2}^n(X,Y) 
:=
\begin{cases}
	\frac{\widehat \Cov_{\T_1, \T_2}^n(X,Y)}{|\widehat \Cov_{\T_1, \T_2}^n(X,Y'\,)|}, &\text{if } \widehat \Cov_{\T_1, \T_2}^n(X,Y)<0, \\[1em]
	\frac{\widehat \Cov_{\T_1, \T_2}^n(X,Y)}{|\widehat \Cov_{\T_1, \T_2}^n(X,Y''\,)|}, &\text{if } \widehat \Cov_{\T_1, \T_2}^n(X,Y)\ge0.
\end{cases}
\ee

For threshold and quantile correlation the estimation simplifies as the normalisation does not involve the co- and countermonotonic coupling. For threshold correlation, $\TCor_{a,b}(X,Y)$, we only need to estimate $F_{X,Y} (a,b)$, $F_X(a)$ and $F_Y(b)$ via   
\begin{equation}
	\label{eq:threshold correlation estimation}
\widehat F^n_{X,Y}(a,b) = \frac{1}{n} \sum_{i=1}^n \one \{ X_i \le a, Y_i \le b \},
\quad
\widehat F^n_X(a) = \frac{1}{n} \sum_{i=1}^n \one \{ X_i \le a \},
\quad
\widehat F^n_Y(b) = \frac{1}{n} \sum_{i=1}^n \one \{ Y_i \le b\}
\end{equation}
and replace the theoretical quantities by their empirical counterparts in $\TCov_{a,b}(X,Y)$ from \eqref{eq:Threshold cov} and the normalisation terms in Example \ref{example:Threshold correlation}.

To estimate quantile correlation, $\QCor_{\alpha,\beta}(X,Y)$, we first obtain the sample $\alpha$- and $\beta$-quantiles, or more formally, for the canonical identification functions $v_{q_\alpha}$ and $v_{q_\beta}$ in \eqref{eq:quantile id} we set
\[
\hat q^n_\alpha := \inf \{t \mid \frac{1}{n}\sum_{i=1}^n v_{q_\alpha}(t,X_i)\le0\},
\qquad
\hat q^n_\beta := \inf \{t \mid \frac{1}{n}\sum_{i=1}^n v_{q_\beta}(t,Y_i)\le0\}.
\]
Then, we obtain estimates for $C_{X,Y}\big(F_X(q_\alpha(X)), F_Y(q_\beta(Y))\big)$, $F_X(q_\alpha(X))$ and $F_Y(q_\beta(Y))$ by replacing the thresholds $a$, $b$ in \eqref{eq:threshold correlation estimation} with $q^n_\alpha$ and $q^n_\beta$ and replacing the theoretical quantities in \eqref{eq:Quantile correction copula} and the normalisation terms in Example \ref{example:Quantile correlation} with those estimators.

To construct estimators for distributional correlations, we use the respective estimators for the local correlations just described on a grid (of size 10000 in the following applications).  We close this section by establishing the consistency of the described estimators, subject to typical regularity conditions.

\begin{proposition}
	\label{thm:consistency}
	Suppose that for $i=1,2$ the identification functions $v_i$ are increasing, \emph{strict} $\L_i$-identification functions for $T_i$, such that 
	the families $v_1(\cdot,x)_{x\in\R}$ and $v_2(\cdot, y)_{y\in \R}$ are pointwise equicontinuous and let $X\in\L_1, Y \in \L_2$.
	Further, suppose that $\{F_X: X\in\L_1\}$ and $\{F_Y:Y\in\L_2\}$ contain all empirical distribution functions. \\
	Then, the estimators $\widehat \Cov_{T_1,T_2}^n(X,Y)$ \eqref{eq:emp cov} and $\widehat \Cor_{\T_1, \T_2}^n(X,Y)$ \eqref{eq:emp cor} based on a random sample are strongly consistent. That is, they converge almost surely to $\Cov_{T_1,T_2}(X,Y)$ and $\Cor_{T_1,T_2}(X,Y)$, respectively.
\end{proposition}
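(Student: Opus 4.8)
The plan is to prove almost sure convergence of the three covariance estimators $\widehat\Cov^n_{T_1,T_2}(X,Y)$, $\widehat\Cov^n_{T_1,T_2}(X,Y'')$, $\widehat\Cov^n_{T_1,T_2}(X,Y')$ and then deduce convergence of $\widehat\Cor^n_{T_1,T_2}(X,Y)$ via the continuous mapping theorem. \emph{Step 1 (consistency of the plug-in functionals).} The assumption that $\{F_X:X\in\L_1\}$ and $\{F_Y:Y\in\L_2\}$ contain all empirical distribution functions makes $\hat t_1^n$, $\hat t_2^n$ in \eqref{eq:t_1^n} well defined. Since the sample criterion $t\mapsto\frac1n\sum_i v_1(t,X_i)$ is an average of members of the equicontinuous family $\{v_1(\cdot,x)\}_{x\in\R}$, it is equicontinuous with the same modulus; together with the pointwise strong law $\frac1n\sum_i v_1(t,X_i)\to\E[v_1(t,X)]=:g_1(t)$ (finite because $v_1$ is $\L_1$-integrable and $X\in\L_1$) this yields almost sure convergence, uniformly on compacts, to $g_1$. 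As $v_1$ is a \emph{strict} identification function, $T_1(X)$ is the unique zero of $g_1$, and it is well separated since $g_1$ is monotone for the canonical identification functions; a standard Z-estimator argument then forces $\hat t_1^n\to T_1(X)$ almost surely, and likewise $\hat t_2^n\to T_2(Y)$.

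\emph{Step 2 (covariance estimator).} Writing $t_i^\ast$ for the true functional values, add and subtract $\frac1n\sum_i v_1(t_1^\ast,X_i)v_2(t_2^\ast,Y_i)$ in \eqref{eq:emp cov}. This term tends almost surely to $\E[e_{T_1}(X)e_{T_2}(Y)]=\Cov_{T_1,T_2}(X,Y)$ by the strong law, the expectation being finite as $(X,Y)\in\D$. For the remainder, equicontinuity yields $|v_1(\hat t_1^n,X_i)-v_1(t_1^\ast,X_i)|\le\omega_1(|\hat t_1^n-t_1^\ast|)$ uniformly in $i$, where $\omega_1(\delta)\to0$ as $\delta\to0$, hence $\to0$ almost surely by Step 1; since $\frac1n\sum_i|v_1(t_1^\ast,X_i)|$ and $\frac1n\sum_i|v_2(\hat t_2^n,Y_i)|$ stay almost surely bounded, the remainder vanishes almost surely.

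\emph{Step 3 (normalisation terms).} I expect this to be the main obstacle. By Glivenko--Cantelli the empirical marginal distribution functions converge uniformly almost surely, hence the empirical quantile functions converge at each continuity point; realising the comonotone and countermonotone couplings in \eqref{eq:summation} as $\big((\hat F_X^n)^{-1}(V),(\hat F_Y^n)^{-1}(V)\big)$ and $\big((\hat F_X^n)^{-1}(V),(\hat F_Y^n)^{-1}(1-V)\big)$ with $V\sim\mathrm{Unif}(0,1)$, this shows they converge weakly, almost surely, to the population co- and countermonotone couplings of $F_X,F_Y$. Combined with $\hat t_i^n\to t_i^\ast$ and continuity of the $v_i$ in their second argument, the integrands $v_1(\hat t_1^n,\cdot)v_2(\hat t_2^n,\cdot)$ converge continuously, and a generalised weak-convergence argument then gives $\widehat\Cov^n_{T_1,T_2}(X,Y'')\to\Cov_{T_1,T_2}(X,Y'')$ and $\widehat\Cov^n_{T_1,T_2}(X,Y')\to\Cov_{T_1,T_2}(X,Y')$. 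The crux is the uniform integrability of the rearranged products $v_1(\hat t_1^n,X_{(i)})v_2(\hat t_2^n,Y_{(i)})$: under the Cauchy--Schwarz-admissible condition $e_{T_i}(X),e_{T_i}(Y)\in L^2$ it follows from $L^2$-convergence of the two factors and the Cauchy--Schwarz inequality, and in general from a truncation argument exploiting that the co-/countermonotone integrands are monotone in $V$ (so their positive and negative parts concentrate near the endpoints of $(0,1)$) together with the hypothesis that $\Cov_{T_1,T_2}(X,Y')$ and $\Cov_{T_1,T_2}(X,Y'')$ are finite.

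\emph{Step 4 (correlation estimator).} By Proposition \ref{prop:Frechet-Hoeffding}(ii) the limits satisfy $\Cov_{T_1,T_2}(X,Y')<0<\Cov_{T_1,T_2}(X,Y'')$ since the generalised errors are not constant, so by Step 3 both candidate denominators in \eqref{eq:emp cor} converge almost surely to strictly positive numbers. If $\Cov_{T_1,T_2}(X,Y)\neq0$, then eventually $\widehat\Cov^n_{T_1,T_2}(X,Y)$ has the correct sign almost surely, the matching case of \eqref{eq:emp cor} is selected, and the continuous mapping theorem gives $\widehat\Cor^n_{T_1,T_2}(X,Y)\to\Cor_{T_1,T_2}(X,Y)$. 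If $\Cov_{T_1,T_2}(X,Y)=0$, then $\Cor_{T_1,T_2}(X,Y)=0$, the numerator tends to $0$ while whichever denominator is used stays bounded away from $0$, so again $\widehat\Cor^n_{T_1,T_2}(X,Y)\to0$; the degenerate case of an almost surely constant error is trivial, as then both quantities equal $0$ by Definition \ref{defn:generalised correlation}.
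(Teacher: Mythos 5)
Your Steps 1, 2 and 4 are essentially the paper's argument: the paper gets $\hat t_i^n\to T_i$ by citing \citet[Corollary 3.2]{HuberRonchetti2009} rather than re-deriving a Z-estimator argument, then uses exactly your equicontinuity-plus-strong-law decomposition for \eqref{eq:emp cov}, and the continuous mapping theorem (with Proposition \ref{prop:Frechet-Hoeffding}(ii) guaranteeing non-degenerate denominators) for \eqref{eq:emp cor}. Where you genuinely diverge is Step 3, and that is also where your proposal has a real gap. The paper does not go through weak convergence of empirical couplings at all: it argues that the rank-matched samples $\{(X_{(i)},Y_{(i)})\}$ and $\{(X_{(i)},Y_{(n-i+1)})\}$ in \eqref{eq:summation} can be treated as random samples from $F_{X,Y''}$ and $F_{X,Y'}$ — via the permutation identities and, when the marginals have atoms, the distributional transform of \citet{Rueschendorf2009} — so that the very same SLLN-plus-plug-in argument from Step 2 applies verbatim, with the assumed integrability of $e_{T_1}(X)e_{T_2}(Y'')$ and $e_{T_1}(X)e_{T_2}(Y')$ doing all the work.

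Your alternative route leaves two concrete holes. First, you invoke ``continuity of the $v_i$ in their second argument'', which is not among the hypotheses: the $v_i$ are only assumed increasing in $x$ and equicontinuous in $t$, and an increasing $v_i(t,\cdot)$ may be discontinuous, so continuous convergence of $v_1(\hat t_1^n,(\hat F_X^n)^{-1}(\cdot))v_2(\hat t_2^n,(\hat F_Y^n)^{-1}(\cdot))$ does not follow without an extra argument that the discontinuity points are charged with probability zero by the limiting couplings (or a separate treatment of atoms). Second, and more importantly, the uniform integrability of the rearranged products is precisely the crux and you only assert it: the sketched truncation ``exploiting monotonicity in $V$'' must in particular show that boundary terms such as $n^{-1}\,v_1(\hat t_1^n,X_{(n)})\,v_2(\hat t_2^n,Y_{(n)})$ vanish almost surely, which does not follow from marginal integrability of the errors and requires the finiteness of the comonotone/countermonotone covariances to enter in a quantitative way (this is exactly what the paper's iid-resampling device buys for free through the strong law). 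Until that step is carried out, the convergence of $\widehat\Cov^n_{T_1,T_2}(X,Y'')$ and $\widehat\Cov^n_{T_1,T_2}(X,Y')$, and hence of \eqref{eq:emp cor}, is not established.
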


Threshold covariance and threshold correlation do not satisfy the conditions of Proposition \ref{thm:consistency}, but the strong consistency follows directly from the strong law of large numbers and the continuous mapping theorem.
For quantile covariance and quantile correlation, we establish strong consistency if the marginal distributions are continuous at the respective quantiles, see Proposition \ref{prop:consistency quantiles}.

\section{Data examples}
\label{subsec:data examples}

To illustrate the use of local, distributional and summary correlations in practice, we extract data on mixed-sex couples from the 2019 wave of the Panel Study of Income Dynamics.\footnote{Citation: Panel Study of Income Dynamics, public use dataset. Produced and distributed by the Survey Research Center, Institute for Social Research, University of Michigan, Ann Arbor, MI (2023).} 
After data cleaning we have a sample of 4417 couples living in the same households, of whom 85 \% are married. 
As all of the variables we analyse are discrete (either due to being inherently discrete or, e.g., heights being recorded in full inches), we use bubble plots for our scatter plots to avoid overplotting. 
Further, note that CDF and quantile function correlation only change their value at jumps of the CDF (see Remark \ref{rem:last jump 2}). 
In all graphical representations we focus on the ranges between the 2.5\%- and the 97.5\%-quantiles.

\begin{figure}
	\centering
	\includegraphics[width=1\linewidth]{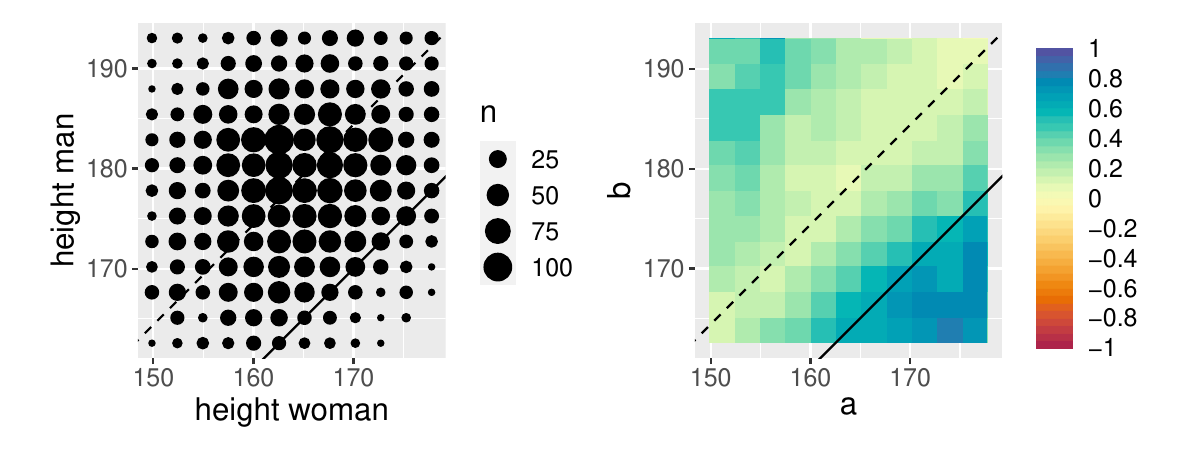}
	\caption{Bubble plot and empirical $\CDFCor$ of heights of mixed-sex couples, solid line: diagonal; dotted line: diagonal plus average height difference}
	\label{fig:cdfcorheightsplusscatterpluslines}
\end{figure}

We first consider the heights of the couples in cm. Figure \ref{fig:cdfcorheightsplusscatterpluslines} depicts a scatter plot and the corresponding CDF correlation. 
Heights of men and women are globally positively dependent as CDF correlation is positive everywhere, indicating a preference for assortative mating \citep{stulp2013}. 
Mean correlation equals $\MCor=0.215$ (Spearman's $\rho=0.189$), indicating a weak positive relation on average. 
The local dependence structure, however, varies strongly: 
In the lower right corner, the dependence is quite strong, while elsewhere (with the exception of the upper left corner)
it is weak. 
This reflects the male-taller norm in Western societies \citep{stulp2013}: 
For women and men fairly close to the average height difference (indicated by the dotted line), CDF correlation is close to 0, indicating that mating behaviour in this region is hardly influenced by the partner's (in relation to the own) height. 
As the height of the man approaches the height of the woman (the diagonal is represented by the solid line), CDF correlation rises abruptly, suggesting that the mating behaviour is strongly influenced by height in this region: 
heights of partners are strongly positively associated there.\footnote{Actually, the norm is rather that the woman should be at least a few centimetres smaller than the man.} 
When computing the regional summary correlation $\SCor_{\CDF,\lambda_B}$ from Example \ref{example:regional_measures} below and above the diagonal we consequently get 0.601 and 0.201, respectively. 
In the far upper left corner the dependence gets quite strong as well, reflecting the male-not-too-tall norm \citep{stulp2013}. 
Quantile function correlation yields qualitatively the same picture, see Figure \ref{fig:qfcorheights} in the Appendix and the discussion below.

\begin{figure}[ht]
	\centering
	\includegraphics[width=1\linewidth]{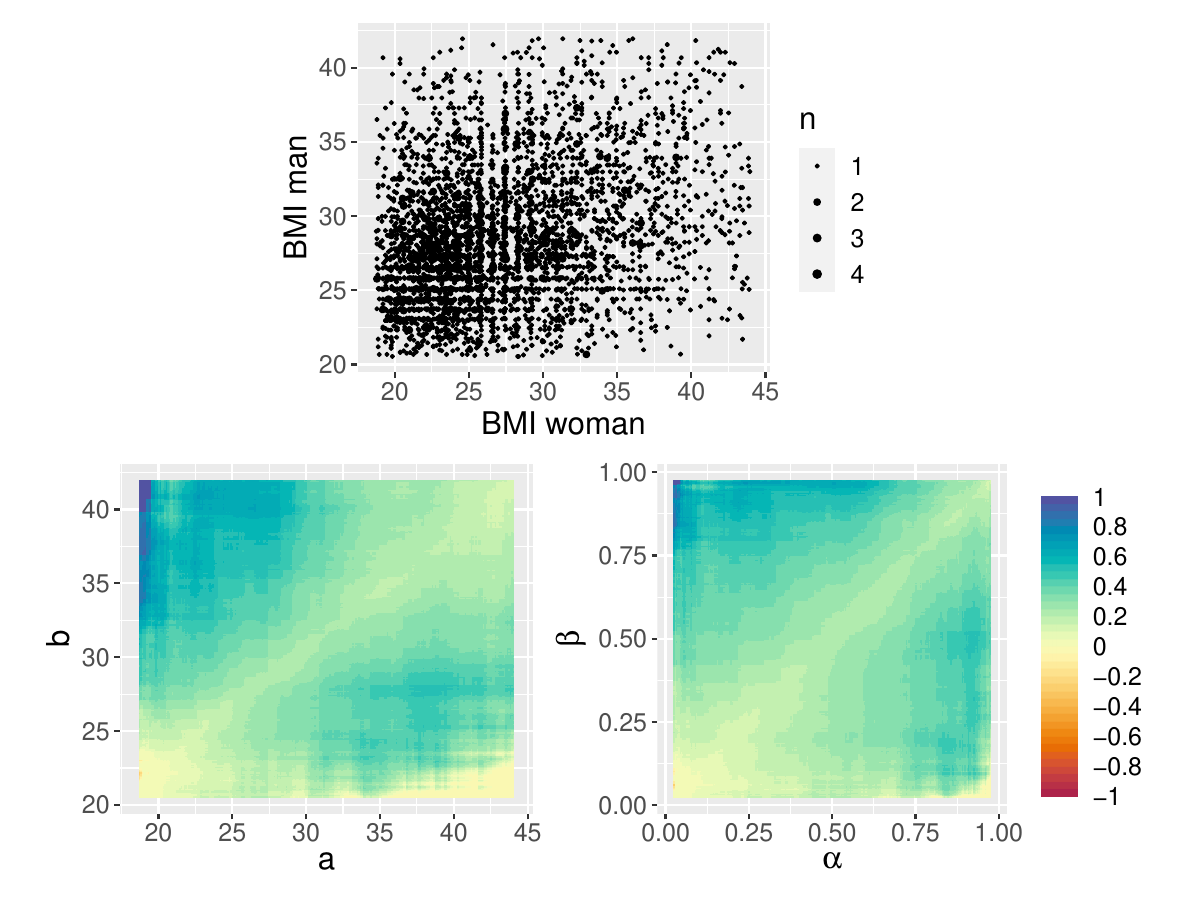}
	\caption{Bubble plot and empirical $\CDFCor$ and $\QFCor$ of BMIs of mixed-sex couples.}
	\label{fig:cdfcorqfcorbmisplusscatter}
\end{figure}

Figure \ref{fig:cdfcorqfcorbmisplusscatter} depicts the bubble plot, the CDF and quantile function correlation for the body mass indices (BMI, mass in kg divided by the square of height in cm) of the couples. 
Again the two variables are (almost) globally positively dependent and the canonical summary correlations from examples \ref{example:Summary_mean_correlation} and \ref{example:Spearman_correlation} take the values $\MCor=0.304$ and $\rho=0.287$. 
In the lower left corner below a BMI of 25, which corresponds to the threshold between having a normal weight and being overweight according to the World Health Organization (WHO),\footnote{\url{https://www.who.int/europe/news-room/fact-sheets/item/a-healthy-lifestyle---who-recommendations}, accessed: 7th May 2023}
$\CDFCor$ and $\QCor$ are close to 0, whereas elsewhere they are larger, in particular if one of the partners exceeds the threshold of 30, from which on a person is classified as obese according to the WHO. Thus, if both partners are in a normal weight range, there seems to be no association between BMIs, but once one of the partners is obsese, the assocation gets strongly positive.\footnote{Here, the dependence is of course not only generated by mating behaviour, but also by mutual influence in lifestyle.} 
This example nicely illustrates how CDF and quantile correlation may differ and how they complement each other: 
While $\CDFCor$ is often nicely interpretable as it operates on the observation scale itself, sometimes a lot of space in the plot is occupied by regions where few observations lie (which is why we chose to present only the central regions of the axes with 95\% of the probability mass in the first place). 
For example here, due to the marginal distributions of the BMIs being right-skewed, regions with higher values occupy comparatively more space in the plot. 
$\QFCor$, on the other hand, always has a solid interpretation in terms of quantile levels and naturally assigns space in the plot according to probability mass.

\begin{figure}

	\includegraphics[width=1\linewidth]{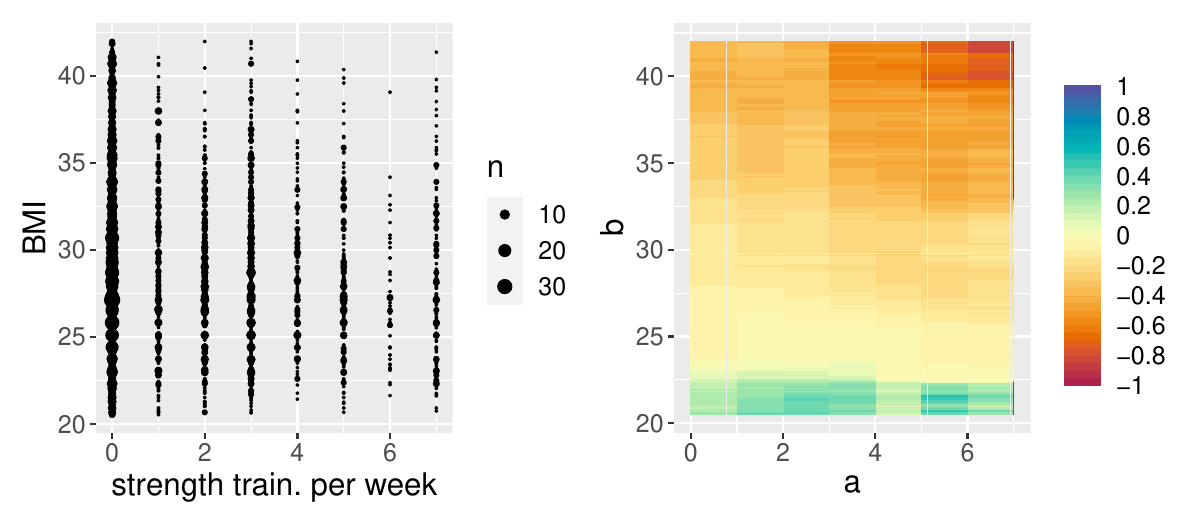}
	\caption{Bubble plot and empirical $\CDFCor$ of the number of strength trainings per week and BMI of the men.}
	\label{fig:cdfcorstrengthtrainingbmiplusscatter}
	
\end{figure}

Next, we consider an example exhibiting regions of positive and of negative dependence. 
Figure \ref{fig:cdfcorstrengthtrainingbmiplusscatter} contains the scatter plot as well as the CDF correlation of the number of strength trainings per week and the BMI of the men in our sample. 
On average, frequency of strength training is negatively associated with BMI, $\MCor=-0.112$ and $\rho=-0.088$. 
Locally, the negative dependence starts to arise with about the overweight threshold of 25 and gets stronger the higher BMI gets, being particularly strong for people above the obesity threshold of 30. 
For men in the upper normal range of BMI, $\CDFCor$ is close to 0, whereas for men in the lower normal range, the association is positive. 
Thus, for people with a low BMI strength training is associated with a gain in body weight -- probably due to increased muscle mass, while for overweight people it is associated with a loss in body weight -- probably due to fat loss outweighing increased muscle mass.  

In Part \ref{app:data examples} of the Appendix we provide the above-mentioned additional two figures and compare mean correlation from Example \ref{ex:Mean correlation} and Pearson correlation from \eqref{eq:Pearson} for the three examples above and a fourth one.

\section{Conclusion}
\label{sec:conclusion}

We present new concepts and measures of dependence.
On the one hand, we consider dependence from the perspective of statistical functionals and put forward generalised covariances and correlations as corresponding dependence measures. 
On the other hand, with our local and distributional correlations we introduce local dependence (including tail dependence) measures as well as function-valued measures uncovering the full dependence structure. 
Summary correlations average over distributional correlations and close the loop to classical measures of dependence like covariance, Pearson correlation and Spearman's $\rho$. 
We analyse the properties of the new measures and present first applications.

These measures open many opportunities for future research. 
First of all, they will be useful in a wide variety of applications, possibly providing deeper insights about dependence structures than classical measures. 
Our measures are concerned with dependence between two random variables. 
Thus, they naturally extend to settings where pairwise dependence plays a role, for example correlation matrices, temporal or spatial dependence. 
Extending our measures to examine dependence for a vector of variables jointly is naturally more difficult, as negative dependence turns into a subtle concept for more than two variables \citep[Chapter 3.3]{Mari2001}. 
Further aspects of statistical inference, beyond those related to estimation and discussed in the paper, are relegated to future research.
Our measures are fundamentally different from recent popular measures of functional dependence \citep{szekely2007, szekely2009, chatterjee2021, reshef2011}, which are not measures of directional dependence, but only of strength of dependence, thus mapping to $[0,1]$. 
Analysing possible connections of those measures to distributional or summary correlations is certainly interesting. 
Finally, exploring the links between generalised correlations and generalised regression approaches might be fruitful.

\section*{Acknowledgements}
We are grateful to Patrick Cheridito, Timo Dimitriadis, Tilmann Gneiting, Bettina Gr\"un, Alexander Jordan, Johanna Ne\v{s}lehov\'a, Melanie Schienle and Jan-Lukas Wermuth for valuable discussions about the topic. We further thank seminar participants at Heidelberg University, Heidelberg Institute for Theoretical Studies, Vienna University of Economics and Business, University of Sussex and ETH Z\"urich and conference participants at DAGStat 2022 and the Bernoulli Young Researcher Event 2022 for helpful comments. Marc-Oliver Pohle is grateful for support by the Klaus Tschira Foundation, Germany. The collection of data used in this study was partly supported by the National Institutes of Health under grant number R01 HD069609 and R01 AG040213, and the National Science Foundation under award numbers SES 1157698 and 1623684.

\bibliographystyle{apalike}
\bibliography{library_correlations}

\begin{thebibliography}{}

\bibitem[Balakrishnan and Lai, 2009]{Balakrishnan2009}
Balakrishnan, N. and Lai, C.~D. (2009).
\newblock {\em Continuous bivariate distributions}.
\newblock Springer Science \& Business Media.

\bibitem[Blomqvist, 1950]{Blomqvist1950}
Blomqvist, N. (1950).
\newblock On a measure of dependence between two random variables.
\newblock {\em Annals of Mathematical Statistics}, 4(21):593--600.

\bibitem[Chatterjee, 2021]{chatterjee2021}
Chatterjee, S. (2021).
\newblock A new coefficient of correlation.
\newblock {\em Journal of the American Statistical Association},
  116(536):2009--2022.

\bibitem[Chernozhukov et~al., 2013]{chernozhukov2013}
Chernozhukov, V., Fern{\'a}ndez-Val, I., and Melly, B. (2013).
\newblock Inference on counterfactual distributions.
\newblock {\em Econometrica}, 81(6):2205--2268.

\bibitem[Cole, 1949]{cole1949}
Cole, L.~C. (1949).
\newblock The measurement of interspecific associaton.
\newblock {\em Ecology}, 30(4):411--424.

\bibitem[Coles et~al., 1999]{Coles1999}
Coles, S., Heffernan, J., and Tawn, J. (1999).
\newblock Dependence measures for extreme value analyses.
\newblock {\em Extremes}, 2(4):339--365.

\bibitem[Dimitriadis et~al., 2023]{Osband_Id}
Dimitriadis, T., Fissler, T., and Ziegel, J. (2023).
\newblock Osband's principle for identification functions.
\newblock {\em Statistical Papers}, forthcoming.

\bibitem[Dimitriadis et~al., 2021]{DimiPattonSchmidt2019}
Dimitriadis, T., Patton, A.~J., and Schmidt, P.~W. (2021).
\newblock Testing forecast rationality for measures of central tendency.
\newblock {\em Preprint}.
\newblock
  \href{https://arxiv.org/abs/1910.12545}{https://arxiv.org/abs/1910.12545}.

\bibitem[Dubrule, 2017]{Dubrule2017}
Dubrule, O. (2017).
\newblock Indicator variogram models: Do we have much choice?
\newblock {\em Mathematical Geosciences}, 49(4):441--465.

\bibitem[Embrechts et~al., 2002]{Embrechts2002}
Embrechts, P., McNeil, A., and Straumann, D. (2002).
\newblock Correlation and dependence in risk management: properties and
  pitfalls.
\newblock In Dempster, M., editor, {\em Risk management: value at risk and
  beyond}, pages 176--223. Cambridge University Press.

\bibitem[Fiebig et~al., 2017]{Fiebig2017}
Fiebig, U.-R., Strokorb, K., and Schlather, M. (2017).
\newblock The realization problem for tail correlation functions.
\newblock {\em Extremes}, 20(1):121--168.

\bibitem[Fr\'echet, 1957]{Frechet1957}
Fr\'echet, M. (1957).
\newblock Les tableaux de corr\'elation dont les marges sont donn\'ees.
\newblock {\em Annales Universit\'e de Lyon S\'erie A: Sciences Math\'ematiques
  et Astronomie}, 4:13--31.

\bibitem[Genest and Ne{\v{s}}lehov{\'a}, 2007]{genest2007}
Genest, C. and Ne{\v{s}}lehov{\'a}, J. (2007).
\newblock A primer on copulas for count data.
\newblock {\em ASTIN Bulletin}, 37(2):475--515.

\bibitem[Gneiting and Resin, 2021]{GneitingResin2021}
Gneiting, T. and Resin, J. (2021).
\newblock Regression diagnostics meets forecast evaluation: Conditional
  calibration, reliability diagrams, and coefficient of determination.
\newblock {\em arXiv preprint arXiv:2108.03210}.

\bibitem[Han et~al., 2016]{Han2016}
Han, H., Linton, O., Oka, T., and Whang, Y.-J. (2016).
\newblock The cross-quantilogram: Measuring quantile dependence and testing
  directional predictability between time series.
\newblock {\em Journal of Econometrics}, 193(1):251--270.

\bibitem[Hansen, 1982]{Hansen1982}
Hansen, L.~P. (1982).
\newblock Large sample properties of generalized method of moments estimators.
\newblock {\em Econometrica}, 50(4):1029--54.

\bibitem[Hoeffding, 1940]{hoeffding1940}
Hoeffding, W. (1940).
\newblock Masstabinvariante {K}orrelationstheorie.
\newblock {\em Schriften des Mathematischen Instituts und Instituts fur
  Angewandte Mathematik der Universitaet Berlin}, 5:181--233.

\bibitem[Holland and Wang, 1987]{Holland1987}
Holland, P.~W. and Wang, Y.~J. (1987).
\newblock Dependence function for continuous bivariate densities.
\newblock {\em Communications in Statistics-Theory and Methods},
  16(3):863--876.

\bibitem[Huber, 1967]{Huber1967}
Huber, P.~J. (1967).
\newblock {T}he behavior of maximum likelihood estimates under nonstandard
  conditions.
\newblock In {\em Proceedings of the Fifth Berkeley Symposium on Mathematical
  Statistics and Probability}, pages 221--233. Berkeley: University of
  California Press.

\bibitem[Huber and Ronchetti, 2009]{HuberRonchetti2009}
Huber, P.~J. and Ronchetti, E.~M. (2009).
\newblock {\em Robust Statistics}.
\newblock John Wiley \& Sons, second edition.

\bibitem[Joe, 1993]{Joe1993}
Joe, H. (1993).
\newblock Parametric families of multivariate distributions with given margins.
\newblock {\em Journal of Multivariate Analysis}, 46(2):262--282.

\bibitem[Joe, 2014]{Joe2014}
Joe, H. (2014).
\newblock {\em Dependence modeling with copulas}.
\newblock CRC press.

\bibitem[Jones, 1996]{Jones1996}
Jones, M.~C. (1996).
\newblock The local dependence function.
\newblock {\em Biometrika}, 83(4):899--904.

\bibitem[Kneib et~al., 2023]{kneib2021}
Kneib, T., Silbersdorff, A., and S{\"a}fken, B. (2023).
\newblock Rage against the mean--a review of distributional regression
  approaches.
\newblock {\em Econometrics and Statistics}, 26:99--123.

\bibitem[Koenker and Bassett~Jr, 1978]{koenker1978}
Koenker, R. and Bassett~Jr, G. (1978).
\newblock Regression quantiles.
\newblock {\em Econometrica}, 46(1):33--50.

\bibitem[Lehmann, 1966]{lehmann1966}
Lehmann, E.~L. (1966).
\newblock Some concepts of dependence.
\newblock {\em The Annals of Mathematical Statistics}, 37(5):1137--1153.

\bibitem[Li et~al., 2015]{Li2015}
Li, G., Li, Y., and Tsai, C.-L. (2015).
\newblock Quantile correlations and quantile autoregressive modeling.
\newblock {\em Journal of the American Statistical Association},
  110(509):246--261.

\bibitem[Linton and Whang, 2007]{Linton2007}
Linton, O. and Whang, Y.-J. (2007).
\newblock The quantilogram: With an application to evaluating directional
  predictability.
\newblock {\em Journal of Econometrics}, 141(1):250--282.

\bibitem[Mari and Kotz, 2001]{Mari2001}
Mari, D.~D. and Kotz, S. (2001).
\newblock {\em Correlation and dependence}.
\newblock World Scientific.

\bibitem[McNeil et~al., 2015]{mcneil2015}
McNeil, A.~J., Frey, R., and Embrechts, P. (2015).
\newblock {\em Quantitative risk management: concepts, techniques and tools}.
\newblock Princeton University Press.

\bibitem[Newey and McFadden, 1994]{NeweyMcFadden1994}
Newey, W.~K. and McFadden, D. (1994).
\newblock {L}arge sample estimation and hypothesis testing.
\newblock In Engle, R.~F. and McFadden, D., editors, {\em {Handbook of
  Econometrics}}, volume~4, chapter~36, pages 2111--2245. Elsevier.

\bibitem[Newey and Powell, 1987]{newey1987}
Newey, W.~K. and Powell, J.~L. (1987).
\newblock Asymmetric least squares estimation and testing.
\newblock {\em Econometrica}, 55(4):819--847.

\bibitem[Nolde and Ziegel, 2017]{NoldeZiegel2017}
Nolde, N. and Ziegel, J.~F. (2017).
\newblock {Elicitability and backtesting: Perspectives for banking regulation}.
\newblock {\em Annals of Applied Statistics}, 11(4):1833--1874.

\bibitem[R{\'e}nyi, 1959]{Renyi1959}
R{\'e}nyi, A. (1959).
\newblock On measures of dependence.
\newblock {\em Acta mathematica hungarica}, 10(3-4):441--451.

\bibitem[Reshef et~al., 2011]{reshef2011}
Reshef, D.~N., Reshef, Y.~A., Finucane, H.~K., Grossman, S.~R., McVean, G.,
  Turnbaugh, P.~J., Lander, E.~S., Mitzenmacher, M., and Sabeti, P.~C. (2011).
\newblock Detecting novel associations in large data sets.
\newblock {\em Science}, 334(6062):1518--1524.

\bibitem[R{\"u}schendorf, 2009]{Rueschendorf2009}
R{\"u}schendorf, L. (2009).
\newblock On the distributional transform, {S}klar's theorem, and the empirical
  copula process.
\newblock {\em Journal of Statistical Planning and Inference},
  139(11):3921--3927.

\bibitem[Schweizer and Wolff, 1981]{Schweizer1981}
Schweizer, B. and Wolff, E.~F. (1981).
\newblock On nonparametric measures of dependence for random variables.
\newblock {\em Annals of Statistics}, 9(4):879--885.

\bibitem[Stulp et~al., 2013]{stulp2013}
Stulp, G., Buunk, A.~P., Pollet, T.~V., Nettle, D., and Verhulst, S. (2013).
\newblock Are human mating preferences with respect to height reflected in
  actual pairings?
\newblock {\em PLoS One}, 8(1):e54186.

\bibitem[Sz{\'e}kely and Rizzo, 2009]{szekely2009}
Sz{\'e}kely, G.~J. and Rizzo, M.~L. (2009).
\newblock Brownian distance covariance.
\newblock {\em Annals of Applied Statistics}, 3(4):1236--1265.

\bibitem[Sz{\'e}kely et~al., 2007]{szekely2007}
Sz{\'e}kely, G.~J., Rizzo, M.~L., and Bakirov, N.~K. (2007).
\newblock Measuring and testing dependence by correlation of distances.
\newblock {\em Annals of Statistics}, 35(6):2769--2794.

\bibitem[Tj{\o}stheim and Hufthammer, 2013]{Tjostheim2013}
Tj{\o}stheim, D. and Hufthammer, K.~O. (2013).
\newblock Local {G}aussian correlation: A new measure of dependence.
\newblock {\em Journal of Econometrics}, 172(1):33--48.

\bibitem[Tj{\o}stheim et~al., 2022]{Tjostheim2022}
Tj{\o}stheim, D., Otneim, H., and St{\o}ve, B. (2022).
\newblock Statistical dependence: Beyond {P}earson's rho.
\newblock {\em Statistical Science}, 37(1):90--109.

\bibitem[Vandenhende and Lambert, 2003]{VandenhendeLambert2003}
Vandenhende, F. and Lambert, P. (2003).
\newblock Improved rank-based dependence measures for categorical data.
\newblock {\em Statistics \& Probability Letters}, 63(2):157--163.

\end{thebibliography}
	
\newpage	
	
\appendix
\appendixpage

\section{Proofs and technical results}
\label{sec:proofs}

	\begin{proof}[Proof of Proposition \ref{prop:generalised error}]
		We show that the three properties of a generalised error from Definition \ref{def:generalised error} are satisfied.
		Property \eqref{prop:centred} follows from the fact that $T$ is an $\L$-identification function. 
		Property \eqref{prop:increasing} is an immediate consequence from the assumption that $v_T$ is increasing.
		To verify \eqref{prop:sign change}, it suffices to show that $v_{\T}(t,x)(x-t)\ge0$ for all $(t,x)\in\A\times \R$.
		Suppose this is violated by some $x_0\neq t_0$. 
		If $t_0 < x_0$, this means that 
		$v_{\T}(t_0,x_0)<0$. 
		Since $v_{\T}$ is increasing it holds that $v_{\T}(t_0,x)<0$ for all $x\le x_0$. 
		Due to Assumption \ref{ass:T} there is some $X_0\in\L$ such that $T(X_0)=t_0$ and $\esssup(X_0)=x_0$. 
		Then clearly $\E[v(t_0,X_0)]<0$, which violates the fact that $v_{\T}$ is an $\L$-identification function for $\T$.
		The case $t_0 > x_0$ works analogously. 
	\end{proof}

\begin{proof}[Proof of Lemma \ref{lem:symmetric}]
	The ``if'' direction is obvious. 
	For the ``only if'' direction, we just need to show that $X$ and $Y$ have symmetric distributions. 
	For $X$, we obtain that there exist $a, a'\in\R$, $b, b'>0$ such that $a+bX \stackrel{\mathrm{d}}{=}a'- b'X$. This implies that there are $c\in\R$ and $\lambda>0$ such that 
	$X-c \stackrel{\mathrm{d}}{=} - \lambda(X-c)$.
	Then the claim follows from Lemma \ref{lem:auxiliary} and the fact that $X$ is non-constant. 
	The argument for $Y$ works the same.
\end{proof}

\begin{lemma}\label{lem:auxiliary}
	For any $X\in L^0(\R)$ we have the following implication.
	If $X\stackrel{\mathrm{d}}{=}-\lambda X$ for some $\lambda>0$, then $\lambda=1$ or $X=0$ almost surely.
\end{lemma}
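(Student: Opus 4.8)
The plan is to argue via characteristic functions, which are available for any $X\in L^0(\R)$ with no integrability assumption. Write $\varphi_X(t):=\E[e^{itX}]$ for $t\in\R$. The hypothesis $X\dist-\lambda X$ is equivalent to $\varphi_X(t)=\varphi_X(-\lambda t)$ for all $t\in\R$. First I would iterate this identity: replacing $t$ by $-\lambda t$ repeatedly gives $\varphi_X(t)=\varphi_X\big((-\lambda)^n t\big)$ for every $n\in\mathbb N$ and every $t\in\R$.

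Next I would reduce to the case $\lambda\in(0,1)$. Multiplying $X\dist-\lambda X$ by $-1$ gives $-X\dist\lambda X$, and scaling by $1/\lambda>0$ then gives $X\dist-(1/\lambda)X$; so the hypothesis is symmetric under $\lambda\leftrightarrow1/\lambda$, and whenever $\lambda\neq1$ we may assume $\lambda<1$ without loss of generality. In that regime $(-\lambda)^n t\to0$ as $n\to\infty$ for each fixed $t$.

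Finally I would pass to the limit, using that $\varphi_X$ is continuous at the origin with $\varphi_X(0)=1$: from $\varphi_X(t)=\varphi_X\big((-\lambda)^n t\big)$ we obtain $\varphi_X(t)=1$ for all $t\in\R$, and by uniqueness of characteristic functions this forces $X=0$ almost surely. If $\lambda=1$, there is nothing to prove.

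The argument is short, so there is no serious obstacle; the only point requiring attention is the symmetry step that lets us assume $\lambda<1$, which ensures the iteration drives the argument of $\varphi_X$ towards $0$ rather than towards infinity (equivalently, one could keep $\lambda>1$ and iterate $\varphi_X(t)=\varphi_X(-t/\lambda)$ instead). If one prefers to avoid characteristic functions, the same scheme works on distribution functions: $X\dist-\lambda X$ implies $|X|\dist\lambda|X|$, so the distribution function $G$ of $|X|$ satisfies $G(x)=G(x/\lambda)$ for all $x\ge0$; assuming $\lambda<1$ and iterating yields $G(x)=G(x/\lambda^n)\to1$ for every $x>0$, hence $G(0)=1$ by right-continuity, that is, $\P(X=0)=1$.
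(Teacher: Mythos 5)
Your argument is correct, but it takes a genuinely different route from the paper. The paper works directly with the law $\mu$ of $X$: from $X\stackrel{\mathrm{d}}{=}-\lambda X$ it deduces $\mu([-a,a])=\mu([-\lambda a,\lambda a])$ for every $a>0$, hence the annular sets $[-a,-\lambda a)$ and $(\lambda a,a]$ are $\mu$-null, and for $\lambda<1$ countably many such annuli cover $\R\setminus\{0\}$, forcing $X=0$ almost surely. Your main proof instead passes to characteristic functions: iterating $\varphi_X(t)=\varphi_X(-\lambda t)$, invoking the $\lambda\leftrightarrow 1/\lambda$ symmetry to assume $\lambda<1$ (the paper makes the same w.l.o.g.\ reduction, without spelling out the symmetry you verify), and concluding $\varphi_X\equiv 1$ from continuity at the origin plus uniqueness. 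Both approaches exploit the same underlying point, that scale invariance of a law under a contraction forces degeneracy at $0$; your transform route delegates the measure-theoretic bookkeeping to standard properties of characteristic functions (and needs no integrability, as you note), while the paper's route stays entirely at the level of the distribution itself. Your closing CDF variant for $|X|$, namely $G(x)=G(x/\lambda^n)\to 1$ for $x>0$ together with right-continuity at $0$, is essentially the paper's argument recast in terms of the distribution function of $|X|$ rather than null annuli, and is equally valid.
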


\begin{proof}[Proof of Lemma \ref{lem:auxiliary}]
	Suppose w.l.o.g.~that $\lambda\le 1$. Let $\mu$ be the distribution of $X$ on $\R$. 
	Then, for any symmetric interval $[-a,a]$, $a>0$ it holds that $\mu([-a,a]) = \mu([-\lambda a , \lambda a])$. Consequently, for any $a>0$ it holds that 
	$\mu([-a, - \lambda a)) = \mu((\lambda a,a])=0$. 
	This implies for $\lambda<1$ that $\mu(\R\setminus\{0\})=0$, which means that $X=0$ almost surely. The other possibility is that $\lambda=1$.
\end{proof}

\begin{proof}[Proof of Proposition \ref{prop:Frechet-Hoeffding}]
	The proof largely follows the arguments of Theorem 4 in \cite{Embrechts2002}.
	Since $e_{\T_1}(X)$ and $e_{\T_2}(Y)$ are increasing functions of $X$ and $Y$, respectively, the pair $\big(e_{\T_1}(X), e_{\T_2}(Y')\big)$ is countermonotonic and $\big(e_{\T_1}(X), e_{\T_2}(Y'')\big)$ is comonotonic.
	\begin{enumerate}[(i)]
		\item
		Let $F$ be the distribution function of $\big(e_{\T_1}(X), e_{\T_2}(Y)\big)$ with marginals $F_1, F_2$.
		Due to the formula of Hoeffding \cite[Lemma 7.27]{mcneil2015}, it holds that 
		\begin{align*}
			\Cov_{\T_1,\T_2}(X,Y')&= \iint \max \big(F_1(z_1) + F_2(z_2) - 1,0\big) - F_1(z_1)F_2(z_2)\d z_1 \mathrm{d} z_2, \\
			\Cov_{\T_1,\T_2}(X,Y'')&= \iint \min \big(F_1(z_1), F_2(z_2)\big) - F_1(z_1)F_2(z_2)\d z_1 \mathrm{d} z_2,
		\end{align*}
		using the identity \eqref{eq:cov identity}.
		The Fr\'echet bounds imply that for any $z_1,z_2\in\R$
		\[
		\max \big(F_1(z_1) + F_2(z_2) - 1,0\big)\le F(z_1,z_2) \le \min \big(F_1(z_1), F_2(z_2)\big).
		\]
		Hence, the integral 
		\[
		\iint F(z_1,z_2) - F_1(z_1)F_2(z_2)\d z_1 \mathrm{d} z_2
		\]
		exists, is finite, and coincides with $\Cov_{\T_1,\T_2}(X,Y)$.
		\item
		This follows along the lines of part (2) in \citet[Theorem 4]{Embrechts2002}.
		\item
		Let $g_1,g_2$ be strictly increasing functions such that $\P$-almost surely $g_1(X) = e_{\T_1}(X)$ and $g_2(Y) = e_{\T_2}(Y)$.
		Part (2) of \cite[Theorem 4]{Embrechts2002} asserts that an equality in \eqref{eq:Frechet-Hoeffding} is attained only if $\big(e_{\T_1}(X), e_{\T_2}(Y)\big)$ is co- or countermonotonic.
		Suppose the second equality in \eqref{eq:Frechet-Hoeffding} is attained such that $\big(e_{\T_1}(X), e_{\T_2}(Y)\big)$ is comonotonic. That means there are increasing functions $\nu_1,\nu_2$ and a random variable $Z$ such that 
		$\big(g_1(X),g_2(Y)\big) \dist \big(\nu_1(Z), \nu_2(Z)\big)$. Since $g_1$ and $g_2$ are strictly increasing, they can be inverted on their respective images.
		This yields that $(X,Y) \dist \big(g_1^{-1}\circ \nu_1(Z), g_2^{-1}\circ\nu_2(Z)\big)$. Since $g_1^{-1}$ and $g_2^{-1}$ are increasing, $(X,Y)$ is comonotonic.
	\end{enumerate}
\end{proof}

\begin{proof}[Proof of Proposition \ref{prop:nullity}]
	For $\mathrm{D}$ being $\TCov$ (and likewise $\TCor$), this is immediate due to \eqref{eq:Threshold cov}.
	For $\QCov$ (and likewise $\QCov$), the ``if'' direction is obvious. For the ``only if'' direction, it suffices to consider $\QCov$ and to invoke \eqref{eq:Quantile correction copula}. It suffices to show that $C_{X,Y}$ corresponds to the independence copula on $\text{range}(F_X)\setminus\{0,1\} \times \text{range}(F_Y)\setminus\{0,1\}$. Since the image of $(0,1)\ni\alpha\mapsto F_X(q_\alpha(X))$ is a superset of $\text{range}(F_X)\setminus\{0,1\}$, the claim follows.
\end{proof}

\begin{proof}[Proof of Corollary \ref{cor:distributional properties}]
	(i) follows directly from Theorem \ref{theorem:properties} (i) and the ``if'' direction of (ii) from \ref{theorem:properties} (iii). The ``only if'' direction of (ii) follows by the representations \eqref{eq:Threshold cov} and \eqref{eq:Quantile correction copula}, the normalisations from examples \ref{example:Threshold correlation} and \ref{example:Quantile correlation} and the fact that if $X$ and $Y$ have the upper (lower) Fr\'echet--Hoeffding bound as a copula, they are comonotonic (countermonotonic), see \citet[propositions 7.18 and 7.22]{mcneil2015}.
\end{proof}

\begin{proof}[Proof of Proposition \ref{prop:global dependence}]
	Proposition \ref{prop:dependence} implies the ``only if'' direction. The ``if'' direction follows from the representation \eqref{eq:Quantile correction copula} and Sklar's theorem.
\end{proof}

\begin{proof}[Proof of Proposition \ref{prop:dependence}]
	Assume that $X$ and $Y$ are globally positively dependent. Increasing functions of globally positively dependent variables are positively dependent \cite[Lemma 1 (iii)]{lehmann1966}. This implies that the generalised errors $e_{T_1}(X)$ and $e_{T_2}(Y)$ are globally positively dependent. 
	As global positive dependence implies non-negative Pearson correlation \cite[Lemma 3]{lehmann1966}, 
	\eqref{eq:cov identity} implies $\Cov_{T_1,T_2}(X,Y) \geq 0$. 
	For global negative dependence, the same reasoning applies.
\end{proof}

\begin{proof}[Proof of Corollary \ref{cor:summary correlations}]
	For (i) note that $\QCov_{\alpha,\beta} (X,Y)$ and $\TCov_{a,b} (X,Y)$ are bounded by the Fr\'echet--Hoeffding bounds from Examples \ref{example:Quantile correlation} and \ref{example:Threshold correlation}. 
	(ii) follows directly from Proposition \ref{prop:nullity}. (iii) follows from Corollary \ref{cor:distributional properties} (ii) and (iv) from the respective part of Theorem \ref{theorem:properties}.
\end{proof}

\section{Details on generalised errors induced by identification functions} \label{app:Generalised errors and identification functions}

%
%

	\begin{assump}\label{ass:T}
		Let $\T: \L \to\A\subseteq \R$ be a functional. 
		\begin{enumerate}[(a)]
			\item
			For each $X\in \L$ it holds that 
			\be{eq:bounded}
			\essinf (X)\le  T(X)\le \esssup (X).
			\ee
			\item
			For each $(t,x)\in\A\times \R$, $x\neq t$, there is an $X\in\L$ with $t=\T(X)$ such that
			$x = \esssup(X)$ if $t<x$, and
			$x = \essinf(X)$ if $t>x$.
		\end{enumerate}
	\end{assump}

\begin{remark}
\label{rem:dependence of id}	
	Obviously, a generalised $T_1-T_2$-covariance is not unique as it depends on the choices of the generalised errors for $T_1$ and $T_2$. However, when assuming that the errors are induced by the increasing identification functions $v_{T_1}$ and $v_{T_2}$, a recent characterisation result in \citet[Theorem 4]{Osband_Id} helps to get an understanding of how the choice of these identification functions influences the generalised covariance.
	It implies that -- under richness conditions on $\L$ and further regularity conditions\footnote{Richness essentially means that $T\colon\L\to\A$ is surjective (readily implied by \ref{ass:T} (b)) and $\{F_X\,|\,X\in\L\}$ is convex.} -- $v$ and $v'$ are two increasing and non-constant $\L$-identification functions for $T$ if and only if there is a positive function $h\colon\A\to\R$ such that 
	\begin{equation}
		\label{eq:Osband identification}
		v'(t,x) = h(t)v(t,x) \qquad \text{for all }x\in\R, \ t\in\A.
	\end{equation}
	This implies that two generalised covariances constructed from identification functions only differ in a factor depending on the two functionals $T_1(X)$ and $T_2(Y)$. Consequently, the	dependence on the choice of the identification function vanishes when considering generalised correlations.	
\end{remark}

	\begin{proposition}
		\label{prop:ind from h}
		Under the conditions of \citet[Theorem 4]{Osband_Id}, the generalised correlation at $\T_1$ and $\T_2$ constructed on generalised errors induced by  $v_1$ and $v_2$ via \eqref{eq:error} does not depend on the choice of the increasing and non-constant identification functions.
	\end{proposition}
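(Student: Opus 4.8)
The plan is to exploit the characterisation recalled in Remark~\ref{rem:dependence of id}. Let $v_1,v_2$ and $\tilde v_1,\tilde v_2$ be two pairs of increasing, non-constant $\L_i$-identification functions for $T_1$ and $T_2$. Under the conditions of \citet[Theorem 4]{Osband_Id}, equation \eqref{eq:Osband identification} supplies strictly positive functions $h_i\colon\A_i\to\R$ with $\tilde v_i(t,x) = h_i(t)\, v_i(t,x)$. Plugging in the true functional as first argument via \eqref{eq:error} gives, for $X\in\L_1$ and $Y\in\L_2$,
\[
e_{\tilde v_1}(X) = h_1\big(T_1(X)\big)\, e_{v_1}(X), \qquad e_{\tilde v_2}(Y) = h_2\big(T_2(Y)\big)\, e_{v_2}(Y).
\]
Since $T_1$ and $T_2$ are law-determined, $h_1(T_1(X))$ and $h_2(T_2(Y))$ are \emph{deterministic}, strictly positive numbers, not random variables. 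Hence the induced generalised covariances differ only by this positive constant,
\[
\Cov_{T_1,T_2}^{\tilde v_1,\tilde v_2}(X,Y) = h_1\big(T_1(X)\big)\, h_2\big(T_2(Y)\big)\, \Cov_{T_1,T_2}^{v_1,v_2}(X,Y),
\]
and the very same identity holds with $Y$ replaced by the countermonotonic partner $Y'$ or the comonotonic partner $Y''$, because these share the marginal law of $Y$ and $T_2$ is law-determined, so $T_2(Y')=T_2(Y'')=T_2(Y)$, and likewise the $X$-marginal is unchanged.

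Next I would feed this into the definition \eqref{eq:generalised cor}. Positivity of the factor $h_1(T_1(X))\,h_2(T_2(Y))$ has two consequences: it leaves the sign of $\Cov_{T_1,T_2}(X,Y)$ unchanged, so the piecewise definition selects the same branch under $v_i$ and under $\tilde v_i$; and in the ratio $\Cov_{T_1,T_2}(X,Y)/|\Cov_{T_1,T_2}(X,Y'')|$ (and analogously for $Y'$) the common factor cancels between numerator and denominator. Therefore $\Cor_{T_1,T_2}^{\tilde v_1,\tilde v_2}(X,Y) = \Cor_{T_1,T_2}^{v_1,v_2}(X,Y)$. The degenerate cases are immediate: multiplying by a nonzero constant preserves whether $e_{v_i}$ is almost surely constant, so if one error is constant both correlations are set to $0$ by Definition~\ref{defn:generalised correlation}; and the $L^1$-integrability conditions defining the domain $\D$ are likewise invariant under multiplication by a positive constant, so the two choices yield the same $\D$.

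The main obstacle is not a calculation but securing the hypotheses: one must be sure that the scalar $h_i(T_i(X))$ is genuinely constant and strictly positive rather than merely a positive function of $t$ that could interact with the randomness of $X$. This is exactly what the richness and regularity conditions of \citet[Theorem 4]{Osband_Id} — convexity of $\{F_X\mid X\in\L_i\}$ together with surjectivity of $T_i$, the latter being furnished by Assumption~\ref{ass:T}(b) — are there to guarantee, and it is the clause "under the conditions of \citet[Theorem 4]{Osband_Id}" in the statement that I would invoke for this. Everything beyond that is bookkeeping with the case distinction in \eqref{eq:generalised cor}.
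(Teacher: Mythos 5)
Your proof is correct and follows essentially the same route as the paper's: invoke the characterisation $\tilde v_i(t,x)=h_i(t)v_i(t,x)$ from \citet[Theorem 4]{Osband_Id}, note that the induced covariances with $Y$, $Y'$, $Y''$ all pick up the same strictly positive factor $h_1(T_1(X))h_2(T_2(Y))$ (since $T_2(Y)=T_2(Y')=T_2(Y'')$), and conclude that this factor cancels in \eqref{eq:generalised cor} without changing the branch selected. Your additional remarks on the degenerate (constant-error) case and the invariance of the domain $\D$ are fine bookkeeping beyond what the paper states explicitly.
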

	
	\begin{proof}
		Let $v_1', v_2'$ be two other increasing and non-constant identification functions for $T_1$ and $T_2$, respectively. According to \citet[Theorem 4]{Osband_Id}, see also \eqref{eq:Osband identification}, there are two positive functions $h_1, h_2$ such that $v'_1 (t,x) = h_1(t)v_1(t,x)$ and $v'_2(t,x) = h_2(t)v_2(t,x)$.
		Then for $Z\in\{Y,Y',Y''\}$, we obtain for the covariance induced by $v'_1, v'_2$
		\[
		\Cov'_{T_1,T_2}(X,Z) = h_1\big(T_1(X)\big)h_2\big(T_2(Y)\big) \Cov_{T_1,T_2}(X,Z),
		\]
		where $\Cov_{T_1,T_2}(X,Z)$ is the generalised covariance induced by $v_1,v_2$. Since $h_1$ and $h_2$ are positive and since $T_2(Y) = T_2(Y') = T_2(Y'')$, \eqref{eq:generalised cor} directly yields the claim.
	\end{proof}


\section{Consistency of the estimators}

\begin{proof}[Proof of Proposition \ref{thm:consistency}]
	Let $\{(X_i,Y_i),\  i=1,\ldots, n\}$ be a random sample from $F_{X,Y}$. 
	Denote the sample estimators for $T_1(X)$ and $T_2(Y)$ by $\hat t_1^n$ and $\hat t_2^n$, defined around \eqref{eq:t_1^n}.
	Due to \citet[Corollary 3.2]{HuberRonchetti2009}, the estimators $\hat t_1^n$ and $\hat t_2^n$ converge to $T_1(X)$ and $T_2(Y)$ almost surely.
	Due to the equicontinuity, it holds that for all $\eps>0$ and for $\mathbb P$-almost all $\omega\in\Omega$ there exists an $N\in\mathbb N$ such that for all $n\ge  N$ and for all $i\in\mathbb N$
	\be{eq:first convergence}
	\Big| v_1\big(\hat t_1^n(\omega),X_i(\omega)\big)v_2\big(\hat t_2^n(\omega),Y_i(\omega)\big) - v_1\big(T_1(X),X_i(\omega)\big)v_2\big(T_2(Y),Y_i(\omega)\big)\Big|<\eps \,.
	\ee
	Hence, for all $\eps>0$ and for $\mathbb P$-almost all $\omega\in\Omega$ there exists an $N\in\mathbb N$ such that for all $n\ge  N$
	\[
	\Big| \frac{1}{n}\sum_{i=1}^n  v_1\big(\hat t_1^n(\omega),X_i(\omega)\big)v_2\big(\hat t_2^n(\omega),Y_i(\omega)\big) 
	- \frac{1}{n}\sum_{i=1}^n  v_1\big(T_1(X),X_i(\omega)\big)v_2\big(T_2(Y),Y_i(\omega)\big)\Big|<\eps \,.
	\]
	Moreover, by the strong law of large numbers, for all $\eps>0$ and for $\mathbb P$-almost all $\omega\in\Omega$ there exists an $N\in\mathbb N$ such that for all $n\ge  N$ 
	\[
	\Big| \frac{1}{n}\sum_{i=1}^n  v_1\big(T_1(X),X_i(\omega)\big)v_2\big(T_2(Y),Y_i(\omega)\big)
	-
	\E\big[ v_1\big(T_1(X),X\big)v_2\big(T_2(Y),Y\big) \big]\Big|<\eps\,.
	\]
	This establishes the strong consistency of $\widehat \Cov_{T_1,T_2}^n(X,Y)$.
	
	For the strong consistency of $\widehat \Cor_{T_1,T_2}^n(X,Y)$, we can use the same arguments as before applied to $\widehat \Cov_{T_1,T_2}^n(X,Y')$ and $\widehat \Cov_{T_1,T_2}^n(X,Y'')$, together with the continuous mapping theorem. To that end, we argue that the sets $\{(X_{(i)}, Y_{(i)}), \ i=1, \ldots, n\}$ and $\{(X_{(i)}, Y_{(n-i+1)}), \ i=1, \ldots, n\}$ can be considered as random samples from $F_{X,Y''}$ and $F_{X,Y'}$, respectively.	
		We only provide the argument for $\{(X_{(i)}, Y_{(i)}), \ i=1, \ldots, n\}$.
		Since we can reorder the finite sum in \eqref{eq:summation} arbitrarily, joint permutations in both components of the sample do not matter.
		This justifies the notation as a set above as opposed to vector notation below.
		Define
		$\bm X = (X_1, \ldots, X_n)$ and $\bm Y = (Y_1,\ldots, Y_n)$. We denote by $r_{\bm X}$ one of the (random) permutations of $\{1, \ldots, n\}$ such that $(X_{r_{\bm X}(1)}, \ldots, X_{r_{\bm X}(n)}) = (X_{(1)}, \ldots, X_{(n)})$. (In the case of ties, there are several such permutations.)
		The permutation $r_{\bm Y}$ is defined similarly. 
		Hence, we have the identities
		\begin{align}\label{eq:id1}
			\{(X_{(i)}, Y_{(i)}), \ i=1, \ldots, n\}
			&= \{(X_{r_{\bm X}(i)}, Y_{r_{\bm Y}(i)}), \ i=1, \ldots, n\} \\\label{eq:id2}
			&= \{(X_{i}, Y_{r_{\bm X}^{-1}(r_{\bm Y}(i))}), \ i=1, \ldots, n\} \\\label{eq:id3}
			&= \{(X_{r_{\bm Y}^{-1}(r_{\bm X}(i))}, Y_{i}), \ i=1, \ldots, n\}\,.
		\end{align}
		The representation \eqref{eq:id1} shows that the sample admits the comonotonicity copula. 
		On the other hand, \eqref{eq:id2} demonstrates that $(X_{i}, Y_{r_{\bm X}^{-1}(r_{\bm Y}(i))})$ has the marginal distribution $F_X$ in its first component and \eqref{eq:id3} shows that the second component has marginal distribution $F_Y$.
		This implies that $(X_{(i)},Y_{(i)})$ has distribution $F_{X,Y''}$.
		To show that $(X_1, Y_{r_{\bm X}^{-1}(r_{\bm Y}(1))}), \ldots, (X_n, Y_{r_{\bm X}^{-1}(r_{\bm Y}(n))})$ are independent, let us first assume that $F_X$ is continuous.
		Then, there is an increasing map $g\colon\R\to\R$ such that $Y''=g(X)$ almost surely,
		where $g = F^{-1}_Y \circ F_X$.
		Since $(X_i, Y_{r_{\bm X}^{-1}(r_{\bm Y}(i))}) \stackrel{d}{=} (X,g(X))$ we have that $(X_i, Y_{r_{\bm X}^{-1}(r_{\bm Y}(i))}) = (X_i,g(X_i))$ almost surely.
		Since $X_i$, $i=1, \ldots, n$ are independent, so are $(X_i, g(X_i))$. 
		If $F_Y$ is continuous (and $F_X$ is not), we can apply the same arguments. So we only need to consider the situation that both $F_X$ and $F_Y$ are discontinuous.
		
		If $F_X$ is discontinuous, the probability transform $F_X(X_i)$ is not uniformly distributed on $(0,1)$. 
		So we resort to the distributional transform \citep{Rueschendorf2009}. 
		Define the map $\tilde F_X\colon \R\times[0,1]\to[0,1]$, $\tilde F_X(x,\lambda):= \P(X<x) + \lambda \P(X=x)$. 
		Let $V_1, \ldots, V_n$ be independent and uniformly distributed on $(0,1)$ and independent of $X_1, \ldots, X_n$ (to that end, we possibly need to extend the probability space). 
		Then define the random variables $U_i:= \tilde F_X(X_i,V_i)$, $i=1, \ldots, n$. 
		According to \cite[Proposition 2.1]{Rueschendorf2009}, the $U_i$ are uniformly distributed on $(0,1)$ and $F_X^{-1}(U_i) = X_i$ almost surely.
		Moreover, the pairs $(F_X^{-1}(U_i), F_Y^{-1}(U_i))$, $i=1,\ldots, n$, are independent and identically distributed from $F_{X,Y''}$. 
		This construction amounts to additionally randomising the admissible permutations $r_{\bm X}$ and $r_{\bm Y}$ in the case of ties (which is another explanation why the probability space needs to be possibly enlarged). 
		However, this randomisation only amounts to possibly reordering the finite sum in \eqref{eq:summation}, which is inessential. 
		Hence, \eqref{eq:id2} can be treated as a random sample from $F_{X,Y''}$ for our purpose.
\end{proof}	

Clearly, when dealing with quantiles, we are not in the situation of Theorem \ref{thm:consistency} since the identification functions fail to be continuous in their first arguments. 
The following proposition is an alternative result, which can be straightforwardly applied to quantile covariance in the case when the marginals $F_{X}$ and $F_{Y}$ are continuous at their $\alpha$- and $\beta$-quantile, respectively.

\begin{proposition}
	\label{prop:consistency quantiles}
	Let $X,Y\in L_0(\R)$, such that $F_X$ and $F_Y$ are continuous at their $\alpha$- and $\beta$-quantile, respectively. Then the estimators for the quantile covariance $\QCov_{\alpha, \beta}(X,Y)$ \eqref{eq:quantile copula} and quantile correlation (Example \ref{example:Quantile correlation}) are strongly consistent.
\end{proposition}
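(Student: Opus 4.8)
The plan is to reduce the statement to the one- and two-dimensional Glivenko--Cantelli theorems, almost sure consistency of the empirical quantiles, and the continuous mapping theorem, using the continuity hypothesis precisely to cope with the fact that the exceedance indicators $t\mapsto\one\{X_i\le t\}$ are discontinuous, so that the continuous mapping theorem cannot be applied naively. Write $\widehat F_X^n,\widehat F_Y^n,\widehat F_{X,Y}^n$ for the empirical marginal and joint distribution functions as in \eqref{eq:threshold correlation estimation}, and recall that $\hat q_\alpha^n=\inf\{t:\widehat F_X^n(t)\ge\alpha\}$ and, analogously, $\hat q_\beta^n$. Following the estimation recipe, the estimator of $\QCov_{\alpha,\beta}(X,Y)$ built from \eqref{eq:Quantile correction copula} and \eqref{eq:threshold correlation estimation} is $\widehat{\QCov}^n_{\alpha,\beta}(X,Y)=\widehat F_{X,Y}^n(\hat q_\alpha^n,\hat q_\beta^n)-\widehat F_X^n(\hat q_\alpha^n)\,\widehat F_Y^n(\hat q_\beta^n)$. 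First I would record the three almost sure uniform limits $\sup_a|\widehat F_X^n(a)-F_X(a)|\to0$, $\sup_b|\widehat F_Y^n(b)-F_Y(b)|\to0$, $\sup_{a,b}|\widehat F_{X,Y}^n(a,b)-F_{X,Y}(a,b)|\to0$ (Glivenko--Cantelli), together with $\hat q_\alpha^n\to q_\alpha(X)$ and $\hat q_\beta^n\to q_\beta(Y)$; the latter follow in the usual way from the pointwise strong law for $\widehat F_X^n$ and continuity of $F_X$ at $q_\alpha(X)$, which yields $F_X(q_\alpha(X))=\alpha$ and $F_X(q_\alpha(X)-\eps)<\alpha$ for every $\eps>0$.

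The core step is $\widehat F_{X,Y}^n(\hat q_\alpha^n,\hat q_\beta^n)\to F_{X,Y}(q_\alpha(X),q_\beta(Y))$ almost surely, which I would obtain from the bound
\begin{align*}
\big|\widehat F_{X,Y}^n(\hat q_\alpha^n,\hat q_\beta^n)-F_{X,Y}(q_\alpha(X),q_\beta(Y))\big|
&\le\sup_{a,b}\big|\widehat F_{X,Y}^n(a,b)-F_{X,Y}(a,b)\big|\\
&\quad+\big|F_{X,Y}(\hat q_\alpha^n,\hat q_\beta^n)-F_{X,Y}(q_\alpha(X),q_\beta(Y))\big|,
\end{align*}
where the first term vanishes by Glivenko--Cantelli and the second vanishes because $\hat q_\alpha^n\to q_\alpha(X)$, $\hat q_\beta^n\to q_\beta(Y)$ and $F_{X,Y}$ is continuous at $(q_\alpha(X),q_\beta(Y))$ --- joint continuity there following from the marginal one via $0\le F_{X,Y}(a,b)-F_{X,Y}(a',b)\le F_X(a)-F_X(a')$ for $a'\le a$ and the symmetric bound in the second argument. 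Applying the same bound with $F_X$ (resp.\ $F_Y$) in place of $F_{X,Y}$ gives $\widehat F_X^n(\hat q_\alpha^n)\to F_X(q_\alpha(X))=\alpha$ and $\widehat F_Y^n(\hat q_\beta^n)\to\beta$. Combining, $\widehat{\QCov}^n_{\alpha,\beta}(X,Y)\to F_{X,Y}(q_\alpha(X),q_\beta(Y))-\alpha\beta=C_{X,Y}(\alpha,\beta)-\alpha\beta=\QCov_{\alpha,\beta}(X,Y)$, which is the claimed strong consistency of the covariance estimator.

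For the correlation, the estimated normalising terms arise by substituting $\widehat F_X^n(\hat q_\alpha^n)$ and $\widehat F_Y^n(\hat q_\beta^n)$ into the Fr\'echet--Hoeffding formulas of Example \ref{example:Quantile correlation}, so by the above they converge almost surely to $\max(\alpha+\beta-1,0)-\alpha\beta$ and $\min(\alpha,\beta)-\alpha\beta$, which for $\alpha,\beta\in(0,1)$ are strictly negative and strictly positive, respectively; in particular both candidate denominators in \eqref{eq:emp cor} are eventually bounded away from $0$. A short case distinction on the sign of $\QCov_{\alpha,\beta}(X,Y)$ then shows that eventually only one branch of \eqref{eq:emp cor} is active (and if $\QCov_{\alpha,\beta}(X,Y)=0$, both branches yield a ratio with vanishing numerator and a denominator bounded away from $0$), so the continuous mapping theorem delivers $\widehat{\QCor}^n_{\alpha,\beta}(X,Y)\to\QCor_{\alpha,\beta}(X,Y)$ almost surely.

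The step I expect to be delicate is precisely this passage through the discontinuous indicator: one cannot invoke the continuous mapping theorem for $t\mapsto\one\{X_i\le t\}$, and the whole argument hinges on sandwiching the random evaluation point $(\hat q_\alpha^n,\hat q_\beta^n)$ between deterministic points $(q_\alpha(X)\pm\eps,q_\beta(Y)\pm\eps)$ that are eventually valid bounds, and then letting $\eps\downarrow0$ --- which is exactly where continuity of $F_X$ and $F_Y$ at the respective quantiles enters essentially (in degenerate situations, such as a flat stretch of $F_X$ at level $\alpha$, the empirical quantile itself may need separate care). Everything else is routine bookkeeping with Glivenko--Cantelli and the strong law of large numbers, largely parallel to the proof of Proposition \ref{thm:consistency}.
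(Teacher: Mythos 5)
Your route is genuinely different from the paper's. The paper treats Proposition \ref{prop:consistency quantiles} as a variant of Proposition \ref{thm:consistency}: it keeps the Z-estimation structure, invokes Huber--Ronchetti for the almost sure convergence of $\hat q^n_\alpha,\hat q^n_\beta$, and observes that the approximation step \eqref{eq:first convergence} survives the discontinuity of the indicator because, by continuity of $F_X$ and $F_Y$ at the quantiles, almost surely no observation equals $q_\alpha(X)$ or $q_\beta(Y)$. You instead argue directly from the representation \eqref{eq:Quantile correction copula} with empirical CDFs, via Glivenko--Cantelli, joint continuity of $F_{X,Y}$ at $\big(q_\alpha(X),q_\beta(Y)\big)$ deduced from the marginal continuity, and a sign/continuous-mapping argument for the normalisation, whose limits $\min(\alpha,\beta)-\alpha\beta>0$ and $\max(\alpha+\beta-1,0)-\alpha\beta<0$ are bounded away from zero. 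That part is sound and matches the estimator actually described in Section \ref{subsec:estimation}.

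The genuine gap is the step $\hat q^n_\alpha\to q_\alpha(X)$ almost surely. Continuity of $F_X$ at $q_\alpha(X)$ gives $F_X(q_\alpha(X))=\alpha$ and $F_X(q_\alpha(X)-\eps)<\alpha$, but the standard consistency argument for the empirical quantile also needs $F_X(q_\alpha(X)+\eps)>\alpha$ for every $\eps>0$, which the hypothesis does not provide. If $F_X$ is flat at level $\alpha$ on $[q_\alpha(X),q^u]$ with $q^u>q_\alpha(X)$, then $\widehat F^n_X(q_\alpha(X))$ drops below $\alpha$ infinitely often almost surely (law of the iterated logarithm), and at those times $\hat q^n_\alpha\ge q^u$, so the empirical quantile oscillates and does not converge. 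This is not a removable technicality for your plug-in estimator: take $Y=X$ with $X$ uniform on $[0,1]$ with probability $1/2$ and equal to $2$ with probability $1/2$, and $\alpha=\beta=1/2$; then $F_X$ is continuous at $q_{1/2}(X)=1$ and $\QCov_{1/2,1/2}(X,Y)=1/4$, yet infinitely often $\hat q^n_{1/2}=2$ and your estimator equals $1-1\cdot 1=0$ (and the estimated normalisation even vanishes). So the parenthetical remark that flat stretches ``may need separate care'' is exactly where the proof breaks, and it can only be repaired by an additional condition such as $F_X(q_\alpha(X)+\eps)>\alpha$ for all $\eps>0$ (the class $L_{(\alpha)}$ of the paper's footnote), i.e.\ uniqueness of the $\alpha$-quantile. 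For what it is worth, the paper's own proof carries the same unstated requirement at the Huber--Ronchetti step, where the zero of $t\mapsto\alpha-F_X(t)$ must be unique.
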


\begin{proof}
	The proof works similar to the one of Theorem \ref{thm:consistency}. Note that the convergence result at \eqref{eq:first convergence} still holds since for all $i\in\mathbb N$ it holds that $X_i$ and $Y_i$ are almost surely different from $q_\alpha(X)$ and $q_\beta(Y)$ by assumption. 
\end{proof}

\section{Additional material for the data examples}	
\label{app:data examples}

\begin{figure}
	\centering
	\includegraphics[width=0.6\linewidth]{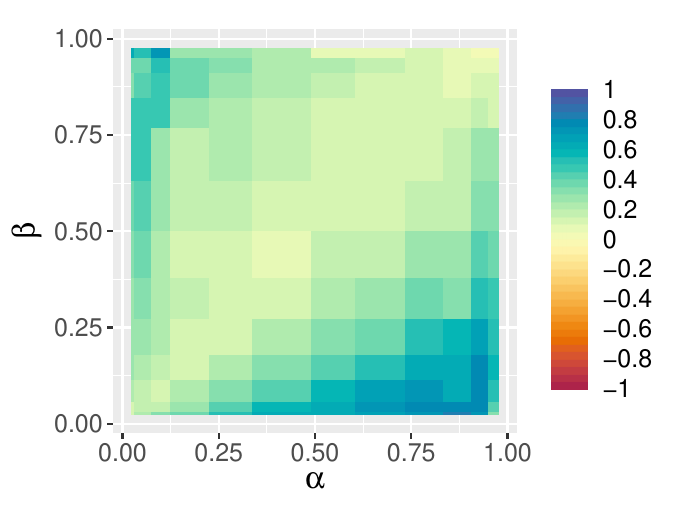}
	\caption{$\QFCor$ of heights of mixed-sex couples}
	\label{fig:qfcorheights}
\end{figure}

\begin{figure}
	\centering
	\includegraphics[width=1\linewidth]{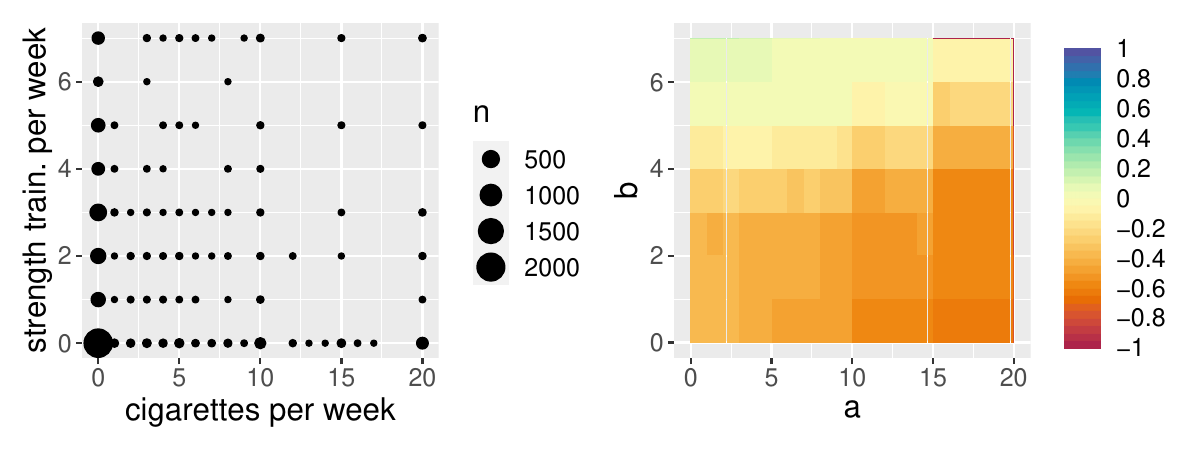}
	\caption{Bubble plot and empirical $\CDFCor$ of number of cigarettes smoked per week and number of strength trainings per week of the men}
	\label{fig:cdfcorcigarettesstrengthtrainingplusscatter}
\end{figure}

We briefly consider the relation between Pearson correlation from \eqref{eq:Pearson} and mean correlation from Example \ref{ex:Mean correlation}. Table \ref{tab:MCorvsPearsonCor} contains those two quantities, their ratio and the skewness of the two variables considered for the three examples above and a fourth example. The skewness serves here as a rough proxy for how ``close'' the marginal distributions of the two variables are to fulfilling the conditions of Lemma \ref{lem:symmetric}. For the first two examples mean and Pearson correlation are virtually identical as the distributions of heights of men and women are close to being normal and the distributions of BMIs are right-skewed, but have a very similar shape: Thus, in both cases the distributions are close to being of the same type and this implies that under positive dependence the Fr\'echet--Hoeffding and the Cauchy--Schwarz normalisation are almost identical. For the weekly frequency of strength training and BMI of men covariance is negative and the marginal distributions of one variable and the negative of the other are skewed in opposite directions, leading to a substantial difference between the two normalisations and a ratio of 0.71 between the two. As a fourth, more extreme, example we consider the number of cigarettes smoked per week and the number of strength training per week for the men in the sample. Figure \ref{fig:cdfcorcigarettesstrengthtrainingplusscatter} presents  bubble plot and $\CDFCor$ for this example. Again, covariance is negative and both variables are heavily right-skewed, leading to the distributions of one and the negative of the other variable being ``far away'' from being of the same type and to Pearson correlation substantially understating strength of dependence: $\Cor=-0.087$, whereas $\MCor=-0.400$.

\begin{table}[]
	\centering
	\begin{tabular}{@{}ccccccc@{}}
		\toprule
		$X$ & $Y$  & $\Cor$ & $\MCor$ & $\Cor/\MCor$ & $\mathrm{Skew}(X)$ & $\mathrm{Skew}(Y)$ \\ \midrule
		height woman & height man & 0.213 & 0.216 & 0.99 & -0.09 & -0.11  \\
		BMI woman & BMI man  & 0.303 & 0.304 &  1.00 & 1.14 & 1.12 \\
		freq.\ training & BMI & -0.088 & -0.122 & 0.71 & 1.51 & 1.12   \\
		freq.\ smoking & freq.\ training  & -0.087 & -0.400 & 0.22 & 4.17 & 1.51 \\
		\bottomrule 
	\end{tabular}
	\caption{$\Cor$, $\MCor$, their ratio and the skewness of the two variables for our data examples}
	\label{tab:MCorvsPearsonCor}
\end{table}


\end{document}